\newcommand{\pt}[1]{
  \begin{prooftree}[center = false] #1
\end{prooftree}}
\newcommand{\ptsc}[1]{
  \begin{prooftree}[center=false, right label template = \small\inserttext,
    template = \small$\inserttext$] #1
\end{prooftree}}
\newcommand{\cvar}{C}
\newcommand{\tvar}{T}
\newcommand{\bqvar}{B_Q}
\newcommand{\bcvar}{B_C}
\newcommand{\bvar}{B}
\newcommand{\qvar}{Q}
\newcommand\bkvar{\bvar^{\ket{}}}
\newcommand{\bkvarb}{{\bkvar}}
\newcommand{\qbit}{\mathrm{Qbit}}
\newcommand{\nonlinfunc}{\Rightarrow}
\newcommand{\linfunc}{\multimap}
\newcommand{\anyfunc}{\rightarrowtail}
\newcommand{\bset}{\mathbb{B}}
\newcommand{\bcset}{\mathbb{B}_C}
\newcommand{\bqset}{\mathbb{B}_Q}
\newcommand{\unit}{\mathbb 1}
\newcommand{\nat}{\mathtt{nat}}
\newcommand{\typelist}[1]{\mathtt{list}(#1)}
\newcommand{\cons}[1]{\mathrm{Cons}(#1)}
\newcommand{\sign}[2]{\ifx&#1&
  \else
  #1 \to
  \fi
#2}
\newcommand{\signdefault}{\sign {\bkvar_1, \dots, \bkvar_n} \bvar}
\newcommand{\vunit}{()}
\newcommand{\nil}{[~]}
\newcommand{\textqcase}{\ensuremath{\mathtt{qcase}}}
\newcommand{\textmatch}{\ensuremath{\mathtt{match}}}
\newcommand{\textletrec}{\ensuremath{\mathtt{letrec}}}
\newcommand{\textshape}{\ensuremath{\mathtt{shape}}}
\newcommand{\zket}{\ket 0}
\newcommand{\oket}{\ket 1}
\newcommand{\qcasebase}[2]{
  \textqcase\,#1\left\{\! \begin{aligned} #2 \end{aligned}\right\}
}
\newcommand{\qcase}[3]{
  \qcasebase {#1} {\zket \to #2\,, \oket \to #3}
}
\newcommand{\qcasesplit}[3]{
  \qcasebase {#1} { 
    \begin{aligned} 
      \zket &\to #2 \\ \oket &\to #3 
    \end{aligned}
  }
}
\newcommand{\qcasedefault}[1]{\ensuremath{\qcase{#1}{t_0}{t_1}}}
\newcommand{\match}[2]{
  \textmatch\,#1\left\{\! \begin{aligned} #2 \end{aligned}\right\}
}
\newcommand{\matchdefault}[1]{\textmatch\,#1 \left\{
    c_1(\overrightarrow{x_1}) \to t_1\,, \dots,
    c_n(\overrightarrow{x_n}) \to t_n
\right\}}
\newcommand{\matchdefaultcollapsedany}[2]{
  \textmatch_{1 \leq i \leq n}\,#1\,\{
    c_i(\overrightarrow{x_i}) \to #2_i
  \}
}
\newcommand{\matchdefaultcollapsed}[1]{\matchdefaultcollapsedany{#1}{t}}
\newcommand{\lbd}[2]{\lambda #1 .#2}
\newcommand{\letrec}[3]{\textletrec\,#1\,#2=#3}
\newcommand{\shape}[1]{\mathtt{shape}(#1)}
\newcommand{\can}{\ensuremath{\mathtt{CAN}}}
\newcommand{\equivcontext}{\ensuremath{C_\equiv}}
\newcommand{\fv}[1]{\mathrm{FV}(#1)}
\newcommand{\valueset}{\mathrm{Value}}
\newcommand{\subterm}{\vartriangleleft}
\newcommand{\subtermeq}{\trianglelefteq}
\newcommand{\vartyping}[2]{#1 : #2}
\newcommand{\typing}[4]{
  \ifx&#1&\varnothing \else #1\fi; 
  \ifx&#2&\varnothing \else #2\fi 
  \vdash {#3} : #4
}
\newcommand{\closedtyping}[2]{\typing{}{}{#1}{#2}}
\newcommand{\shapemarker}[1]{[#1]}
\newcommand{\typedom}[1]{\mathrm{dom}(#1)}
\newcommand{\pitwo}{\Pi_2^0}
\newcommand{\portho}{\mathtt{ORTHO}}
\newcommand{\uhalt}{\mathtt{UHalt}}
\newcommand\kron[2]{\delta_{#1, #2}}
\newcommand\innerprod[2]{\langle #1, #2 \rangle}
\newcommand{\subarrow}{\to}
\newcommand{\subsupp}[1]{\mathrm{Supp}(#1)}
\newcommand{\reduces}{\rightsquigarrow}
\newcommand{\sreduces}{\reduces^*}
\newcommand{\evalcontext}{\ensuremath{E}}
\newcommand{\zoreduces}{\ensuremath{\reduces^?}}
\newcommand{\mostreduces}[1]{\ensuremath{\reduces^{\leq #1}}}
\newcommand{\fbqpscript}{\mathtt{QList}}
\newcommand{\circuittype}{\ensuremath{\mathtt{CircuitTerms}}}
\newcommand{\hyrqlbound}[1]{\ensuremath{\mathtt{Hyrql}({#1})}}
\newcommand{\hyrqlpoly}{\hyrqlbound{\mathrm{Poly}}_\fbqpscript}
\newcommand{\binfunc}{\set{0,1}^* \to \set{0,1}^*}
\newcommand{\boundedrec}{\ensuremath{\mathtt{Faithful}}}
\newcommand{\shapeset}[1]{\mathtt{S}(#1)}
\newcommand{\famcirc}{\mathtt{C}(t)}
\newcommand{\bqwalk}{\mathtt{bqwalk}}
\newcommand{\timeset}[1]{\mathtt{Time}(#1)}
\newcommand{\ie}{\text{i.e., }}
\newcommand{\eg}{\text{e.g., }}
\newcommand{\gmid}{\,\mid\,}
\newcommand{\size}[1]{|#1|}
\newcommand{\QS}{\ensuremath{\mathtt{QS}}}
\newcommand{\hyrql}{{{\tt{Hyrql}}}}
\newcommand{\spm}{{{\tt{Spm}}}}
\newcommand{\mcal}{\mathcal}
\newcommand{\seq}[1]{\overrightarrow{#1}}
\newcommand{\fbqp}{\ensuremath{\mathtt{FBQP}}}
\newcommand{\bigo}[1]{\mathcal O({#1})}
\newcommand{\mbN}{\mathbb N}
\newcommand{\mtf}{\mathtt f}
\newcommand{\mcC}{\mathcal C}
\newcommand{\computes}[1]{\llbracket #1 \rrbracket}
\newcommand{\qt}[2]{\theta_{#1}(#2)}
\newcommand{\kroneq}[2]{\nu_{#1, #2}}
\title{Resource-Aware Quantum Programming with General
Recursion and Quantum Control}
\titlerunning{A Resource-Aware Hybrid Language}
\author{Kostia Chardonnet}{Université de Lorraine, CNRS, Inria, LORIA, F-54000 Nancy, France}{kostia.chardonnet@inria.fr}{0009-0000-0671-6390}{}
\author{Emmanuel Hainry}{Université de Lorraine, CNRS, Inria, LORIA, F-54000 Nancy, France}{emmanuel.hainry@loria.fr}{0000-0002-9750-0460}{}
\author{Romain Péchoux}{Université de Lorraine, CNRS, Inria, LORIA, F-54000 Nancy, France}{romain.pechoux@loria.fr}{0000-0003-0601-5425}{}
\author{Thomas Vinet}{Université de Lorraine, CNRS, Inria, LORIA, F-54000 Nancy, France}{thomas.vinet@inria.fr}{0009-0007-8547-6145}{}
\authorrunning{K. Chardonnet et al.}
\keywords{Hybrid Quantum Programs, Resource Analysis}
\begin{document}

\maketitle

\begin{abstract}
This paper introduces the hybrid quantum language with general recursion
$\hyrql$, driven towards resource-analysis. By design, $\hyrql$ does not
require the specification of an initial set of quantum gates. Hence, it  is
well amenable towards a generic cost analysis, unlike languages that use
different sets of quantum gates, which yield quantum circuits of distinct
complexity.

Regarding resource-analysis, we show how to relate the runtime of an expressive
fragment of $\hyrql$ programs with the size of the corresponding quantum
circuits. We also manage to capture the class of functions computable in
quantum polynomial time, which, by Yao's Theorem, corresponds to families of
circuits of polynomial size. Consequently, this result paves the way for the
use of termination and runtime-analysis techniques designed for classical
programs to guarantee bounds on the size of quantum circuits.
\end{abstract}

\section{Introduction}
\label{sec:intro}
\textbf{Motivations. }
Most well-known quantum algorithms, such as Shor's algorithm~\cite{Sho97}, were
historically designed on the Quantum Random Access Machine (QRAM)
model~\cite{Kni22}. In this model, a program
interacts with a quantum memory through basic operations complying with the
laws of quantum mechanics. These operations include a fixed set of quantum
gates, chosen to be universal, as well as a probabilistic measurement of
qubits. Consequently, the control flow is purely classical, \ie depends on the
(classical) outcome of a measure. Based on  this paradigm, several high-level
quantum programming languages have been introduced, each with different
purposes and applications, from assembly code~\cite{QASM17,QASM22}, to
imperative languages~\cite{FY21}, circuit description languages~\cite{GLRSV13},
and $\lambda$-calculi~\cite{SV09}.

A relevant issue was to extend these models to programs with \emph{quantum
control}, also known as coherent control, enabling a ``program as data''
treatment for the quantum paradigm. Quantum control consists in the ability to
write superposition of programs in addition to superposition of data and
increases the expressive power of quantum programming languages. It allows the
programmer to write algorithms such as the \emph{quantum switch}, which uses
fewer resources (quantum gates) than algorithms with classical
control~\cite{CAPV13} and is physically implementable~\cite{AWHM20,PMA+15}.
Hence, quantum control provides a computational advantage over classical
control~\cite{ACB14,TCM+21,KOY24}. In the last decades, several  quantum
programming languages implementing this concept have been designed,
non-exhaustively~\cite{AG05,SVV18,DCM22,Lem24, YVC24}. To get the best of both worlds,
a natural next step was the development of \emph{hybrid languages}, \ie
languages that allow classical and quantum flow/data to be combined.
The hybrid paradigm is conveying considerable practical interest: for example,
it is used by quantum variational algorithms, a class of quantum algorithms
leveraging both classical and quantum computing resources to find approximate
solutions to optimization problems; and the properties of hybrid languages have
also been deeply studied, non-exhaustively~\cite{Qunity23,Yin24,DLPZ25,BPP25}.

Since hybrid languages offer interesting prospects in terms of expressiveness
and optimal resource consumption, a relevant and open issue concerns the
development of resource-aware (static) analyses of their programs. These static
analyses can be applied to predict the low-level resources required to execute
quantum algorithms, such as the depth or size of a quantum circuit.

\medskip
\noindent
\textbf{Contributions. }
This work solves the above issue for the first time by introducing a HYbrid
Recursive Quantum Language \hyrql{} on which a resource analysis can be
performed. \hyrql{} is a hybrid extension of the functional quantum language
$\spm$ (\emph{Symmetrical pattern-matching}) of~\cite{SVV18,Lem24}. In $\spm$,
the programmer can write down directly unitary applications, combining quantum
superposition of terms with pattern-matching. Unitarity is enforced by a
(decidable) linear typing discipline using an orthogonality predicate. \spm{}
is, however, a purely quantum language, which makes the manipulation (and,
consequently, resource analysis) of classical information awkward: they can be
neither discarded nor duplicated by linearity. The hybrid nature of \hyrql{}
relies on a typing discipline which delineates a clear separation between
quantum/linear and classical/non-linear data in the typing contexts. Similarly
to $\spm$, $\hyrql$ does not include measurement: the flow from quantum data to
classical data is handled through the use of a \textshape\ construct,  as
introduced in \cite{GLRSV13}, which returns the classic structural information
on data containing quantum information, without getting any information on the
value of the quantum states. For example, the  \textshape\ of a qubit list
provides classical (duplicable) information, such as the number of qubits. The
decision to provide $\hyrql$ with a $\spm$-architecture was motivated by two
important factors. First, in contrast with most of its competitors, $\spm$ does
not require the inclusion of an initial set of quantum gates, making its
resource analysis generic. Second, thanks to its pattern-matching design, $\spm
$ allows for the reuse of a wide variety of tools designed for resource
analysis (e.g., termination or complexity) of term rewriting systems.

Our paper contains the following main contributions:

\begin{itemize}
  \item The introduction of the hybrid language with general recursion $\hyrql$,
    as well as its operational semantics, type system, and
    standard properties: confluence (Theorem~\ref{thm:confluence}), progress
    (Theorem~\ref{lem:progress}), and subject reduction
    (Theorem~\ref{lem:subred}).
  \item A proof that the orthogonality predicate in $\hyrql$ is
    $\Pi^0_2$-complete, hence undecidable
    (Theorem~\ref{thm:orthogonality-undecidable}). Consequently, typing is not
    decidable. This is not surprising as the language encompasses general
    recursion. The decidability of typing can still be recovered, on expressive
    sub-fragments of the language, e.g., when $\hyrql$ is restricted to finite
    types and terminating programs (Proposition~\ref{prop:orthogonality-lower}).
  \item A compilation from terms (of the good type) to quantum circuits, whose
    size is bounded by the runtime-complexity of the initial term
    (Theorem~\ref{thm:circuit-bound}). In particular, terms that terminate in
    polynomial time characterize exactly the complexity class \fbqp{}
    (Theorem~\ref{thm:fbqp-sound} and~\ref{thm:fbqp-complete}). This result is
    not trivial, as lambda calculi do not comply with van Emde Boas' invariant
    thesis \cite{VEB90}, i.e., a polynomial time reduction on a lambda term
    may correspond to an exponential time reduction on a Turing machine.

  \item An overview on how complexity results can be obtained and
    automatized by translating our language into term-rewrite systems, to use
    existing techniques for runtime analysis (e.g.,
    interpretations~\cite{MP09}, recursive path orderings~\cite{Der82,Der87},
    dependency pairs~\cite{AG00}, size-change principle \cite{LJBA01}).
\end{itemize}

We illustrate the expressive power of the language through several examples:
implementa\-tion of basic quantum gates (Hadamard gate,
Example~\ref{ex:had-syntax}), higher-order (quantum switch,
Example~\ref{ex:qs-syntax}), general recursion (bounded quantum walk,
Example~\ref{ex:bqwalk}), and hybrid control flow (protocol
\emph{BB84}~\cite{BB84}, Example~\ref{ex:hybrid-function}).

\newcommand{\dlpz}{{\texttt{DLPZ}}}
\newcommand{\foq}{{\texttt{FOQ}}}
\newcommand{\qugcl}{{\texttt{QuGCL}}}
\newcommand{\qunity}{{\texttt{Qunity}}}
\newcommand{\rqc}{{\texttt{RQC}}}

\begingroup
\newcommand{\cmark}{\ding{51}}
\newcommand{\xmark}{\ding{55}}
\renewcommand{\arraystretch}{1.2}
\setlength\tabcolsep{3pt}
\begin{table}[!h]
  \begin{center}
      \begin{tabular}{c | c | c | c | c | c | c }
        & \hyrql{} &\foq{} \cite{HPS23} & $\rqc^{++}$ \cite{YZ24} &
        $\mathtt{Silq}$ \cite{Silq20} & \qunity{} \cite{Qunity23} & \dlpz{}
        \cite{DLPZ25} \\ \hline 
        Resource-aware & \cmark & \cmark & \xmark & \xmark & \xmark & \xmark \\
        \hline
        General recursion & \cmark & \xmark & \cmark & \xmark & \xmark & \xmark
        \\
        Higher-order & \cmark & \xmark & \xmark & \xmark & \xmark & \xmark
      \end{tabular}
  \end{center}
  \caption{Comparison table between hybrid quantum languages }
  \label{tab:related}
  \vspace*{-0.5cm}
\end{table}
\endgroup

\medskip
\noindent
\textbf{Related works. }
In Table~\ref{tab:related}, we provide a non-exhaustive comparison between the
main families of hybrid quantum languages which treat classical data as
first-class citizen.
Towards that end, we consider the three following criteria.
\begin{itemize} 
  \item \textit{Resource-aware} highlights if the language is designed towards
    resource analysis, that is, allows to characterize well-known complex
    classes of quantum computational models. This is the case of
    $\foq$~\cite{HPS23,HPS25}, which characterizes, under restrictions, quantum
    polynomial time and quantum polylogarithmic time~\cite{FHPS25}.
    However, $\hyrql$ has a greater expressive power than $\foq$  as it is not
    restricted to unitary operators. Moreover, all standard inductive datatypes
    can be expressed in $\hyrql$, whereas $\texttt{FOQ}$ is restricted to lists
    of qubits. 

  \item \textit{General recursion} specifies whether the language includes
    unbounded recursion on classical data. The ability to handle this type of recursion is
    necessary for a resource analysis to be relevant (\ie not trivial as in
    the case of strong normalizing languages). 

  \item \textit{Higher-order} denotes the ability of the language to
    feature general higher-order terms, i.e., functions taking functions as
    input. While some languages of Table~\ref{tab:related} have syntactic
    constructs to do so, most of them are restricted to first-order calls. 
\end{itemize}

As we are interested in resource analysis in the general setting,
Table~\ref{tab:related} only compares \hyrql{} with hybrid languages, rather
than with existing languages with quantum control, non-exhaustively
\cite{SVV18,DCM22,Yin24,YVC24}.
Other studies have already been carried out on resource analysis in quantum
programming languages. They use techniques (\eg type systems) developed in the
field of Implicit Computational Complexity (ICC, see~\cite{Pec20}) to
characterize quantum complexity classes~\cite{DLMZ10}, to infer the expected
cost or expected value of a quantum program~\cite{AMPPZ22}, or to infer upper
bounds on quantum resources (depth and size of circuits)~\cite{CDL25}. However,
these studies are restricted to classical control. For example, \cite{DLMZ10}
studies the use of soft linear logic~\cite{L04} on a quantum lambda
calculus~\cite{SV05}, \cite{AMPPZ22} adapts expectation transformers on a
language based on $\texttt{QPL}$~\cite{Sel04}, and~\cite{CDL25} develops a
dependent type system on a variant of the Quipper circuit description
language~\cite{GLRSV13}.

\section{The $\hyrql$ Programming Language}
\label{sec:language}
We introduce the syntax, the operational semantics, and the type system of
\hyrql{} along with illustrating examples.

\subsection{Syntax}

\begingroup
\renewcommand{\arraystretch}{1.25}
\begin{table}[t]
  \centering
  \[
    \begin{array}{>{$}r<{$} @{\quad} r @{~\Coloneqq~} l}
      (Values) & v & x \ \, \mid\, \zket \gmid \oket \gmid
      c(v_1, \dots, v_n) \gmid \lbd x t \gmid \letrec f x t
      \gmid \sum_{i=1}^n \alpha_i \cdot v_i\\
      (Terms) & t & x\;
      \begin{array}[t]{l}
        \mid\, \zket \gmid \oket \gmid \qcase{t}{t_0}{t_1} \\
        \mid\, c(t_1, \dots, t_n) \gmid \matchdefault t \\
        \mid\, \lbd x t \gmid \letrec f x t \gmid t_1 t_2
        \gmid \sum_{i=1}^n \alpha_i \cdot t_i \gmid \shape t
      \end{array}
    \end{array}
  \]
  \caption{Syntax of the $\hyrql$ language}
  \label{tab:syntax}
\end{table}
\endgroup

The syntax of Terms and Values in $\hyrql$ is provided in
Table~\ref{tab:syntax}. Terms feature variables, noted $x, y, f, \dots$ and
taken from an infinite countable set of variables. We denote by $\seq{e}$ a
(finite and possible empty) sequence of elements $e_1, \dots, e_n$. Qubits are
introduced through basis states $\zket, \oket$, allowing us to write quantum
conditionals $\qcasedefault t$, which corresponds to the superposition of $t_0$
and $t_1$ controlled by qubit $t$. $\hyrql$ also features classical constructors
$c^s$ consisting of a constructor symbol $c$, and a signature $s$ indicating
the type of the constructor. This will be formalized in
Section~\ref{subsec:types} and will be used only when required.
Constructors can be used in constructor application $c(t_1, \dots, t_n)$
and through pattern-matching, sometimes abbreviated as $\matchdefaultcollapsed
t$. Each constructor symbol $c$ comes with a fixed arity and is always fully
applied. We assume the existence of standard inductive constructors for unit
$()$, tensor products $\otimes$, natural numbers $0$ and $S$, and lists $\nil$
and $::$. For convenience, constructors are sometimes used in an infix
notation. For example, tensor products will be written as $x \otimes y$ and
list constructors as $h::t$.

Higher-order is featured via a standard $\lambda$-abstraction $\lbd x t$, and a
construct for general recursion $\letrec f x t$. Term application is denoted by
$t_1t_2$.

Given amplitudes $\alpha_i \in \mathbb C$, the term $\sum_{i=1}^n \alpha_i
\cdot t_i$ represents a \emph{superposition} of terms $t_i$. In the special
case where $n=2$, we just write $\alpha_1 \cdot t_1+\alpha_2 \cdot t_2$. In a
dual manner, the $\textshape$ construct, introduced in~\cite{GLRSV13},
returns the classic structural information on data containing quantum
information. For example, the shape of a qubit list is a copy of the list
structure without the qubits (see Example~\ref{ex:shape-semantics}).

In order to avoid conflicts
between free and bound variables, we will always work up to $\alpha$-conversion
and use Barendregt's convention~\cite[p. 26]{Bar84} which consists in keeping
all bound and free variable names distinct, even when this remains implicit.
We also define a partial order $\subterm$ on terms, where $s \subterm t$ if $s$
is a \emph{strict subterm} of $t$. We denote $\subtermeq$ as the reflexive
closure of $\subterm$. The \emph{size of a term} is written $\size t$ and is
defined standardly in Definition~\ref{def:size}.

As we work with amplitudes and superpositions, Terms and Values have to be
considered with respect to a vector space
structure. Towards that end, we define an equivalence relation $\equiv$ in
Table~\ref{tab:equiv}. Such definition relies on \emph{equivalence contexts},
defined by the following grammar:
\[
  \begin{aligned}
    (\text{Equiv. contexts}) \quad   \equivcontext \Coloneqq \diamond & \gmid
    \qcasedefault \equivcontext \gmid c(\seq{t_1}, \equivcontext, \seq{t_2})
    \gmid t \equivcontext \gmid \equivcontext t\\
    &\gmid \matchdefaultcollapsed{\equivcontext}  \gmid t + \equivcontext \gmid
    \shape \equivcontext
  \end{aligned}
\]
An equivalence context $\equivcontext$ is thus a term with one hole $\diamond$;
let $\equivcontext[t]$ be the term obtained by filling the hole with
$t$ in $\equivcontext$.
The first rules of Table~\ref{tab:equiv} make the definition of the $\sum$
construct unambiguous/sound, i.e., it corresponds exactly to repeated
applications of the $+$ construct. The last rules of Table~\ref{tab:equiv} are
commutation rules highlighting the linearity of quantum terms.

\begingroup
\renewcommand\displaystyle\textstyle
\newcommand\sep{\\[1.25ex]}
\begin{table}[t]
  \centering
  \[
    \begin{array}{c}
      t_1 + t_2 \equiv t_2 + t_1
      \qquad
      t_1 + (t_2 + t_3) \equiv (t_1 + t_2) + t_3
      \qquad
      1 \cdot t \equiv t
      \qquad
      t + 0 \cdot t' \equiv t
      \sep
      \alpha \cdot (\beta \cdot t) \equiv (\alpha \beta) \cdot t
      \qquad
      \alpha \cdot (t_1 + t_2) \equiv \alpha \cdot t_1 + \alpha \cdot t_2
      \qquad
      \alpha \cdot t + \beta \cdot t \equiv (\alpha + \beta) \cdot t
      \\[1.2em]
      \begin{aligned}
        \qcasedefault {(\sum_{j=1}^m \alpha_j \cdot s_j)} &\equiv
        \sum_{j=1}^m \alpha_j \cdot (\qcasedefault{s_j})
        \sep
        c(\seq{t_1}, \sum_{j=1}^m \alpha_j \cdot s_j,
        \seq{t_2}) &\equiv \sum_{j=1}^m \alpha_j \cdot
        c(\seq{t_1}, s_j, \seq{t_2})
        \sep
        \matchdefaultcollapsed{(\sum_{j=1}^m \alpha_j \cdot s_j)}
        &\equiv \sum_{j=1}^m \alpha_j \cdot (\matchdefaultcollapsed{s_j})
        \sep
        t(\sum_{i=1}^m \alpha_j \cdot s_j) &\equiv \sum_{j=1}^m
        \alpha_j \cdot t\,s_j
        \sep
        \equivcontext[t] &\equiv \equivcontext[t'] \qquad \text{ when
        } t \equiv t'
      \end{aligned}
    \end{array}
  \]
  \caption{Equivalence relation $\equiv\ \subseteq\ \text{Terms}
  \times \text{Terms}$}
  \label{tab:equiv}
\end{table}

\endgroup

\begin{example}[Quantum State and Hadamard Gate]
  \label{ex:had-syntax}
  We can define the orthogonal basis states as $\ket \pm \triangleq
  \frac{1}{\sqrt 2} \cdot \zket \pm \frac{1}{\sqrt 2} \cdot \oket$. The
  Hadamard gate can be encoded in \hyrql{}:
  \[
    \mathtt{Had} \triangleq \lbd x \qcase x {\ket +} {\ket -}
  \]
\end{example}

\begin{example}[Quantum Switch]
  \label{ex:qs-syntax}
  The quantum switch~\cite{CAPV13} is a program which, on two unitaries $U, V$
  and a two-qubit state $\ket{\phi}$, applies $UV$ and $VU$, in superposition,
  to the second qubit, controlled by the value of the first qubit. This can be
  implemented in \hyrql{} by $\QS\ U\ V \ket{\phi}$, where $\QS$ is defined as:
  \[
    \QS{} \triangleq\lbd f\lbd g\lbd q \match q { c \otimes t \to
      \qcase{c}{\zket \otimes f(g\,t)}{ \oket \otimes g(f\,t)}
    }
  \]
\end{example}

\begin{example}[Bounded quantum walk]
  \label{ex:bqwalk}
  \hyrql{} allows one to write recursive programs that are classically and
  quantumly controlled. The program $\mathtt{bqwalk}$ produces all the possible
  quantum walk paths of size at most $n$, starting from a quantum state $q$:
  \[
    \begin{aligned}
      \mathtt{repeat} &\triangleq \letrec g n \match n {
        0 \to \nil,
        S(m) \to \zket :: g\,m
      } \\
      \mathtt{bqwalk} &\triangleq \letrec f q \lbd n \qcasesplit q {
        \zket::(\mathtt{repeat}~n)
      } {
        \oket :: 
        \match n {
          0 &\to \nil \\
          S(m) &\to (f\,(\mathtt{Had}\,\oket)\,m)
        }
      }
    \end{aligned}
  \]
\end{example}

The programs from Examples~\ref{ex:qs-syntax} and \ref{ex:bqwalk} are strictly
more than quantum circuits, as one execution of the program does not provide a
full description of the corresponding behaviour. Thus, \hyrql{} is more than a
circuit description language~\cite{GLRSV13}. We also provide the hybrid quantum
protocol BB84~\cite{BB84} and the Quantum Fourier Transform in
Appendix~\ref{app:additional}.

\subsection{Operational Semantics}
\label{subsec:semantics}

\begingroup
\newcommand{\ptlabel}[1]{
  \begin{prooftree}[center=false, right label template =
      \footnotesize\inserttext,
    template = \footnotesize$\inserttext$] #1
\end{prooftree}}
\newcommand{\ptnolabel}[1]{
  \begin{prooftree}[center=false, right label template = ,
    template = \small$\inserttext$] #1
\end{prooftree}}
\renewcommand{\qquad}{\quad\,\,\,}
\renewcommand{\arraystretch}{2.5}
  \begin{table}[t]
    \[
      \begin{array}{c}
        \ptlabel{
          \infer0[(Qcase$_0$)]{\qcase {\zket} {t_0} {t_1 }\reduces t_0}
        }
        \qquad
        \ptlabel{
          \infer0[(Qcase$_1$)]{\qcase {\oket} {t_0} {t_1 }\reduces t_1}
      }
      \\
      \ptlabel{
        \infer0[(Match)]{\matchdefaultcollapsed{c_j(\seq{v_p})
        } \reduces t_j\{ \seq{v_p}/\seq{x_j} \}}
      }
      \qquad
      \ptlabel{
        \infer0[(Lbd)]{(\lbd x t) v_p \reduces t \{v_p/x\}}
      }
      \\
      \ptlabel{
        \infer0[(Rec)]{
          (\letrec f x t)v_p \reduces t \{\letrec f x t/f, v_p/x\}
        }
      }
      \qquad
      \ptlabel{
        \hypo{
          \sum_{i=1}^n \alpha_i \cdot p_i \in \can
          \setminus \valueset
        }
        \hypo{p_i \zoreduces t_i}
        \infer2[(Can)]{\sum_{i=1}^n \alpha_i \cdot p_i \reduces
        \sum_{i=1}^n \alpha_i \cdot t_i}
      }
      \\
      \ptlabel{
        \infer0[(Shape$_0$)]{\shape {\ket 0} \reduces \vunit}
      }
      \qquad
      \ptlabel{
        \infer0[(Shape$_1$)]{\shape {\ket 1} \reduces \vunit}
      }
      \qquad
      \ptlabel{
        \infer0[(Shape$_c$)]{\shape {c^s(\seq{v_p})}
        \reduces   c^{\shape{s}}(\shape{\seq{v_p}})}
      }
      \\
      \ptlabel{
        \hypo{\sum_{i=1}^n \alpha_i \cdot v_i \in \can}
        \infer1[(Shape$_s$)]{\shape{\sum_{i=1}^n \alpha_i \cdot
        v_i} \reduces \shape{v_1}}
      }
      \qquad
      \ptlabel{
        \hypo{t \reduces t'}
        \infer1[(Shape$_\evalcontext$)]{\shape t \reduces \shape{t'}}
      }
      \\
      \ptlabel{
        \hypo{p \reduces t}
        \infer1[($\evalcontext$)]{\evalcontext[p] \reduces \evalcontext[t]}
      }
      \qquad
      \ptlabel{
        \hypo{t \equiv t_1}
        \hypo{t_1 \reduces t_1'}
        \hypo{t_1' \equiv t'}
        \infer3[(Equiv)]{t \reduces t'}
      }
  \end{array}\]
  \caption{Reduction rules of the language}
  \label{tab:reduction}
\end{table}
\endgroup

In this section, we define the call-by-value operational semantics of $\hyrql$.
As we are interested in resource analysis, we want to avoid reducing
non-meaningful terms, e.g., a reduction of $t$ in $s + 0 \cdot t \ \equiv\ s$
is meaningless. Towards that end, we introduce \emph{canonical forms}
(Definition~\ref{def:canonical}), which take care of removing ill-behaved terms
before applying a reduction. It relies on \emph{pure terms}, which are defined
by the following grammar:
\[
  \begin{aligned}
    (\text{Pure terms}) \quad  p  \Coloneqq x &\gmid \zket \gmid \oket
    \gmid \qcasedefault p  \gmid c(p_1, \dots, p_n) \\
    &\gmid \matchdefaultcollapsed p \gmid \lbd x t \gmid \letrec f x t
    \gmid tp \gmid \shape t
  \end{aligned}
\]
A \emph{pure value}, denoted $v_p$, is a pure term that is a value.
\begin{definition}[Canonical form]
  \label{def:canonical}
  A canonical form of a term $t$ is any term $t'$ such that
  $t\equiv t'$ and $t'=\sum_{i=1}^n \alpha_i \cdot p_i $, where $p_i$ are
  pure terms, $\alpha_i \neq 0$ and $p_i \equiv p_j$ implies  $i = j$. The
  set of canonical forms is denoted by $\can$.
\end{definition}

The canonical form always exists and is unique for well-typed terms
(Lemma~\ref{lem:canonical-typed}). Note that the naming \emph{pure terms} comes
from~\cite{DCM22} and is not related to \emph{pure states} in quantum computing,
as pure terms could semantically yield a mixed quantum state.

We also define \emph{evaluation contexts} by the following grammar:
\begin{align*}
  (\text{Evaluation contexts}) \quad \evalcontext \Coloneqq \, \diamond
  &\gmid \qcase{\evalcontext}{t_0}{t_1} \gmid c(\seq{p},
  \evalcontext, \seq{v_p}) \\
  &\gmid \matchdefaultcollapsed{\evalcontext} \gmid t \evalcontext \gmid
  \evalcontext v_p
\end{align*}
Again, let  $\evalcontext[t]$ be the term obtained by filling the
hole $\diamond$ with $t$ in $\evalcontext$.

The operational semantics of $\hyrql$ is described in Table~\ref{tab:reduction}
as a reduction relation $\reduces \ \subseteq
\ \text{Terms} \times \text{Terms}$, where $\{t/x\}$ denotes the
standard substitution of a variable $x$ by a term $t$. The reduction
implements a \emph{call-by-value} strategy.

In the rule (Can), $t\zoreduces t'$  holds if either $t\reduces t'$
or, $\neg(\exists t'',\ t \reduces t'')$ and $t' = t$. Intuitively,
it means that we apply one reduction for each element of a
superposition, when possible.
In the rule (Shape$_c$), $\shape{\seq v}$ is syntactic sugar for $\shape{v_1},
\dots, \shape{v_n}$, and the shape of a signature $\shape s$ is defined
formally in Definition~\ref{def:shape-type}.
In all rules of Table~\ref{tab:reduction}, except for the (Can) and
(Equiv) rules, the left-hand-side term is a pure term (i.e., not a
superposition). This implies that any summation must be
expressed as a canonical form, which avoids reducing
a subterm with a null amplitude, or more generally to reduce two
identical terms that would sum up to $0$.  The goal is to avoid
reductions that have no physical meaning, as this would invalidate
all resource analysis results.

We define $\reduces^*$ as the reflexive and transitive closure of
$\reduces$.
For $k \in \mathbb{N}$, we also write $t \mostreduces k t'$, when $t$
reduces to $t'$ in at most $k$ steps.
A term $t$ \emph{terminates}, if any chain of reduction starting
from $t$ reaches a value, meaning that $t \mostreduces k v$ holds for some $k$.

\begin{example}
  \label{ex:had-semantics}
  Let us consider the $\mathtt{Had}$ program from Example~\ref{ex:had-syntax},
  we can check that it gives the desired result when applying it to $\zket$:
  \[
    \begin{aligned}
      \mathtt{Had} \zket &\reduces \qcase \zket {\ket +}{\ket -}
      &&\text{via (Lbd)} \\
      &\reduces \ket + &&\text{via (Qcase$_0$)}
    \end{aligned}
  \]
  Since the Hadamard gate is its own inverse we can recover $\ket0$.
  Given the following term  $h_v= \qcase{v} {\ket+}{\ket -}$,
  we have the following:
  \[
    \mathtt{Had} \ket + \equiv \frac 1 {\sqrt 2} \cdot \mathtt{Had} \zket +
    \frac 1 {\sqrt 2} \cdot \mathtt{Had} \oket
    \reduces \frac{1}{\sqrt 2} \cdot h_0
    + \frac{1}{\sqrt 2} \cdot h_1 \reduces \frac{1}{\sqrt 2} \cdot \ket + +
    \frac{1}{\sqrt 2} \cdot \ket - \equiv \ket 0
  \]
  Note that the reductions are performed in parallel on each term of the
  superpositions using rule (Can) of Table~\ref{tab:reduction}.
 \end{example}

\begin{example}
  \label{ex:shape-semantics}
  Given a qubit list $l = \zket :: \oket :: \ket + :: \nil$, one can verify that
  $\shape l \sreduces \vunit :: \vunit :: \vunit :: \nil$ (note that $::$ and
  $[\ ]$ use distinct signatures in $l$ and the reduct of $\shape l $).  The
  result is a list of same length as $l$, with no quantum data anymore, hence
  it can be treated non-linearly. This operation is independent of the value of
  each qubit in the list.
\end{example}

It is worth mentioning that while we can write qubit streams in \hyrql{}, the
call-by-value strategy will require it to be evaluated before doing any
operation on such data.

\subsection{Type System}
\label{subsec:types}

\textbf{Types and Contexts. }
Types in \hyrql{} are provided by the following grammar:
\[
  \text{(Types)} \qquad T \Coloneqq \qbit \gmid B \gmid T \linfunc T
  \gmid T \nonlinfunc T
\]
$\qbit$ is the type for qubits and $\bvar$ is a \emph{constructor
type} from a fixed set $\bset$.
A \emph{basic type}, noted $\bkvar$, is either a qubit type, or a
constructor type $\bvar$, i.e., $\bkvar \in \{\qbit\} \cup \bset$.
Each constructor type is defined by its set of constructors:
\[
  \cons B \triangleq \{c^s \gmid s = \signdefault\}
\]

A constructor symbol $c$ comes with a signature $s$ of arity $n$, defining the
inputs and output types of $c$. We may write $c^\bvar$ when $s$ is of arity
$0$. By construction, $\cons B \cap \cons{B'} = \emptyset$ when $B \neq B'$.
In what follows, we will consider the
following constructor types: a unit type $\unit$ with a unique constructor
$\vunit^\unit$; tensor types $\bkvar_1 \otimes \bkvar_2$ for given
types $\bkvar_1, \bkvar_2$ with a unique constructor
$\otimes^{\sign{\bkvar_1, \bkvar_2}{\bkvar_1 \otimes \bkvar_2}}$;
natural numbers $\nat$ with constructors $0^\nat$ and 
$S^{\sign \nat \nat}$; lists $\typelist \bkvar$ for a given type $\bkvar$,
with constructors $\nil^{\typelist \bkvar}$ and $::^{\sign {\bkvar, \typelist
\bkvar} {\typelist \bkvar}}$. We consider that $\bset$ always contains the unit
type $\unit$.

The language also features two distinct \emph{higher-order} types
for linear and non-linear functions, denoted respectively
by $\linfunc$ and $\nonlinfunc$.

In the typing discipline, it will be useful to distinguish between
types based on whether or not they contain quantum data. The latter
must follow the laws of quantum mechanics (no-cloning), whereas
classical types are more permissive.
Towards that end, we define inductively the set of \emph{quantum constructor
types} $\bqset$ by 
\[
  \bqset \triangleq \{\bvar \in \bset \gmid
    \exists\,c^{\signdefault}, \exists\,1 \leq
i \leq n, \bkvar_i \in \bqset \cup \{ \qbit\}\}. 
\]
A quantum
constructor type $\bvar_\qvar \in \bqset$ has (at least) a
constructor symbol with a type $\qbit$ or (inductively) a quantum
constructor type in its signature.
The set $\bcset$ of \emph{classical constructor types} $\bvar_\cvar$
is defined by $\bcset \triangleq \bset \setminus \bqset$.
For example, $\unit,\ \nat \in \bcset$ whereas $\typelist{\qbit} \in \bqset$.
This allows us to split the types between \emph{quantum types} $\qvar$ and
\emph{classical types} $\cvar$:
\[
  \qvar \ \Coloneqq \ \qbit \gmid \bqvar
  \qquad
  \cvar \ \Coloneqq \ \bcvar \gmid T \linfunc T \gmid T \nonlinfunc T
\]
It is now assumed that the inputs of each constructor are ordered, containing
first variables of classical type, then variables of quantum type, enabling a
hybrid behaviour.

\emph{Typing contexts} are defined as follows:
\[
  \Gamma, \Delta \quad \Coloneqq \quad \varnothing \gmid \set{x : T}
  \gmid \set{\shapemarker{x : T}} \gmid \Gamma \cup \Delta,
\]
where $\shapemarker{x : T}$ is called a \emph{boxed variable},
indicating that $x$ is a linear variable captured by a \textshape{} construct.
We define the \emph{domain} of a context $\Delta$ as the set of its variables,
i.e., $\typedom \Delta \triangleq \set{x \gmid \exists\,T,\,x : T
\in \Delta \vee \shapemarker{x : T} \in \Delta}$.
Whenever we write $\Gamma, \Delta$ or
$\Gamma;\Delta$, it is assumed that $\Gamma$ and $\Delta$ are compatible, that
is, $\typedom \Gamma \cap \typedom \Delta = \emptyset$.
We use the shorthand notation $\vartyping {\seq x} {\seq T}$ for $\vartyping
{x_1}{T_1}, \dots, \vartyping {x_n}{T_n}$; and $\shapemarker \Delta$ for
$\shapemarker{x_1 : T_1}, \dots, \shapemarker{x_n : T_n}$.

\medskip
\noindent
\textbf{Typing rules. }
A \emph{typing judgment} is written $\typing \Gamma \Delta t T$, describing
that the term $t$ is \emph{well-typed}, with type $T$, under \emph{non-linear
context} $\Gamma$ and \emph{linear context} $\Delta$.
The typing rules of the language are defined in Table~\ref{tab:typing}.
A well-typed term $\typing \Gamma \Delta t T$ is \emph{closed} if $\Gamma =
\Delta = \emptyset$.
A well-typed closed term $t$ is said to \emph{terminate over any input}, if for
any well-typed closed value $v_1, \dots, v_n$ terminating over any input such
that $tv_1\dots v_n$ is a well-typed closed term of type $\bkvar$, $tv_1 \dots
v_n$ terminates.

\begingroup
\newcommand{\ptlabel}[1]{
  \begin{prooftree}[center=false, right label template = \small\inserttext,
    template = \small$\inserttext$] #1
\end{prooftree}}
\newcommand{\ptnolabel}[1]{
  \begin{prooftree}[center=false, right label template = ,
    template = \small$\inserttext$] #1
\end{prooftree}}
\renewcommand{\arraystretch}{3}
\renewcommand\pt\ptlabel
\begin{table}[t]
  \[
    \begin{array}{c}
      \pt{
        \infer0[(ax)]{\typing \Gamma {\vartyping{x}{T}} x T}
      }
      \qquad
      \pt{
        \infer0[(ax$_c$)]{\typing {\Gamma, \vartyping x C} {} x C}
      }
      \qquad
      \pt{
        \infer0[(ax$_0$)]{\typing \Gamma {} {\zket} \qbit}
      }
      \\
      \pt{
        \infer0[(ax$_1$)]{\typing \Gamma {} {\oket} \qbit}
      }
      \quad
      \ptlabel{
        \hypo{\typing \Gamma \Delta t \qbit}
        \hypo{\typing \Gamma {\Delta'} {t_0} Q}
        \hypo{\typing \Gamma {\Delta'} {t_1} Q}
        \hypo{t_0 \perp t_1}
        \infer4[(qcase)]{\typing \Gamma {\Delta, \Delta'} {\qcase
        t {t_0} {t_1}} Q}
      }
      \\
      \pt{
        \hypo{s = \signdefault}
        \hypo{\typing \Gamma {\Delta_i} {t_i} {\bkvar_i}}
        \infer2[(cons)]{\typing \Gamma {\Delta_1, \dots, \Delta_n}
        {c^s(t_1, \dots, t_n)} \bvar}
      }
      \\
      \ptlabel{
        \hypo{\cons \bvar = \set{c_i^{s_i}}_{i=1}^n}
        \hypo{s_i = \sign{\seq{C_i}, \seq{Q_i}} \bvar}
        \hypo{\typing {\Gamma_1}{\Delta_1} t \bvar}
        \hypo{\typing
          {\Gamma_2, \vartyping{\seq{y_i}}{\seq{C_i}}}
          {\Delta_2, \vartyping{\seq{z_i}}{\seq{Q_i}}}
          {t_i} \bkvar
        }
        \infer4[(match)]{
          \typing {\Gamma_1, \Gamma_2} {\Delta_1, \Delta_2} {
            \textmatch_{1 \leq i \leq n}\,t\,\{
              c_i^{s_i}(\seq{y_i}, \seq{z_i}) \to t_i
            \}
          } \bkvar
        }
      }
      \\
      \pt{
        \hypo{\typing \Gamma {\Delta, \vartyping x T} t T'}
        \infer1[(abs)]{\typing \Gamma \Delta {\lbd x t} {T \linfunc T}'}
      }
      \qquad
      \pt{
        \hypo{\typing {\Gamma, \vartyping x C} \Delta t T}
        \infer1[(abs$_c$)]{\typing \Gamma \Delta {\lbd x t} {C \nonlinfunc T}}
      }
      \qquad
      \pt{
        \hypo{\typing {\Gamma, \vartyping f T} {} {\lbd x t} T}
        \infer1[(rec)]{\typing \Gamma {} {\letrec f x t} T}
      }
      \\
      \qquad
      \pt{
        \hypo{\typing \Gamma \Delta {t_1} {T \linfunc T'}}
        \hypo{\typing \Gamma {\Delta'} {t_2} T}
        \infer2[(app)]{\typing \Gamma {\Delta, \Delta'} {t_1t_2} {T'}}
      }
      \qquad
      \pt{
        \hypo{\typing \Gamma \Delta {t_1} {C \nonlinfunc T}}
        \hypo{\typing \Gamma {} {t_2} C}
        \infer2[(app$_c$)]{\typing \Gamma \Delta {t_1t_2} T}
      }
      \\
      \hypertarget{typesup}{
      \ptlabel{
        \hypo{\typing \Gamma \Delta {t_i} \qvar}
        \hypo{\sum_{i=1}^n \size{\alpha_i}^2 = 1}
        \hypo{\forall i \neq j,\ t_i \perp t_j}
        \infer3[(sup)]{\typing \Gamma \Delta{ \sum_{i=1}^n \alpha_i
        \cdot t_i} \qvar}
      }
      }
      \quad
      \ptlabel{
        \hypo{\typing \Gamma \Delta t \bkvar}
        \infer1[(shape)]{\typing {\Gamma, \shapemarker \Delta} {}
        {\shape{t}} {\shape \bkvar}}
      }
      \\
      \ptlabel{
        \hypo{\typing {\Gamma,\shapemarker{\vartyping x \bkvar}}
        {\Delta, \vartyping y \bkvar} t T}
        \infer1[(contr)]{\typing \Gamma {\Delta, \vartyping y \bkvar}
        {t\{y/x\}} T}
      }
      \qquad
      \ptlabel{
        \hypo{\typing \Gamma \Delta t T}
        \hypo{t \equiv t'}
        \infer2[(equiv)]{\typing \Gamma \Delta {t'} T}
      }
  \end{array}\]
  \caption{Typing rules of the language}
  \label{tab:typing}
\end{table}
\endgroup

Typing rules rely on an external \emph{orthogonality} predicate on terms. This
uses the notion of $\Delta$-\emph{context substitutions}, which, for a given
$\Delta$, denote the substitutions $\sigma$ such that for any $x : T$ with
$\vartyping x T \in \Delta$ or $\shapemarker{\vartyping x T} \in \Delta$,
$ \sigma(x)=v$, where $v$ is a well-typed closed value of type $T$
that terminates over any given input. It also requires the definition of an
an \emph{inner product} on values with a canonical form, defined
below, where $\kron x y$ is the Kronecker symbol.
\[
  \langle v, w \rangle \triangleq \sum_{i=1}^n \sum_{j=1}^m \alpha_i \beta_j^*
  \kron{v_i}{w_j}, \quad \text{ where }
  v \equiv \sum_{i=1}^n \alpha_i \cdot v_i \in \can \text{ and }
  w \equiv \sum_{j=1}^m \beta_j \cdot w_j \in \can
\]
\begin{definition}[Orthogonality]
  \label{def:orthogonality}
  Let $\typing \Gamma \Delta t Q$ and $\typing \Gamma \Delta {t'} Q$ be
  two well-typed terms. $t$ and $t'$ are \emph{orthogonal}, written $t \perp
  t'$, if for any $\Gamma \cup \Delta$-context substitution $\sigma$:
  \begin{itemize}
    \item $ \sigma(\shape t) \sreduces v$, $\sigma(\shape{t'})
      \sreduces v'$ and $v = v'$;
    \item $\sigma(t) \sreduces v \in \can$,
      $\sigma(t') \sreduces v' \in \can$,
      and $\langle v, v' \rangle = 0$.
  \end{itemize}
\end{definition}

These conditions imply that the reduced values share the same classical
structure, and that their quantum data is orthogonal. For example, orthogonal
lists must be of the same size. This is used in the typing rules
(qcase) and (sup) of Table~\ref{tab:typing} to ensure the feasibility of typed terms, similarly to
existing languages \cite{DCM22,Lem24,DLPZ25}. Note that the rules requiring
this predicate apply it only to well-typed terms, thus the type system is not
cyclic. While an alternative condition could have been given with dependent
types, this gives a uniform process that is defined for any type, including
types that could be defined later by a programmer, and a simpler type system.
As we will prove that values have a unique canonical form
(Lemma~\ref{lem:canonical-typed}), the inner-product condition is sound.

We also define the \emph{shape of a basic type} below.
\begin{definition}
  \label{def:shape-type}
  The shape of a basic type is defined inductively as $\shape \qbit \triangleq
  \unit$, and $\shape \bvar $ is the type defined by the constructor set  
  \[
    \cons{\shape \bvar} \triangleq \set{
      c^{\shape{s}}
      \gmid c^{s} \in \cons \bvar
    }
  \]
  where the shape of a signature $s=\bkvar_1,\ldots,\bkvar_n \to \bvar$ is defined as 
  \[
  \shape{s} \triangleq \shape{\bkvar_1},\ldots,\shape{\bkvar_n} \to \shape{\bvar}.
  \]
\end{definition}
In particular, the shape of a list of qubits is a list of units, i.e.,
$\shape{\typelist{\qbit}} = \typelist{\unit}$, which is coherent with the
reduction rules in Table~\ref{tab:reduction}. By construction, it always holds
that $\shape \bkvar \in \bcset$. This is used in the typing rule (shape) of
Table~\ref{tab:typing}, which extracts the classical structure of a quantum
term, boxes any variable in $\Delta$, and puts it in the non-linear context.
The judgment obtained has an empty linear context and thus can be used
classically (i.e., duplicated or discarded). A boxed variable $x$ can be
removed using rule (contr) of Table~\ref{tab:typing}, by associating with a
linear variable of same type $y$, allowing to read both the quantum data and
the classic structural information of $y$ (see Example~\ref{ex:len-typing}).

\medskip
\noindent
\textbf{Typing derivations. }
We write $\pi \triangleright \typing \Gamma \Delta t T$ for the tree of root
$\typing \Gamma \Delta t T$ derived using the rules of Table~\ref{tab:typing}.
We also denote by $\pi \rightslice \mathrm{(r)}[t_1, \dots, t_n]$ the subtree
of $\pi$ rooted by (r) and whose premises are $t_1, \dots, t_n$.

\begin{example}\label{ex:len-typing}
  The length of a list can be computed by the following term:
  \[
    \mathtt{len} \triangleq \letrec fx \match x {
      \nil \to 0, h::t \to S(f\,t)
    }
  \]
  Notice that we discard $h$, thus the list is not used linearly. Therefore, we
  can only type this as a non-linear function, i.e., $\closedtyping
  {\mathtt{len}} \typelist B \nonlinfunc \nat$ for a given $B \in \bcset$.
  However, we can still compute the length of a qubit list with $\lbd x (x
  \otimes \mathtt{len}(\shape x))$, whose typing derivation $\pi$ is written
  below. For conciseness, we hide the signature of $\otimes$.
  \[
    \begin{prooftree}[
        center=false, right label template = \footnotesize\inserttext, rule
        margin = 1.2ex, template = \footnotesize$\inserttext$
      ]
      \infer0[(ax)]{
        \typing{\shapemarker{y : \typelist \qbit}}{x : \typelist
        \qbit}{x}{\typelist \qbit}
      }
      \infer0[(ax)]{\typing{}{y : \typelist \qbit}{y}{\typelist \qbit}}
      \infer1[(shape)]{
        \typing{\shapemarker{y : \typelist \qbit}}{}{\color{red}\shape y}{\typelist \unit}
      }
      \infer1[\color{red}(app$_c$)]{
        \typing{\shapemarker{y : \typelist
        \qbit}}{}{\mathtt{len}(\shape y)}{\nat}
      }
      \infer2[(cons)]{
        \typing{\shapemarker{y : \typelist \qbit}}{x : \typelist \qbit}{x
        \otimes \mathtt{len}(\shape y)}{\typelist \qbit \otimes \nat}
      }
      \infer1[(contr)]{
        \typing{}{x : \typelist \qbit}{x \otimes \mathtt{len}(\shape
        x)}{\typelist \qbit \otimes \nat}
      }
      \infer1[(app)]{
        \typing{}{}{\lbd x (x \otimes \mathtt{len}(\shape
        x))}{\typelist \qbit \linfunc \typelist \qbit \otimes \nat}
      }
    \end{prooftree}
  \]
  As $\shape y$ is of classical constructor type, it can be used as an input of
  $\mathtt{len}$, allowing us to compute the length of $y$, given by the
  subtree $\pi \rightslice {\color{red}(\mathrm{app}_c) [\shape y]}$. The rule
  (contr) removes the marker introduced by (shape), by substituting $y$ with a
  variable of same type, $x$, thus yielding a closed term. Note that while $x$
  appears twice in the final term, the quantum information of $x$ has a linear
  behaviour, and only the classical structure is used non-linearly.
\end{example}

\section{Main Results}
\label{sec:results}
This section is devoted to checking that {\hyrql} satisfies usual properties of
typed languages. Moreover, it studies the complexity of checking the
orthogonality predicate. We also prove that we can compile an expressive subset
of terms of \hyrql{} into quantum circuits, with a guarantee on the size of the
circuit. This gives us a characterization of the class of functions computable
in quantum polynomial time, known as \fbqp{}~\cite{BV97}.

\subsection{Standard Properties}
\label{subsec:properties}

By design of the type system, the non-linear context $\Gamma$ can be weakened
and typing still holds. This result does not hold for $\Delta$, as intended,
due to its linear behavior.

\begin{restatable}[Weakening]{proposition}{weakening}
  \label{lem:weakening}
  Let $\typing \Gamma \Delta t T$ be a well-typed term.
  Then $\typing {\Gamma, \Gamma'} \Delta t T$ holds for any context
  $\Gamma'$ compatible with $\Gamma, \Delta$.
\end{restatable}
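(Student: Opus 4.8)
The plan is to argue by induction on the derivation of $\typing \Gamma \Delta t T$, carrying the extra context $\Gamma'$ through the whole derivation and inserting it into the non-linear part of every judgment. Since typing contexts are unordered (built with $\cup$) and, by Barendregt's convention, every variable bound inside the derivation can be taken distinct from the (finitely many) variables of $\Gamma'$, the context $\Gamma'$ remains compatible with each enlarged non-linear context that appears as we descend into the premises. I will keep this freshness choice implicit. The linear context $\Delta$ is never modified.

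For the axioms the claim is immediate: (ax$_0$), (ax$_1$) and (ax) hold for an arbitrary non-linear context, so $\Gamma$ may simply be replaced by $\Gamma,\Gamma'$, and for (ax$_c$) the judgment $\typing{\Gamma, \vartyping x C, \Gamma'}{}{x}{C}$ is again an instance of the same rule. For the purely structural rules — (cons), (match), (abs), (abs$_c$), (rec), (app), (app$_c$), (app$_u$), (shape), (contr) and (equiv) — the non-linear context $\Gamma$ is shared by all premises and only ever grows by bound variables fresh for $\Gamma'$; I apply the induction hypothesis to each premise with the same $\Gamma'$ and reassemble the rule. For instance, in (shape) the induction hypothesis turns the premise into $\typing{\Gamma,\Gamma'}{\Delta}{t}{\bkvar}$, and re-applying the rule yields $\typing{\Gamma, \Gamma', \shapemarker\Delta}{}{\shape t}{\shape \bkvar}$, which is the desired conclusion up to reordering of the non-linear context; (rec), (abs$_c$) and (contr) are handled the same way once their bound variable is chosen outside $\Gamma'$.

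The only delicate cases are the rules carrying a semantic side-condition: the orthogonality premises $t_i \perp t_j$ in (qcase) and (sup), and the unitarity premise in (unit). Here I must check that these predicates are invariant under enlarging the non-linear context. The key observation is that orthogonality and unitarity are defined purely through the free variables of the terms involved: orthogonality quantifies over $\Gamma \cup \Delta$-context substitutions and only constrains $\innerprod{\sigma(s)}{\sigma(t)}$ together with the reductions of $\sigma(\shape s)$ and $\sigma(\shape t)$, and unitarity (isometry plus surjectivity) is phrased identically. Because $s$ and $t$ are typed under $\Gamma;\Delta$, we have $\fv s,\, \fv t \subseteq \mathrm{dom}(\Gamma \cup \Delta)$, so any $(\Gamma,\Gamma') \cup \Delta$-context substitution $\sigma'$ agrees on $s$ and $t$ with its restriction to $\Gamma \cup \Delta$: the components of $\sigma'$ substituting for the extra variables of $\Gamma'$ act as the identity on these terms. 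Hence the closed instances, their inner products and their shapes coincide with those produced by a genuine $\Gamma \cup \Delta$-context substitution, and a predicate that held under $\Gamma;\Delta$ still holds under $\Gamma,\Gamma';\Delta$. Combining this invariance with the induction hypothesis on the typed premises lets me re-apply (qcase), (sup) and (unit) verbatim.

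I expect this last point to be the main obstacle: making rigorous that extending $\Gamma$ leaves the orthogonality and unitarity judgments unchanged. Everything reduces to the free-variable containment $\fv s,\, \fv t \subseteq \mathrm{dom}(\Gamma \cup \Delta)$ and the resulting fact that the new context substitutions factor through the old ones; once this is recorded, every remaining case closes routinely.
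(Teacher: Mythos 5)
Your proposal is correct and follows essentially the same route as the paper, namely induction on the typing derivation with the observation that the axiom rules absorb an arbitrary non-linear context and that only bound variables are ever removed from $\Gamma$. You are in fact more careful than the paper's two-line sketch: your explicit check that the orthogonality and unitarity side-conditions are invariant under enlarging $\Gamma$ (because the extra substitution components act as the identity on terms whose free variables lie in $\mathrm{dom}(\Gamma \cup \Delta)$) fills a step the paper leaves implicit.
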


The reduction relation $\reduces$ is confluent up to equivalence for
well-typed terms.

\begin{restatable}[Confluence]{theorem}{confluence}
  \label{thm:confluence}
  Given a well-typed term $t$, if there exist $t_1$ and $t_2$ such
  that $t \sreduces t_1$ and $t\sreduces t_2$,
  then there exist $t_3$ and $t_4$
  such that $t_1 \sreduces t_3$, $t_2 \sreduces t_4$ and $t_3 \equiv t_4$.
\end{restatable}
Due to rule (Equiv) of Table~\ref{tab:reduction}, confluence is up to
equivalence $\equiv$. Moreover, confluence only holds on well-typed terms.
A consequence of Theorem~\ref{thm:confluence} is that any terminating term has
a unique normal form, up to equivalence. This will be useful for progress
(Theorem~\ref{lem:progress}).

\begin{restatable}[Canonical form for typed terms]{lemma}{canonicaltyped}
  \label{lem:canonical-typed}
  Let $\typing \Gamma \Delta t T$ be a well-typed term.
  Then $t$ has a canonical form $\sum_{i=1}^n \alpha_i \cdot p_i$,
  and this canonical form is unique, up to equivalence on each $p_i$
  and term reordering. Furthermore, $\forall\ 1 \leq i \leq n,\ \typing \Gamma
  \Delta {p_i} T$.
\end{restatable}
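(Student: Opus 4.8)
The statement bundles three claims: existence of a canonical form, its uniqueness up to reordering and $\equiv$ on the summands, and the fact that every summand inherits the type $T$. The plan is to prove them in that order, noting that the first two hold for arbitrary (not necessarily typed) terms and that only the last uses the typing hypothesis.

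\textbf{Existence.} First I would argue by structural induction on $t$, reading the rules of \autoref{tab:equiv} as rewrites oriented left to right. The base cases are the variables, the basis states $\zket,\oket$, and the ``opaque'' constructs $\lbd x s$, $\letrec f x s$, $\unitary s$, $t_1 t_2$, and $\shape s$: each is already pure, so $1\cdot t$ is a canonical form. For $\qcasedefault \cdot$, $c(\overrightarrow{\cdot})$, and $\matchdefaultcollapsed \cdot$ I apply the induction hypothesis to the scrutinee, or to each argument, and use the distribution equivalences of \autoref{tab:equiv} to pull every inner superposition to the top level (a constructor of arity $n$ being expanded multilinearly); for $\sum_k \beta_k \cdot t_k$ I apply the hypothesis to each $t_k$ and flatten. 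In every case this yields $t \equiv \sum_k \gamma_k \cdot \pure_k$ with each $\pure_k$ pure, and the vector-space axioms $\alpha\cdot t + \beta\cdot t \equiv (\alpha+\beta)\cdot t$ and $t + 0\cdot t' \equiv t$ then let me merge $\equiv$-equivalent summands and delete those of accumulated amplitude $0$, producing a form meeting the definition of canonical form.

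\textbf{Uniqueness.} This is the crux. Since both candidate forms are $\equiv t$, they are $\equiv$ to one another, so it suffices to show that two canonical forms related by $\equiv$ agree up to reordering and $\equiv$ of their pure components. I would build a genuine vector-space model and read the representation off it. Concretely, I define by well-founded recursion on term size a set $\mathcal B$ of \emph{frozen} pure terms — pure terms whose \emph{transparent} subterms (those reachable through an equivalence context: the scrutinees of $\textqcase$ and $\textmatch$, constructor arguments, both sides of an application, the argument of $\shape$, and right summands) are already canonical, while the \emph{opaque} bodies under $\lambda$, $\letrec$, and $\unitary$ are kept literal — together with an interpretation $\mathcal I$ of every term into the free $\mathbb C$-vector space over $\mathcal B$, setting $\mathcal I(\sum_i \alpha_i\cdot t_i)=\sum_i \alpha_i\,\mathcal I(t_i)$ and sending a pure term to the basis vector of its frozen normal form. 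I then check that each generating rule of $\equiv$ is sound for $\mathcal I$: the vector-space axioms become identities of the free module, the three distribution rules hold because freezing commutes with them, and the congruence rule holds by compositionality of $\mathcal I$ along transparent contexts. Hence $\equiv$-equivalent terms share an interpretation. For a canonical form $\sum_i \alpha_i\cdot \pure_i$ the $\pure_i$ are pairwise non-$\equiv$ and so map to distinct basis vectors with $\alpha_i\neq 0$, whence $\mathcal I(\sum_i \alpha_i\cdot \pure_i)=\sum_i \alpha_i\,[\pure_i]$ displays the $\alpha_i$ as the unique coordinates; two canonical forms of $t$ therefore determine the same finitely supported coordinate function, forcing equality of the multisets $\{(\alpha_i,[\pure_i])\}$. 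I expect the genuine difficulty to live here: making the recursive definition of $\mathcal B$ and $\mathcal I$ rigorously well-founded in the presence of superpositions trapped inside opaque or application subterms, and verifying the congruence rule, which is the only rule carrying an unbounded context. An alternative that sidesteps the explicit model is to orient \autoref{tab:equiv} into a rewrite system modulo associativity, commutativity, and the scalar axioms and prove it convergent, but discharging confluence modulo those axioms is comparably delicate.

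\textbf{Type preservation.} For the last clause I would re-run the existence argument, this time by induction on the derivation of $\typing \Gamma \Delta t T$, tracking that every summand produced retains type $T$ in the same contexts. The structural rules go exactly as before: in (qcase) the hypothesis gives $\typing \Gamma \Delta {s_i} \qbit$ for the pure summands of the scrutinee, and each $\qcasedefault {s_i}$ is retyped by (qcase) with the unchanged branches and orthogonality premise; (cons) splits the linear context as the rule prescribes and retypes each multilinear summand; (match) is analogous, and merging equivalent summands or dropping zero-amplitude ones preserves typing through (equiv). The genuine superposition rule (sup) directly types each summand with $\qvar = T$. The only case needing more than the existence induction is (equiv): there $t'$ is typed from $\typing \Gamma \Delta t T$ with $t\equiv t'$, and by the uniqueness part just established $t$ and $t'$ share one canonical form up to reordering and $\equiv$, so the induction hypothesis on $t$ together with rule (equiv) yields the typing of each summand of $t'$. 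Finally, (contr) is dispatched by observing that canonical-form extraction commutes with the renaming $\sigma=\{y/x\}$, which distributes over sums and preserves the (contr) premise.
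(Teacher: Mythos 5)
There is a genuine gap, and it sits exactly where the paper concentrates its effort. You assert that existence of a canonical form ``holds for arbitrary (not necessarily typed) terms'' and that only the type-preservation clause uses the typing hypothesis. This is false in this language: a term such as $\tfrac{1}{\sqrt 2}\cdot p - \tfrac{1}{\sqrt 2}\cdot p$ is equivalent to $0\cdot p$, and by definition a canonical form must be a \emph{nonempty} sum with all $\alpha_i\neq 0$, so such a term has no canonical form at all. Your existence argument ends with ``merge $\equiv$-equivalent summands and delete those of accumulated amplitude $0$'' without excluding the case where \emph{every} accumulated amplitude is $0$. The paper's untyped existence/uniqueness result (\autoref{lem:canonical}) is accordingly stated only under the hypothesis $t\not\equiv 0\cdot t'$, and the entire substance of \autoref{lem:canonical-typed} is showing that well-typedness rules this degenerate case out. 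Concretely, the paper exhibits, by induction on the typing derivation, a pure term $s$ with $\qt{s}{t}\neq 0$; the only nontrivial case is (sup), where the orthogonality premise $t_i\perp t_j$ makes the coefficient rows of the reduced canonical forms of the $t_i$ linearly independent, so a total collapse to zero would force every $\alpha_i=0$, contradicting $\sum_i\size{\alpha_i}^2=1$. Your proposal contains no counterpart to this argument, so it does not actually establish existence for typed terms.

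On the other two clauses you are essentially on the paper's track. Your free-vector-space interpretation $\mathcal I$ is, up to presentation, the paper's quantity function $\qt{p}{t}$ (\autoref{def:quantity}): $\qt{p}{t}$ is precisely the coordinate of $t$ along the basis vector of the $\equiv$-class of $p$, and the paper's Lemmas~\ref{lem:quantity-one}--\ref{lem:quantity-pure} are the soundness checks you describe. One practical difference: the paper defines $\qt{p}{t}$ by direct structural recursion on $t$, using Kronecker symbols modulo $\equiv$ to compare against $p$, rather than first computing a ``frozen normal form'' of each pure subterm; this sidesteps the well-foundedness worry you flag for your $\mathcal B$. The type-preservation clause is handled in the paper as you suggest, by observing that the constructed canonical form is the unique one and rereading its construction against the typing derivation. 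But without closing the non-collapse gap above, the lemma as stated is not proved.
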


As there are no distinct and equivalent pure values,
Lemma~\ref{lem:canonical-typed} implies that the canonical forms of
Definition~\ref{def:orthogonality} exist and are unique.

Typing implies that the values are the normal forms
of the language, up to equivalence.

\begin{restatable}[Progress]{theorem}{progress}
  \label{lem:progress}
  Let $\closedtyping t T$ be a closed term, either $t$ is equivalent
  to a value, or $t$ reduces.
\end{restatable}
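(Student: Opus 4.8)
The plan is to prove Progress by structural induction on the typing derivation of $\closedtyping t T$, showing that every closed well-typed term is either equivalent to a value or admits a reduction step. First I would reduce the problem to pure terms: since any term has a canonical form $\sum_{i=1}^n \alpha_i \cdot \pure_i$ with each $\pure_i$ a well-typed closed pure term of the same type (Lemma~\ref{lem:canonical-typed}), it suffices to analyze the pure constituents. If $n \geq 2$, the canonical form is in $\canonical \setminus \valueset$ precisely when some $\pure_i$ is not a value; in that case I would apply the induction hypothesis to that $\pure_i$ to produce a reduction, lift it through the (Can) rule, and conclude that $t$ reduces (modulo (Equiv)). If every $\pure_i$ is a value, then the superposition is itself a value by the grammar of values, so $t$ is equivalent to a value. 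This isolates the real work to the case of a single closed pure term, which I treat by case analysis on its head constructor.

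For the head-symbol case analysis on a closed pure term $\pure$, I would proceed through each typing rule. The variable case cannot occur since $\pure$ is closed (the contexts are empty). The atomic cases $\zket, \oket, \lbd x t, \letrec f x t, \unitary t$ are already values, as is any fully-applied constructor $c(\overrightarrow v)$ whose arguments are values; if some argument is a non-value, I apply the induction hypothesis and reduce inside the evaluation context $c(\overrightarrow t, \evalcontext, \overrightarrow v)$. For each elimination form I would argue that the principal subterm, by induction, is either a value of the appropriate type or reduces; in the latter case I lift the reduction through the corresponding evaluation context, and in the former case a typing-driven inversion (a canonical-forms analysis) guarantees the matching redex rule fires. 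Concretely: a $\textqcase$ whose scrutinee has type $\qbit$ must, when a pure value, be $\zket$ or $\oket$, triggering (Qcase$_0$)/(Qcase$_1$); a $\textmatch$ on a value of constructor type $\bvar$ must be some $c_j(\overrightarrow v)$ covered by a branch, triggering (Match); an application $t_1 t_2$ with $t_1$ a value of function type must be a $\lambda$-abstraction, a $\letrec$, or a $\unit(\cdot)$ term, triggering (Lbd)/(Fix)/(Unit); and $\shape$ applied to a value reduces by the appropriate (Shape) rule according to whether the argument is $\zket$, $\oket$, a constructor application, or a superposition.

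The key auxiliary facts I would establish or invoke are the canonical-forms lemmas underlying each elimination case: that a closed pure value of type $\qbit$ is one of $\zket, \oket$; that a closed pure value of a constructor type $\bvar$ is a fully-applied constructor symbol of $\bvar$ (so that (match) coverage, which ranges over all of $\cons\bvar$, guarantees a matching branch); and that a closed value of a function type ($\linfunc$, $\nonlinfunc$, or $\unitfunc$) has the corresponding introduction form. These follow by inspecting which typing rules can conclude a value of each type. The subtle point requiring care is the (Shape$_s$) rule, whose well-definedness depends on all summands in a canonical superposition sharing the same shape; here I would lean on the remark (already noted after Theorem~\ref{lem:confluence}) that typing of (sup) via the orthogonality condition forces the components to have equal shape, so reducing $\shape$ of a superposition to $\shape$ of its first component is well-defined and type-preserving.

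The main obstacle I expect is the interplay between the equivalence relation $\equiv$ and the reduction relation, specifically ensuring that the dichotomy ``value or reduces'' is stated and proved \emph{up to equivalence}. A pure term that is not syntactically a value may still be $\equiv$-equivalent to one, and a superposition may need to be put in canonical form before a (Can) step applies; so throughout I must track that reductions are taken modulo $\equiv$ (via the (Equiv) rule) and that when I claim $t$ is not a value I mean it is not equivalent to any value. Handling this cleanly — rather than getting stuck on a term that is "morally" a value but not syntactically one — is where the proof demands the most care, and it is precisely why the statement reads ``$t$ is equivalent to a value, or $t$ reduces'' rather than the naive syntactic dichotomy.
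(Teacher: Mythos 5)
Your proposal is correct and follows essentially the same route as the paper's proof: induction on the typing derivation, a head-constructor case analysis on closed pure terms using inversion/canonical-forms facts for values of each type (qubit, constructor, and function types), lifting of subterm reductions through evaluation contexts, and a final reduction of the general case to the pure case via the canonical form and the (Can) rule, with careful tracking of $\equiv$ throughout. The auxiliary facts you identify correspond to the paper's Lemma~\ref{lem:equiv-pure} and Lemma~\ref{lem:canonical-higher}, so no essential step is missing.
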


Contrary to standard progress lemmas, we require $t$ to be a value up to
equivalence. For instance, the term $\zket + 0 \cdot (\lbd y y) \oket$ does not 
reduce and is not a value.

Finally, typing is also preserved by reduction, provided we consider pure or
terminating terms.

\begin{restatable}[Subject reduction]{theorem}{subred}
  \label{lem:subred}
  Let $\typing \Gamma \Delta t T$ be a well-typed term, and $t \reduces t'$.
  If $t$ terminates or $t$ is pure,
  then $\typing \Gamma \Delta {t'} T$.
\end{restatable}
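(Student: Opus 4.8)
The plan is to prove subject reduction by induction on the derivation of $t \reduces t'$, i.e., by case analysis on the last reduction rule applied, and for the congruence-style rules, by an inner induction on the typing derivation. The statement splits the work via the hypothesis "$t$ terminates \emph{or} $t$ is pure" — this disjunction is precisely what is needed to handle the (Can) and (Shape$_s$) rules, where the orthogonality and superposition side-conditions must be re-established in the reduct, and where the inner-product machinery is only defined for terminating terms. Let me sketch which steps are routine and which carry the real weight.

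For the base computational rules, I would proceed as follows. The rules (Qcase$_0$), (Qcase$_1$), (Lbd), (Fix), and (Match) reduce to a substitution, so I would first establish (or invoke) a \emph{substitution lemma}: if $\typing{\Gamma}{\Delta, \vartyping{x}{T}}{t}{T'}$ and $v$ is a closed value of type $T$, then $\typing{\Gamma}{\Delta}{t\{v/x\}}{T'}$, together with its non-linear counterpart for $\Gamma$-substitutions used by (Fix) and the classical match branches. Inverting the (qcase) rule gives both branches the same type $T$, so (Qcase$_i$) is immediate; inverting (abs)/(abs$_c$) for (Lbd) and (match) for (Match) feeds the substitution lemma directly, being careful with the (match) rule's split of each constructor's arguments into classical variables $\overrightarrow{y_i}$ (entering $\Gamma$) and quantum variables $\overrightarrow{z_i}$ (entering $\Delta$). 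The rules (Unit) and the shape rules (Shape$_0$), (Shape$_1$), (Shape$_c$) are also direct: for (Unit), inverting (unit) and (app$_u$) recovers that $t v$ is typable at $\qvar_2$, and for the shape rules one checks that the reduct's type matches $\shape{\bkvar}$ per \autoref{def:shape-type}, using that $\tilde c$ has the expected signature. The congruence rule ($\evalcontext$) is handled by the inner induction, peeling off the evaluation context and appealing to the IH on the redex, noting that each context former preserves typing by the corresponding rule.

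\textbf{The main obstacle} will be the (Can) and (Shape$_s$) rules, where the (sup) typing rule's side-conditions $\sum_i |\alpha_i|^2 = 1$ and pairwise orthogonality $t_i \perp t_j$ must survive the reduction. For (Can), the term $\sum_i \alpha_i \cdot t_i$ steps componentwise to $\sum_i \alpha_i \cdot t_i'$; the amplitudes are unchanged, so the normalization condition is trivially preserved, but I must argue that $t_i' \perp t_j'$ still holds. This is where \emph{termination is essential}: orthogonality (\autoref{def:orthogonality}) is defined only for terminating terms via the inner product and the shape condition, and the inner product is computed by reducing to canonical form, so it is invariant under $\reduces^*$ — meaning $\innerprod{\sigma(t_i')}{\sigma(t_j')} = \innerprod{\sigma(t_i)}{\sigma(t_j)} = 0$ and the shapes likewise agree, provided everything terminates (which follows from termination of $t$). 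This is exactly why the lemma cannot claim preservation for arbitrary pure \emph{divergent} terms inside a superposition: the predicate $\perp$ would be undefined. For a \emph{pure} $t$ (the other disjunct), $t$ is not itself a top-level superposition, so the (Can)/(Shape$_s$)/(sup) cases do not arise at the outermost step and the superposition subterms are guarded inside the pure syntax, sidestepping the issue.

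Finally, the (Equiv) rule is handled by invoking the (equiv) typing rule directly: from $t \equiv t_1$, $t_1 \reduces t_1'$, $t_1' \equiv t'$, I would apply (equiv) to transport typing across $\equiv$, use the IH on $t_1 \reduces t_1'$ (observing that $\equiv$ preserves both termination and purity, so the hypothesis of the lemma is inherited), and apply (equiv) once more. The delicate bookkeeping throughout is the handling of boxed variables and the (contr)/(shape) interaction, but these only matter when the context is non-trivial and are orthogonal to the reduction step itself; I expect them to go through by carrying the boxed-variable discipline as an invariant. The crux of the whole argument remains the invariance of $\innerprod{\cdot}{\cdot}$ and of shapes under reduction for terminating terms, so I would isolate that as a preliminary lemma before starting the case analysis.
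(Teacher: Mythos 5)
Your overall architecture (substitution lemma for the computational rules, (equiv) transport for the (Equiv) rule, and isolating (Can)/(Shape$_s$) as the crux) matches the paper's, and your observation that the inner product and shapes are invariant under reduction of terminating terms is sound and does yield \emph{preservation} of orthogonality along a reduction step. The gap is that in the (Can) case you take the pairwise orthogonality of the canonical-form components $t_i$ as given before arguing it survives the step. It is not given: the rule (Can) fires on a term $\sum_i \alpha_i \cdot t_i \in \canonical \setminus \valueset$ whose typing derivation generally ends in (equiv), with the (sup) rules buried \emph{inside} constructors, $\textqcase$ and $\textmatch$ in the underlying term (e.g.\ $\qcaseline{(\tfrac{1}{\sqrt2}\cdot\zket + \tfrac{1}{\sqrt2}\cdot\oket)}{t_0}{t_1}$ is typed by (qcase) over a (sup), while its canonical form is $\tfrac{1}{\sqrt2}\cdot(\qcaseline{\zket}{t_0}{t_1}) + \tfrac{1}{\sqrt2}\cdot(\qcaseline{\oket}{t_0}{t_1})$). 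To re-type the reduct via (sup) you must first show that pushing superpositions outward through these linear constructs produces pairwise-orthogonal components; the components are pure and pairwise non-equivalent, but non-equivalent pure non-value terms need not be orthogonal, so this genuinely requires tracking the (sup) premises of the original derivation. This ``linear development preserves orthogonality'' argument (e.g.\ $s \perp t$ implies $\qcaseline{s}{t_0}{t_1} \perp \qcaseline{t}{t_0}{t_1}$, and the analogous facts for constructors and $\textmatch$, the latter resting on \autoref{lem:subred-can-ortho} and \autoref{lem:ortho-sum}) is the bulk of the paper's proof of the (Can) case in \autoref{lem:subred-strong} and is absent from your sketch.

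A secondary, structural point: the paper's main induction is on the \emph{depth of termination} of $t$ (establishing the stronger invariant that every term along the reduction is well typed), with the case analysis on the reduction derivation inside. This measure is what licenses the hypothesis of \autoref{lem:subred-can-ortho} (``any term terminating in fewer steps is already known to preserve typing''), which is needed when proving orthogonality of the substituted branches in the $\textmatch$ case. Your proposed induction on the derivation of $t \reduces t'$ does not obviously provide that hypothesis, so even after filling the gap above you would likely need to strengthen the induction. The handling of (contr) (commuting it to the root of the typing tree) and of boxed variables is indeed routine, as you anticipate.
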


\begin{remark}\label{rem:subred}
  Theorem~\ref{lem:subred} does not hold on non-terminating and non-pure
  terms, as we cannot type non-terminating superpositions.
  This is the case with the following example, with the Bell state
  $\ket{\Phi^+} =
  \frac{1}{\sqrt 2} \cdot \zket\otimes \zket
  + \frac{1}{\sqrt 2} \cdot \oket\otimes \oket$:
  \[
    \left(\lbd y \qcase{
        ((\letrec f x fx)\zket)
    }{\zket \otimes y}{\oket \otimes y}\right) \ket{\Phi^+}
  \]
  While such a term is well-typed, it reduces to a superposition of
  non-terminating pure terms, which is impossible to type through the rule
  \hyperlink{typesup}{(sup)} from Table~\ref{tab:typing}, as
  the orthogonality predicate requires termination. This is a counterpart of
  having a general language with few restrictions; however, as resource
  analysis will be done on terminating terms, such ill-behaved cases will not be
  considered.
\end{remark}

\subsection{Decidability of Orthogonality}
\label{subsec:orthogonality}
The orthogonality predicate of Definition~\ref{def:orthogonality} is
extensionally complete. As a consequence, it is undecidable.
Indeed, it is as hard as the Universal Halting Problem~\cite{EGZ09}, that is
$\Pi^0_2$-complete in the arithmetical hierarchy~\cite{Nie09}.
Orthogonality uses the inner product, which requires computing
additions, multiplications, and nullity checks on complex numbers. Recall that
algebraic numbers are complex numbers in $\mathbb C$ that are roots of a
polynomial in $\mathbb Q[X]$, and write their sets as $\bar{\mathbb C}$. In the
field of algebraic numbers, equality is decidable, and product and sum are
computable in polynomial time~\cite{HHK05}. We thus restrict ourselves to terms
that only contain amplitudes $\alpha \in \bar{\mathbb C}$.

\begin{restatable}[Undecidability of orthogonality]{theorem}{orthoundecide}
  \label{thm:orthogonality-undecidable}
  Deciding orthogonality between two well-typed terms is $\pitwo$-complete.
\end{restatable}

This is not surprising as \hyrql{} allows for
general recursion. Decidability can be recovered under some slight restrictions.
Towards that end, we introduce for basic types
a \emph{type depth} $d$ as a partial function $d: \{\qbit \}\cup\bset \to
\mathbb N $, defined as follows:
\[
  d(\qbit) \triangleq 1 \qquad
  d(B) \triangleq \max_{\substack{{ s = \sign{\bkvar_1, \dots, \bkvar_n}
  \bvar} \\ c^s \in \cons B}}
  (\sum_{i=1}^n d(\bkvar_i)) + 1
\]
Note that $d(\bkvar)$ is undefined on inductive types.
We call \emph{finite types} any basic type with a defined depth.

\begin{restatable}{proposition}{ortholower}
  \label{prop:orthogonality-lower}
  Given a finite type $\bkvar$, and two terminating terms $\closedtyping s
  \bkvar$, $\closedtyping t \bkvar$, it is decidable whether they are
  orthogonal.
\end{restatable}

In previous works of \spm~\cite{CSV23, CLV24},
the syntax of the language was restrictive enough to ensure the decidability of
orthogonality. As \spm{} is a strict sublanguage of {\hyrql}, orthogonality
can also be decided on this fragment.

\subsection{Complexity Properties}\label{subsec:poly}

We finally discuss how quantum circuits and their size are related to terms of
\hyrql{} and their runtime-complexity. Towards that end, we introduce a
sublanguage of \hyrql{} ($\hyrqlbound T$, Definition~\ref{def:hyrqlbound})
based on three restrictions:
one on the runtime-complexity of the program, to ensure terminating terms
($\timeset T$, Definition~\ref{def:time});
one on types, to obtain terms that can be represented as circuits
($\circuittype$, Definition~\ref{def:restrict-circuit});
and one on recursive calls, to have a faithful relation between the size of the
circuit and the runtime-complexity ($\boundedrec$, Definition~\ref{def:brec}).
This allows, for terms terminating in polynomial time, to characterize
precisely the class \fbqp{}. We first define the time termination of a term.

\begin{definition}\label{def:time}
  Given a term $\closedtyping t Q \linfunc Q'$ and a function $T : \mbN \to \mbN$,
  we say that $t$ \emph{terminates in time $T$}, if for any value 
  $\closedtyping v Q$, $tv$ terminates in at most $T(\size v)$ steps.
  Let $\timeset T$ be the set of terms terminating in time $T$.
\end{definition}

Note that this definition should not be confused with termination defined in
Section~\ref{subsec:semantics}.

To obtain terms that can be compiled to a circuit form, we introduce a 
typed restriction \circuittype{}, which relies on the following type grammar:
\[
    B_r \Coloneqq \qbit \gmid \nat \gmid \typelist{B_r} \gmid B_r \otimes
    B_r \gmid B_r \linfunc B_r \gmid B_r \nonlinfunc B_r
\]

\begin{definition}[Circuit terms]\label{def:restrict-circuit}
  We define $\circuittype$ as the set of well-typed terms $\pi \triangleright
  \closedtyping s Q \linfunc Q'$ satisfying the following conditions:
  \begin{itemize}
    \item Any type in $\pi$ is generated by the grammar of $B_r$;
    \item Any rule $\pi \rightslice (\mathrm{sup})[t_1, \dots, t_n]$
      satisfies $t_i \in \valueset$;
    \item Any rule $\pi \rightslice (\mathrm{qcase})[t, t_0, t_1]$ satisfies
      either $t_0, t_1 \in \valueset$; or $t_0 = c(\ket 0, s_0)$ and
      $t_1 = c(\ket 1, s_1)$, for $c \in \set{\otimes, ::}$.
  \end{itemize}
\end{definition}

While the first condition yields terms that can be represented
by a circuit, the two others allow for an efficient compilation and 
guarantees that any superposition can be compiled effectively to circuits.
Note that while these terms only take inputs of type $Q$, to represent
them clearly in a quantum circuit, they can nonetheless contain
multiple inputs and higher-order constructs in their body.
This restriction alone is sufficient to compile $t$ to a family of circuits in
Theorem~\ref{thm:circuit-exist}, provided $t$ terminates for a given time $T$.

\begin{example}
  \label{ex:bqwalk-adapt}
  The following term, which is an adaptation of the program from
  Example~\ref{ex:bqwalk}, belongs to $\circuittype$. 
  \[
    \pi \triangleright \closedtyping {\lbd x \match x {q \otimes n \to
    \bqwalk\,q\,n}} {\qbit \otimes \nat \linfunc \typelist \qbit}
  \]
  Any type in $\pi$ is derived from $B_r$. Term superposition happens only in
  $\mathtt{Had}$ and is between values. Furthermore, there are two qcase
  constructs, either between values, or between two terms $\zket :: s_0$ and
  $\oket :: s_1$, thus it satisfies all properties of $\circuittype$. Note that
  as $B_r$ allows for tensor products, any program taking multiple inputs can
  always be rewritten using the same technique so that it belongs to
  $\circuittype$.
\end{example}

In the literature, family of circuits are parameterized by the size
of their input. However, in \hyrql{}, knowing the size might not be sufficient,
e.g., on inputs of type $\typelist \qbit \otimes \typelist \qbit$. 
To remedy this, we will parameterize the family by the \emph{shape} of the
input, which will give us the necessary information. Formally, we define
\[
  \begin{aligned}
    \shapeset Q &\triangleq \set{w \mid \closedtyping w \shape Q} \\
    \shapeset w &\triangleq \set{v \mid \closedtyping v Q \wedge \shape v
    \sreduces w}, \forall w \in \shapeset Q,
  \end{aligned}
\]
where $\shapeset Q$ is the \emph{set of shape of terms of type $Q$}, and
$\shapeset w$ the \emph{set of values of shape $w$}.

Assuming an encoding of $t \in \circuittype$ to a quantum state $\ket t$, which
can be done naturally, a circuit $C$ is said to \emph{approximate} $v$ on input
$v'$ with probability $p$, if $C$ evaluates to $\ket \phi$, on input $\ket{v'}$
(up to ancillary qubits), and $\lvert \langle v | \phi \rangle \rvert^2 \geq
p$.

\begin{restatable}[Circuit compilation]{theorem}{circuitexistence}
  \label{thm:circuit-exist}
  Let $t \in \circuittype \cap \timeset T$ be a term of type $Q \linfunc Q'$
  for a given $T : \mbN \to \mbN$. 
  Then, we can generate a family of circuits $\famcirc \triangleq (C_w)_{w \in
  \shapeset Q}$ such that for any $w \in \shapeset Q$ and any $v \in \shapeset
  w$, $C_w$ approximates $v'$ with probability $\frac 2 3$ on input $v$, where
  $tv \sreduces v'$.
\end{restatable}

We are now interested in the \emph{size} of each circuit of the family, i.e.,
the total number of gates acting on the circuit; in particular, the goal is to
exhibit a relation between the size and the runtime of the initial term $t$. To
avoid a blowup due to recursive calls, we introduce a syntactic restriction
$\boundedrec$. It relies on the notion of the \emph{width} of a variable $f$ in
a term $t$, written $w(f, t)$ and defined inductively as follows:
\[
\begin{array}{c}
  w(f, x) \triangleq \delta_{x, f} 
  \qquad 
  w(f, \ket i) \triangleq 0 
  \qquad 
  w(f, \lbd x t) \triangleq w(f,t) \qquad
  w(f, \letrec g x t) \triangleq w(f, t) \qquad \\[1ex]
  w(f, t_1 t_2) \triangleq w(f, t_1) + w(f, t_2)
  \qquad 
  w(f, \sum_{i=1}^n \alpha_i \cdot t_i) = \max_{1 \leq i \leq n} w(f, t_i) \\[1ex]
  w(f, c(t_1, \dots, t_n))       \triangleq \sum_{i=1}^n w(f, t_i) 
  \qquad
  w(f, \shape t) = w(f, t) \\[1ex]
  w(f,\qcasedefault t) \triangleq w(f, t) + \max(w(f,t_0), w(f,t_1)) \\[1ex]
  w(f, \matchdefaultcollapsed t) \triangleq w(f, t) + \max_{1 \leq i \leq
  n} (w(f, t_i))
\end{array}
\]

\begin{definition}\label{def:brec}
  A term $t$ is said to \emph{compile faithfully}, if for any term $t'=(\letrec
  f x s)$ such that $t' \subtermeq t$, the following conditions hold:
  \begin{itemize}
    \item $w(f, s) \leq 1$;
    \item $\exists !\  t', \forall\ t_1, f t_1 \subtermeq s \implies t_1 = t'$;
    \item $\neg \exists\  t', t'f \subtermeq s$.
  \end{itemize}
  Let $\boundedrec$ be the set of terms that compile faithfully.
\end{definition}

The first two conditions are derived from \cite{HPS25}, and ensure that, for
each recursive program $\letrec f x s$, all recursive calls are done in
different branches of a \textqcase{} or \textmatch{}, with the same input.
Furthermore, the last condition ensures that this behaviour is preserved, even
in presence of higher-order constructs.

\begin{example}
  The program of Example~\ref{ex:bqwalk-adapt} belongs to
  $\boundedrec$, as both $\bqwalk$ and $\tt repeat$ only perform one recursive
  call. All the examples across the paper also satisfy this criterion.
\end{example}

\begin{definition}\label{def:hyrqlbound}
  Let $T : \mbN \to \mbN$. We define $\hyrqlbound T$ as follows:
  \[
    \hyrqlbound T\ \triangleq\ \circuittype\ \cap\ \boundedrec\ \cap\ \timeset T
  \]
\end{definition}

Any term in this subset compiles to quantum circuits of size bounded
polynomially by $T$.

\begin{restatable}[Bounded circuit compilation]{theorem}{circuitbound}
  \label{thm:circuit-bound}
  Let $T : \mbN \to \mbN$, and let $t \in \hyrqlbound T$ be a term of type
  $Q \linfunc Q'$. Then, we can generate a family of circuits
  $\famcirc \triangleq (C_w)_{w \in \shapeset Q}$ and a polynomial $P$ such
  that, for any $w \in \shapeset Q$:
  \begin{itemize}
    \item $\size{C_w} \leq P(T(\size w))$;
    \item $\forall\ v \in \shapeset w$, $C_w$ approximates $v'$ with
      probability $\frac 2 3$ on input $v$, where $tv \sreduces v'$.
  \end{itemize}
\end{restatable}

We now want to use Theorem~\ref{thm:circuit-bound} to obtain a characterization
of \fbqp{} (Definition~\ref{def:fbqp}). We recall that a family of circuits
$(C_n)_{n \in \mbN}$ is said to be \emph{uniform polynomially-sized} with
$\alpha : \mbN \to \mbN$ if there exists $P \in \mbN[X]$ such that $\size{C_n}
\leq P(n)$ and $C_n$ has exactly $\alpha(n)$ ancillary qubits, and there is a
polynomial-time Turing machine, which takes $n$ as input, and outputs a
representation of $C_n$ for any $n \in \mbN$.

\begin{definition}[\cite{BV97}]
  \label{def:fbqp}
  A binary function $f : \binfunc$ is said to be computed by a family of
  circuits $(C_n)_{n \in \mbN}$ if for any $x \in \set{0,1}^*$, $C_{\size x}$
  approximates $f(x)$ with probability $\frac 2 3$ on input $x$. \fbqp{} is
  defined as the set of binary functions computed by a uniform
  polynomially-sized family of circuits.
\end{definition}

To characterize \fbqp{}, we define the following set of terms, which terminate 
in polynomial time, and have list of qubits as input:
\[
  \hyrqlpoly \triangleq \cup_{P \in \mbN[X]} \set{
    t \in \hyrqlbound P \gmid \closedtyping t \typelist \qbit \linfunc Q
  }
\]

These generated families of circuits will be indexed by natural numbers
instead of shape, as one can remark that $\shapeset{\typelist \qbit}$ is in
bijection with $\mbN$.
Altogether, $\hyrqlpoly$ contain terms that are
compiled to circuits approximating functions of $\fbqp$.

\begin{restatable}[\fbqp{} soundness]{theorem}{fbqpsound}
  \label{thm:fbqp-sound}
  Let $t \in \hyrqlpoly$, and let $f : \binfunc$. If $f$ is computed by
  $\famcirc$, then $f \in \fbqp$.
\end{restatable}

For example, the Quantum Fourier Transform program from Example~\ref{ex:qft}
belongs to $\fbqp$, and can thus be compiled to circuits of polynomial size.

We now prove \fbqp{} completeness, meaning that every function of \fbqp{} can
be represented by a function in $\hyrqlpoly$. This is proven by writing
any function of Yamakami's algebra~\cite{Yam20}, which is known to capture
\fbqp{}, into $\hyrqlpoly$.

\begin{restatable}[\fbqp{} completeness]{theorem}{fbqpcomplete}
  \label{thm:fbqp-complete}
  Let $f : \binfunc \in \fbqp$. Then, there exists $t \in \hyrqlpoly$
  such that $\famcirc$ computes $f$.
\end{restatable}

We have given a typed, syntactic and semantic restriction of \hyrql{}, that
compiles into a bounded family of quantum circuits. Such restriction could be
refined, either to accept more terms by lifting requirements in
Definition~\ref{def:brec}, as done in~\cite{HPS25}; or by adding requirements
to characterize more precise complexity classes~\cite{FHPS25}. We are
nonetheless able to encapsulate \fbqp{}, which demonstrates the expressivity of
the $\hyrqlpoly$ fragment.

\section{Towards Automatization}
\label{sec:automatization}
Theorem~\ref{thm:circuit-bound} establishes that, given a program
that terminates with a specified complexity bound, one can extract a
concrete quantum circuit exhibiting the same space complexity. This result
is particularly significant, as it enables the characterization of
 quantum polynomial time, as shown in Theorem~\ref{thm:fbqp-sound} and
 Theorem~\ref{thm:fbqp-complete}. However, the burden of proving termination of
 a \hyrql{} program currently rests entirely on the programmer.
This naturally raises the following question: to what
extent is the termination of \hyrql{} programs semi automatable?

We argue that such automatization is feasible by leveraging techniques
from Term Rewrite Systems (TRS)\footnote{More specifically, a higher-order version
called Simply-Typed TRS~\cite{Yam01}, but we do not enter this level of details
here.}, as they
are equipped with a rich toolbox for analyzing both termination and
computational complexity. Approaches to termination analysis include
non-exhaustively recursive path
orderings~\cite{Der82,Der87}, dependency pairs~\cite{AG00}, and
size-change termination~\cite{LJBA01}. Similarly, several methods
address complexity analysis, e.g.,
interpretation methods~\cite{BMM11,MP09,BDL16}, and polynomial path
orders~\cite{Moser09,AM09}.

Although a full formal treatment is left to future work, we
present an illustrative example supporting our claim. Specifically, we
illustrate how a \hyrql{} program can be efficiently translated into a (quantum) TRS,
how the reduction behaviour of the original program is faithfully
simulated by the resulting rewriting system while preserving
complexity bounds, and how termination and complexity results
established for the TRS can be transferred back to the corresponding
\hyrql{} program.

For example, the standard NOT gate is represented by the quantum TRS whose set
of rewrite rules is $ \{\mathtt{NOT}\,\zket \to \oket, \mathtt{NOT}\,\oket \to
\zket\}$. The application of the $\mathtt{NOT}$ function symbol to $\zket$
rewrites to $\oket$. Similarly, the Hadamard gate $H$ can be defined using a
superposition of terms $ \{\mathtt{H}\,\zket \to \frac{1}{\sqrt 2} \cdot \zket
+ \frac{1}{\sqrt 2} \cdot \oket, \mathtt{H}\,\oket \to  \frac{1}{\sqrt 2} \cdot
\zket - \frac{1}{\sqrt 2} \cdot \oket\}$. The evaluation of a quantum term will
follow a parallel strategy, similar to \hyrql{} operational semantics
(Table~\ref{tab:reduction}), which reduces in parallel the distinct branches of
a superposition, when possible (otherwise, the reduction will have an
exponential blowup).

We claim that we can automatize the translation of any term of \hyrql{} to a
corresponding TRS. This translation works as follows:

\begin{itemize}
  \item Produce a set of rules, inductively on the syntax of the term;
  \item Yield a function symbol $\mtf$ for the initial term $t$;
  \item When encountering $\qcasedefault x$ (same for \textmatch{}), merge the 
    two sets of rules generated for $t_0$ and $t_1$, while keeping track of
    which path was taken.
\end{itemize}

Coming back on Example~\ref{ex:bqwalk}, such an algorithm would generate the
following rewrite rules:
\[
  \left\{
    \begin{aligned}
      \mathtt{Repeat}\,0 &\to \nil   &\mathtt{BQWalk}\,\zket\,m &\to \zket ::
      \mathtt{Repeat}\,m \\
      \mathtt{Repeat}\,S(m) &\to \zket:: \mathtt{Repeat}\,m  &
      \mathtt{BQWalk}\,\oket\,0 &\to \oket :: \nil \\
      & &     \mathtt{BQWalk}\,\oket\,S(m) &\to \oket ::
      \mathtt{BQWalk}\,(\ket -)\,m
    \end{aligned}
  \right\}
\]

One can remark that this set of rules captures completely
the behaviour of $\mathtt{bqwalk}$, and they both compute the same function;
furthermore, one reduction step for a given rewrite rule will correspond to
$1 \leq k \leq \size{\mathtt{bqwalk}}$ reduction steps in \hyrql{}.
Therefore, a certificate on this TRS yields directly a certificate on the
original term. 

Now, it is possible to apply the tools we mentioned to the obtained TRS. In
particular, we can use recursive path orderings~\cite{Der82} to show
termination, coupled with quasi-interpretations~\cite{BMM11} to get that this
TRS terminates in time $\mcal O(n)$ for any input of size $n$. While such
properties are not defined for quantum TRS, i.e.,\ with superpositions and
parallel reduction, they can be adapted to fit the quantum setting; a formal
adaptation is left for future work. Altogether with
Theorem~\ref{thm:circuit-bound}, we can conclude that
$\mathtt{bqwalk}$ can be compiled to a family of quantum circuits, whose size
is polynomial in the size of the input.

Using the same ideas, we claim that we can describe a semantic-preserving
translation algorithm from \hyrql{} to TRS, with no blowup on the termination
complexity in either direction. Such work, along with a proper development of
TRS in the quantum setting and an adaptation of existing techniques, would give
a semi-automatized way to prove the feasibility/tractability of terms in
\hyrql{}.

While this section was focused on termination and time complexity, some of the
mentioned techniques also provide results to bound the space complexity of a
TRS program~\cite{BMM11}. Given a good definition of quantum space, this would
make it possible to characterize quantum space complexities, and obtain bounds
on the depth or width of the circuit.

\bibliography{bibli}

\appendix
\section{Additional Material}
\label{app:additional}

\begin{definition}[Size of a term]
  \label{def:size}
  Given a term $t$, its size, written $\size t \in \mbN$, is defined
  inductively as follows:
  \begingroup
  \renewcommand{\arraystretch}{1.5}
  \[
    \begin{array}{c}
      \size x \triangleq 1
      \qquad
      \size \zket \triangleq 1
      \qquad
      \size \oket \triangleq 1
      \qquad
      \size{\qcasedefault t} \triangleq 1 + \size t + \size{t_0} + \size{t_1} \\
      \size{c(t_1, \dots, t_n)} \triangleq 1 + \sum_{i=1}^n \size{t_i}
      \qquad
      \size{\matchdefaultcollapsed t} \triangleq 1 + \size t +
      \sum_{i=1}^n \size{t_i} \\
      \size{\lbd x t} \triangleq 1 + \size t
      \qquad
      \size{\letrec f x t} \triangleq 1 + \size t
      \qquad
      \size{t_1t_2} \triangleq \size{t_1} + \size{t_2}
      \\
      \size{\sum_{i=1}^n \alpha_i \cdot t_i} \triangleq 1 + \sum_{i=1}^n \size{t_i}
      \qquad
      \size{\shape t} \triangleq 1 + \size t
    \end{array}
  \]
  \endgroup
\end{definition}

\begin{example}[Hybrid function --- \emph{BB84}~\cite{BB84}]
  \label{ex:hybrid-function}
  $\hyrql$ is able to use quantum and classical data simultaneously.
  For example, the well-known protocol \emph{BB84}~\cite{BB84}, creating
  a quantum key from a classical key, can be implemented as follows:
  \[
    \begin{aligned}
      \mathtt{not}&\triangleq \lbd q \qcase q \oket \zket \\
      \mathtt{cc} &\triangleq \lbd b \lbd f \lbd q \match b {
        0 \to q,
        1 \to f\,q
      } \\
      \mathtt{op} &\triangleq \lbd q \lbd n \match n {
        (x, h) \to
        \mathtt{cc}\,h\,\mathtt{Had}(\mathtt{cc}\,x\,\mathtt{not}\,q)
      } \\
      \mathtt{keygen} &\triangleq  \letrec f  l \match l {
        \nil \to \nil,
        h::t \to (\mathtt{op}\,\zket\,h) :: (f\,t)
      }
    \end{aligned}
  \]
  The term $\mathtt{cc}$ applies a gate $f$ conditionally with classical
  control on a bit $b$.
  The term $\mathtt{op}$ reads a pair $(x, h)$, and applies the NOT gate if
  $x = 1$, and then applies the Hadamard gate if $h = 1$. Finally,
  $\mathtt{keygen}$ generates the qubit list key.
\end{example}

\begingroup
\newcommand{\letsug}[3]{\mathtt{let}\,#1=#2\, \mathtt{in}\,#3}
\begin{example}[Quantum Fourier Transform]
  \label{ex:qft}
  We can encode the well-known Quantum Fourier Transform (QFT) in our language,
  with however two points to notice:
  \begin{itemize}
    \item As our natural numbers are defined recursively, defining
      phase gates is not direct.
      We thus assume that we are given a term $\phi$ such that
      $\phi\,n\,\zket = \zket$ and
      $\phi\,n\,\oket = e^{i\pi/2^{n-1}} \oket$.
    \item We will use the syntactic sugar to split 2 qubits and rename them:
      \[
        \letsug{x \otimes y}{s}{t} \triangleq \match s {x \otimes y
        \to t}
      \]
  \end{itemize}
  \[
    \begin{aligned}
      \mathtt{cphase} &\triangleq \lbd x \lbd n \match x {
        c \otimes t \to \qcase c {
          \oket \otimes t
        } {
          \zket \otimes \phi\,n\,t
        }
      } \\
      \mathtt{rot} &\triangleq \letrec f q \lbd l \lbd n \match l {
        &\nil \to q \otimes \nil,
        h::t \to \letsug{h' \otimes q'}{\mathtt{cphase}(h \otimes q)(n)}{
          \\
          &\letsug{q'' \otimes t'}{f(q')(t)(S(n))}{
            q'' \otimes (h'::t')
          }
        }
      } \\
      \mathtt{rec} &\triangleq \letrec f l \match l {
        \nil \to \nil, h::t \to \letsug{h' \otimes
        t'}{\mathtt{rot}(\mathtt{had}\,h)(t)(S(S(0)))}{
          h'::(f\,t')
        }
      } \\
      \mathtt{append} &\triangleq \letrec f l \lbd x \match l {
        \nil \to x :: \nil, h::t \to h::f\,t\,x
      } \\
      \mathtt{swap} &\triangleq \letrec f l \match l {
        \nil \to \nil,
        h::t \to \mathtt{append}(f\,t)\,h
      } \\
      \mathtt{qft} &\triangleq \lbd l \mathtt{swap}(\mathtt{rec}(l))
    \end{aligned}
  \]
\end{example}
\endgroup

\begingroup
\renewcommand{\closedtyping}[2]{;\vdash #1 : #2}

For the sake of simplicity in the following typing trees,
we may denote $\closedtyping t T$ for $\typing{}{} t T$.

\begin{example}

  As $\zket, \oket$ are pure values, then we get $\zket \perp \oket$
  directly from the definition.
  This allows us to type $\ket \pm$:

  \[
    \ptsc{
      \infer0[(ax$_0$)]{\closedtyping \zket \qbit}
      \infer0[(ax$_1$)]{\closedtyping \oket \qbit}
      \hypo{\size{\frac 1 {\sqrt 2}}^2 + \size{\pm \frac 1 {\sqrt 2}}^2 = 1}
      \hypo{\zket \perp \oket}
      \infer4[(sup)]{\closedtyping{\ket \pm}{\qbit}}
    }
  \]
  It is also easy to check that $\ket + \perp \ket -$:
  \newcommand{\stwo}{\frac 1 {\sqrt 2}}
  \[
    \begin{aligned}
      \innerprod{\ket +}{\ket -} &= \stwo \stwo \kron{\zket}{\zket} -
      \stwo \stwo \kron{\zket}{\oket} + \stwo \stwo
      \kron{\oket}{\zket} - \stwo \stwo \kron{\oket}{\oket} \\
      &= \frac 1 2 - \frac 1 2 = 0
    \end{aligned}
  \]

  we can also type the Hadamard gate as follows:

  \[
    \ptsc{
      \infer0{\typing{}{x : \qbit} x \qbit}
      \hypo{\closedtyping {\ket +} \qbit}
      \hypo{\closedtyping {\ket -} \qbit}
      \hypo{\ket + \perp \ket -}
      \infer4[(qcase)]{\typing{}{x : \qbit}{\qcase x {\ket
      +}{\ket -}}{\qbit}}
      \infer1[(lbd)]{\closedtyping{\mathtt{Had}}{\qbit \linfunc \qbit}}
    }
  \]
\end{example}

\begin{example}
  The quantum switch from \autoref{ex:qs-syntax} can be typed by our language.
  For conciseness, we write $v = \zket \otimes f(g\,t)$, $w = \oket
  \otimes g(f\,t)$,
  $\Gamma = f : \qbit \linfunc \qbit, g : \qbit \linfunc \qbit$,
  $s = \match q { (c \otimes t) \to \qcase c v w}$,
  $\Delta = c : \qbit, t: \qbit$ and $Q_2 = \qbit \otimes \qbit$.
  The orthogonality predicate $v \perp w$ is verified, as for any
  substitution of
  $f, g, t$, $\sigma(v) \sreduces \zket \otimes v'$ and
  $\sigma(w) \sreduces \oket \otimes w'$. Both terms have the same shape $\vunit \otimes \vunit$ and their inner product is zero.
  Note that we substitute by terminating terms, thus $f, g$ terminate and we
  can evaluate the inner product.

  \[
    \ptsc{
      \hypo{\typing \Gamma {c : \qbit} c \qbit}
      \hypo{\typing{\Gamma}{t : \qbit}{v}{Q_2}}
      \hypo{\typing{\Gamma}{t : \qbit}{w}{Q_2}}
      \hypo{v \perp w}
      \infer4{\typing \Gamma \Delta {\qcase c v w} Q_2}
      \hypo{\typing \Gamma {q: Q_2} q Q_2}
      \infer2{\typing \Gamma {q : Q_2}{s}{Q_2}}
      \infer1{\typing \Gamma {}{\lbd q s}{Q_2 \linfunc Q_2}}
      \infer1{\typing {f : Q_2 \linfunc Q_2} {} {\lbd g \lbd q
      s}{(\qbit \linfunc \qbit) \nonlinfunc (Q_2 \linfunc Q_2)}}
      \infer1{\closedtyping {\mathtt{QS}}{(\qbit \linfunc \qbit)
      \nonlinfunc (\qbit \linfunc \qbit) \nonlinfunc (Q_2 \linfunc Q_2)}}
    }
  \]
\end{example}

\newcommand{\lqbit}{\typelist \qbit}
\newcommand{\slqbit}{\typelist \unit}
\newcommand{\len}{\mathtt{len}_\unit}
\begin{example}
  \label{ex:len-app}
  Coming back on \autoref{ex:len-typing},
  the following typing tree derives
  typing for $\mathtt{len}$, for a given type
  $A \in \bcset$, with the following notations
  to shorten the tree:
  $\Gamma_f = \vartyping f \typelist A \nonlinfunc \nat$, $\Gamma_x =
  \vartyping x {\typelist A}$, and $\Gamma = \Gamma_f, \Gamma_x, \vartyping h
  A, \vartyping t \typelist A$. For conciseness, we also do not type $f$ and
  $t$ on the right part of the tree, which can both be typed as
  $\typing \Gamma {} f {\typelist A \nonlinfunc \nat}$ and
  $\typing \Gamma {} t \typelist A$, using (ax$_c$) rules. We also drop the 
  signature on constructors once they are fully applied.

  \[
    \begin{prooftree}[small]
      \infer0[(ax)]{\typing{\Gamma_f, \Gamma_x}{}{x}{\typelist A}}
      \infer0[(cons)]{\typing{\Gamma_f, \Gamma_x}{}{0^\nat}{\nat}}
      \infer0[(app$_c$)]{\typing \Gamma {} {ft}{\nat}}
      \infer1[(cons)]{\typing \Gamma {} {S^{\sign \nat \nat}(ft)}{\nat}}
      \infer3[(match)]{\typing{\Gamma_f, \Gamma_x}{}{\match x {\relax
      \nil \to 0 \,, h :: t \to S(ft)}} \nat}
      \infer1[(abs$_c$)]{\typing{\Gamma_f}{}{\lbd x {\match x {\relax
      \nil \to 0 \,, h :: t \to S(ft)}}}{\typelist A \nonlinfunc \nat}}
      \infer1[(rec)]{\closedtyping{\letrec f x {\match x {\relax \nil
      \to 0 \,, h :: t \to S(ft)}}}{\typelist A \nonlinfunc \nat}}
    \end{prooftree}
  \]

  In particular, as $h$ is not present in the syntax of $S(f\,t)$,
  we could not type $\mathtt{len}$ if $h$ must be used linearly, thus
  if $A \in \qvar$.
\end{example}

\begin{example}
  Assuming we have a bit type $\mathtt{bit}$ for both constructors $0 ::
  \mathtt{bit}$ and $1 :: \mathtt{bit}$, the typing of
  \autoref{ex:hybrid-function}
  can be derived, by combining most of the techniques exhibited from the
  previous examples, giving us $\closedtyping {\mathtt{keygen}}
  {\typelist{\mathtt{bit}}\nonlinfunc \typelist{\qbit}}$. Note that when we used
  programs in the syntax of other terms, such as $\mathtt{Had}$, it is just
  notation to avoid rewriting the whole term.
\end{example}
\endgroup

\section{Proofs for Section~\ref{subsec:properties}}

\subsection{Intermediate results}

In this subsection, we state some intermediate results
and definitions that we will use through the main proofs.
Furthermore, we may use the notation $\{x \to v\}$ for a
substitution $\{ v / x \}$, and define the \emph{support} of
a substitution $\subsupp \sigma$ as the set of variables $x_i$
such that there exists $t_i$ with $\{ t_i / x_i \} \in \sigma$.
When we do not need precision on the linearity of a functional term,
we may use $\anyfunc \triangleq \linfunc \cup \nonlinfunc$.
We also denote the free variables of $t$ by $\fv t$.

\weakening*

\begin{proof}
  This can be proven by induction on $t$, as the axiom rules satisfy
  this weakening property.
  Furthermore, the only variable removed from the non-linear context
  are bound variables,
  thus $\Gamma'$ stays untouched at each point. \qedhere
\end{proof}

\begin{lemma}\label{lem:functype}
  Let $\tvar \in \mathcal T$.  Then, there exists a number
  $n \in \mathbb N$, types $T_i$ for $1 \leq i \leq n$,
  and a base type $\bkvar$ such that
  $\tvar = \tvar_1 \anyfunc \dots \anyfunc \tvar_n \anyfunc \bkvar$.
\end{lemma}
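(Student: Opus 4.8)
The plan is a straightforward structural induction on the type $T$, exploiting the right-associativity convention for the generic function arrow $\anyfunc$, so that $T_1 \anyfunc \cdots \anyfunc T_n \anyfunc \bkvar$ is read as $T_1 \anyfunc (T_2 \anyfunc (\cdots \anyfunc (T_n \anyfunc \bkvar)))$. The grammar of types has exactly two sorts of productions: the atomic basic types $\qbit$ and $B$, and the three arrow-forming productions $T \linfunc T$, $T \nonlinfunc T$, and $T \unitfunc T$. Since $\anyfunc$ is the generic arrow covering all three, I only need to separate the basic case from the arrow case.

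First I would treat the base case: if $T = \qbit$ or $T = B$ for some constructor type $B$, then $T$ is itself a basic type, so I take $n = 0$ and $\bkvar = T$, the empty list of domain types yielding exactly $T = \bkvar$. For the inductive step, suppose $T = A \star C$ where $\star \in \{\linfunc, \nonlinfunc, \unitfunc\}$. Applying the induction hypothesis to the codomain $C$ gives $C = T_2 \anyfunc \cdots \anyfunc T_n \anyfunc \bkvar$ for some $n \geq 1$, types $T_2, \ldots, T_n$, and a basic type $\bkvar$. Setting $T_1 := A$ and using right-associativity, I conclude $T = T_1 \star C = T_1 \anyfunc T_2 \anyfunc \cdots \anyfunc T_n \anyfunc \bkvar$, which is the desired form.

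There is no genuine obstacle here; the single point that requires care is that the recursion must descend along the \emph{codomain} position only. One must keep the domain $A$ intact as the single antecedent $T_1$, rather than recursively decomposing it, even though $A$ may itself be a complex function type. This is legitimate precisely because the statement allows the antecedents $T_i$ to be arbitrary types, whereas only the final target $\bkvar$ is required to be basic; decomposing both sides of the arrow would break the claimed shape.
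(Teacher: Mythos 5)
Your proof is correct and follows essentially the same route as the paper: structural induction on $T$, with basic types giving $n=0$ and the arrow case handled by applying the induction hypothesis to the codomain and using right-associativity of $\anyfunc$. Your added remark about recursing only on the codomain position is a sensible clarification but does not change the argument.
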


\begin{proof}
  Direct by induction on $\tvar$: either it is a base type itself, thus we take
  $n = 0$; or $\tvar = \tvar_1 \anyfunc \tvar'$, then we use the induction
  hypothesis on $\tvar'$ and right associativity to conclude. \qedhere
\end{proof}

We also introduce an operator $\qt p t$, called the \emph{quantity},
which aims to measure the amplitude of a pure term p in a term $t$,
\eg{} a superposition.

\begin{definition}\label{def:quantity}
  Let $p$ be a pure term.
  We define the following map $\theta_p : \Lambda \to \mathbb C$ as follows:
  \begingroup
  \renewcommand{\arraystretch}{1.5}
  \[
    \begin{array}{c}
      \qt p x \triangleq \delta_{p, x} \qquad \qt p \zket \triangleq
      \delta_{p, \zket}
      \qquad \qt p \oket \triangleq \delta_{p, \oket} \\
      \qt p {\qcase {t'} {t_0} {t_1}} \triangleq \delta_{p,
      \qcase{p'}{t_0}{t_1}} \qt{p'}{t'} \\
      \qt p {c(t_1, \dots, t_n)} \triangleq \delta_{p, c(p_1, \dots, p_n)}
      \times_{i=1}^n \qt{p_i}{t_i} \\
      \qt p {\matchdefaultcollapsed{t'}} \triangleq
      \delta_{p, \matchdefaultcollapsed{p'}} \qt{p'}{t'} \\
      \qt p {\lbd x t} \triangleq \delta_{p, \lbd x t} \qquad \qt p {\letrec f
      x t} \triangleq \delta_{p, \letrec f x t} \qquad
      \qt p {t_1t_2} \triangleq \kron p {p_1p_2} \kroneq {p_1} {t_1} \kroneq
      {p_2} {t_2}  \\
      \qt p {\sum_{i=1}^n \alpha_i \cdot t_i} \triangleq
      \sum_{i=1}^n \alpha_i
      \cdot \qt p {t_i}
      \qquad \qt p {\shape{t'}} \triangleq \kron p {\shape{p'}}
      \kroneq {p'} {t'}
    \end{array}
  \]
  \endgroup
  We have used the notation $\kron p t$ for the Kronecker symbol,
  and $\kroneq p t$ for the equivalence Kronecker symbol, meaning
  $\kroneq p t = 1$ iff $p \equiv t$.
\end{definition}

\begin{lemma}\label{lem:quantity-one}
  Let $p$ be a pure term.
  Then $\qt p p = 1$.
\end{lemma}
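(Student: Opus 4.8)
The plan is to argue by structural induction on the pure term $p$, following the grammar of pure terms. In each case one unfolds the definition of $\qt p p$ from Definition~\ref{def:quantity} and checks that every Kronecker factor produced evaluates to $1$.

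For the base cases $p = x$, $p = \zket$, and $p = \oket$, the quantity is defined directly as $\delta_{p,p}$, which equals $1$ by syntactic reflexivity. The same immediate argument disposes of the forms $p = \lbd x t$, $p = \letrec f x t$, and $p = \unitary t$: in each of these, $\qt p p$ is by definition the single Kronecker symbol $\delta_{p,p} = 1$, with no recursive call involved.

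The application and shape cases require no induction either, since the definition routes them through the equivalence Kronecker symbol $\nu$ rather than through $\theta$. Writing $p = p_1 p_2$, one computes $\qt{p_1 p_2}{p_1 p_2} = \kron{p_1 p_2}{p_1 p_2}\,\kroneq{p_1}{p_1}\,\kroneq{p_2}{p_2}$, where the first factor is $1$ by syntactic equality and the last two are $1$ by reflexivity of $\equiv$; likewise $\qt{\shape t}{\shape t} = \kron{\shape t}{\shape t}\,\kroneq t t = 1$. This is exactly the point where it matters that these cases sidestep the possible non-purity of their immediate subterms (which the grammar of pure terms permits to be arbitrary terms $t$), precisely because they only test equivalence instead of recursing through the quantity.

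The genuinely inductive cases are the quantum conditional with pure guard $p'$, the constructor application $c(p_1, \dots, p_n)$, and the pattern matching $\matchdefaultcollapsed{p'}$, whose distinguished subterms are themselves pure. In each, the leading $\delta$ factor compares $p$ with itself and hence equals $1$, leaving a residual $\qt{p'}{p'}$ (respectively a product $\prod_{i=1}^n \qt{p_i}{p_i}$), which equals $1$ by the induction hypothesis applied to the pure subterm(s). The superposition form does not arise, since $\sum_i \alpha_i \cdot t_i$ is not a pure term. I expect no real obstacle; the only subtlety worth care is the implicit decomposition convention in Definition~\ref{def:quantity}, whereby the symbols $p'$ and $p_i$ denote the pieces obtained by matching the first argument against the head constructor of the second. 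Since we evaluate $\qt p p$, this match always succeeds and identifies those pieces with the corresponding subterms of $p$, which is exactly what makes the leading $\delta$ equal $1$ and aligns each recursive call with the induction hypothesis.
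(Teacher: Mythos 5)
Your proof is correct and follows the same route as the paper, which simply states that the result is ``proven by induction on $p$, direct by definition of the quantity.'' You have merely spelled out the case analysis in full detail, correctly observing that the application and shape cases terminate via reflexivity of $\equiv$ rather than a recursive call, and that the superposition case cannot arise for a pure term.
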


\begin{proof}
  Proven by induction on $p$, direct by definition of the quantity. \qedhere
\end{proof}

\begin{lemma}\label{lem:quantity-equiv}
  Let $p$ be a pure term and $t, t'$ be two terms of the language.
  If $t \equiv t'$, then $\qt p t = \qt p {t'}$.
\end{lemma}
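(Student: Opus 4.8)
The plan is to prove the statement by induction on the derivation of $t \equiv t'$, viewing $\equiv$ as the least congruence generated by the axioms of \autoref{tab:equiv} together with the closure rule $\equivcontext[t] \equiv \equivcontext[t']$ from $t \equiv t'$. Since $\qt \cdot \cdot$ is a genuine function for each fixed pure $p$, the reflexivity, symmetry and transitivity steps of the closure are immediate: equality of the values $\qt p t$ is itself reflexive, symmetric and transitive, so they transport directly through the induction hypothesis.

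First I would dispatch the axiom base cases by direct computation from \autoref{def:quantity}. The vector-space axioms (commutativity and associativity of $+$, $1 \cdot t \equiv t$, $t + 0 \cdot t' \equiv t$, and the three scalar laws) reduce, via the clause $\qt p {\sum_i \alpha_i \cdot t_i} = \sum_i \alpha_i \cdot \qt p {t_i}$, to the corresponding identities of complex-number arithmetic. The three context-push axioms (pushing a sum out of a \textqcase, out of a constructor argument, or out of a \textmatch) require slightly more care, because the defining clause extracts an inner pure subterm $p'$ from $p$ through a Kronecker factor of the form $\kron p \cdot$. Here I would observe that the same $p'$ and the same delta factor are selected on both sides, so when that factor is nonzero both sides collapse to $\sum_j \alpha_j \cdot \qt {p'}{s_j}$ by $\mathbb C$-linearity of the summation clause, and both are $0$ otherwise.

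The main case, which I expect to be the crux, is the congruence rule: from $s \equiv s'$ conclude $\qt p {\equivcontext[s]} = \qt p {\equivcontext[s']}$ for every pure $p$. I would handle it by an inner structural induction on the equivalence context $\equivcontext$, carrying the outer hypothesis ``$\qt q s = \qt q {s'}$ for all pure $q$''. For the hole $\diamond$ this is exactly the outer hypothesis. For the context formers whose defining clause recurses through $\qt \cdot \cdot$ on the subterm at the hole — namely the \textqcase-, constructor-, \textmatch- and $+$-clauses — the inner hypothesis applied to $\equivcontext'$ gives equality of the one recursive factor (instantiated at the extracted pure term $p'$) while every other factor is unchanged, so the whole expression is preserved.

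The delicate sub-cases are application and \textshape, where \autoref{def:quantity} does not recurse through $\qt \cdot \cdot$ on the argument but instead tests it with the equivalence-Kronecker symbol $\kroneq \cdot \cdot$, which returns $1$ exactly up to $\equiv$. For $\equivcontext = t\,\equivcontext'$ I would reduce the goal to $\kroneq {p_2}{\equivcontext'[s]} = \kroneq {p_2}{\equivcontext'[s']}$, that is $p_2 \equiv \equivcontext'[s] \iff p_2 \equiv \equivcontext'[s']$; this holds because $\equivcontext'[s] \equiv \equivcontext'[s']$ is itself an instance of the closure rule applied to $s \equiv s'$, combined with transitivity of $\equiv$. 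The cases $\equivcontext'\,t$ and $\shape{\equivcontext'}$ are identical in spirit. Thus the only genuinely non-mechanical ingredient is recognizing that these $\kroneq \cdot \cdot$-clauses must be discharged using the congruence and transitivity of $\equiv$ rather than the recursive definition of $\qt \cdot \cdot$; everything else is routine arithmetic.
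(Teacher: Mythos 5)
Your proof is correct and follows essentially the same route as the paper's: induction on the rules generating $\equiv$, with the vector-space axioms discharged by arithmetic in $\mathbb C$ and the congruence rule handled by an inner induction on the structure of $\equivcontext$. You spell out more carefully than the paper the point that the application and $\textshape$ clauses of $\theta_p$ must be closed using the $\kroneq{\cdot}{\cdot}$ factors together with congruence and transitivity of $\equiv$, but this is a refinement of the same argument, not a different one.
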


\begin{proof}
  By induction on each rule defining $\equiv$ in \autoref{tab:equiv};
  the transitive and reflexive case can be derived from this.
  The first two lines of rules are verified directly as $\mathbb C$
  is a vectorial space.
  The three linearity rules also work by definition of summation and
  the linear constructs.
  Finally, we can prove the $\equivcontext$ case by doing an
  induction on the depth of $\equivcontext$,
  the base case being just $t \equiv t'$ and then obtaining the
  result by induction. \qedhere
\end{proof}

\begin{lemma}\label{lem:quantity-pure}
  Let $s, t$ be two pure terms.
  Then $\qt s t = \kroneq s t$.
  Furthemore, if $s, t$ are values, then $\qt s t = \kron s t$.
\end{lemma}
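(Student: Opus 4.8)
The plan is to derive the first identity by combining the two preceding lemmas for one direction with a structural induction for the other, exploiting the key observation that, for a \emph{pure} argument $t$, the defining equations of \autoref{def:quantity} never reach the superposition clause.

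First I would record that $\qt s t \in \{0,1\}$ whenever $s,t$ are pure, by a short induction on $t$. The clauses for $x$, $\zket$, $\oket$, $\lbd x {t_1}$, $\letrec f x {t_1}$ and $\unitary{t_1}$ return a single Kronecker symbol, which lies in $\{0,1\}$; the clauses for $t_1 t_2$ and $\shape{t_1}$ return a product of Kronecker symbols, each in $\{0,1\}$; and the clauses for $\qcaseline{t'}{t_0}{t_1}$, $c(\overrightarrow{t})$ and $\matchdefaultcollapsed{t'}$ multiply a Kronecker symbol by recursive quantities $\qt{s'}{t'}$ or $\qt{s_i}{t_i}$, whose arguments stay pure when $t$ is pure, hence lie in $\{0,1\}$ by the induction hypothesis. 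The superposition clause is never triggered, since a pure term is never a top-level sum.

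Next, for the identity $\qt s t = \kroneq s t$ I would prove the two implications relating $\qt s t = 1$ and $s \equiv t$. The direction $s \equiv t \Rightarrow \qt s t = 1$ is immediate: \autoref{lem:quantity-equiv} gives $\qt s t = \qt s s$, and $\qt s s = 1$ by \autoref{lem:quantity-one}. For the converse $\qt s t = 1 \Rightarrow s \equiv t$ I would induct on $t$. In each clause the leading Kronecker symbol forces the head of $s$; in the $\lbd x {t_1}$, $\letrec f x {t_1}$ and $\unitary{t_1}$ clauses it forces $s = t$ outright, so $s \equiv t$ by reflexivity. In the $\qcaseline{t'}{t_0}{t_1}$ clause it forces $s = \qcaseline{s'}{t_0}{t_1}$ with the same branches and $\qt{s'}{t'} = 1$; the induction hypothesis yields $s' \equiv t'$, which lifts to $s \equiv t$ because $\qcasedefault{\equivcontext}$ is an equivalence context. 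The constructor and match clauses are analogous, using the contexts $c(\overrightarrow{t_1}, \equivcontext, \overrightarrow{t_2})$ and $\matchdefaultcollapsed{\equivcontext}$ and peeling one argument at a time. For $t_1 t_2$ and $\shape{t_1}$ the definition already supplies the equivalence Kronecker symbols on the subterms, i.e.\ $s_1 \equiv t_1$ (and $s_2 \equiv t_2$), which the contexts $\equivcontext\, t$, $t\, \equivcontext$ and $\shape{\equivcontext}$ turn into $s \equiv t$. Combining the two implications with the $\{0,1\}$ bound gives $\qt s t = \kroneq s t$: when $s \equiv t$ both sides are $1$, and when $s \not\equiv t$ the converse implication rules out $\qt s t = 1$, so $\qt s t = 0 = \kroneq s t$.

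Finally, for the value case I would run a separate direct induction on the pure value $t$, establishing $\qt s t = \kron s t$. A pure value is of the form $x$, $\zket$, $\oket$, $c(\overrightarrow{v})$, $\lbd x {t_1}$, $\letrec f x {t_1}$ or $\unitary{t_1}$, so the application and shape clauses (the only ones invoking an equivalence Kronecker symbol) never arise, and every clause reduces to a product of syntactic Kronecker symbols; for $c(\overrightarrow{v})$ the arguments are themselves pure values, so the induction hypothesis gives $\qt{s_i}{v_i} = \kron{s_i}{v_i}$ and hence $\qt s t = \kron s t$. I expect the main work to be bookkeeping: matching each clause of \autoref{def:quantity} to exactly the right production of the equivalence-context grammar, noting in particular that the $\lambda$, $\textletrec$ and $\textunitary$ positions are \emph{not} equivalence contexts, which is precisely why the definition uses a syntactic Kronecker symbol there rather than an equivalence one. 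Using \autoref{lem:quantity-equiv} for the easy direction is what lets me avoid proving any inversion lemma for $\equiv$ on pure terms.
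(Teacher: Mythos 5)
Your proposal is correct and follows essentially the same route as the paper: use Lemma~\ref{lem:quantity-one} and Lemma~\ref{lem:quantity-equiv} for the direction $s \equiv t \Rightarrow \qt s t = 1$, prove $\qt s t = 1 \Rightarrow s \equiv t$ by induction on $t$ via equivalence contexts, and handle the value case by a separate direct induction. Your explicit observation that $\qt s t \in \{0,1\}$ for pure arguments (so the two implications suffice) is a step the paper leaves implicit, and making it explicit is a welcome improvement rather than a deviation.
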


\begin{proof}
  Note by \autoref{lem:quantity-one}, $\qt s s = 1$.
  If $\qt s t = 0$, then $s \not\equiv t$, else
  \autoref{lem:quantity-equiv} would fail.
  Now, one can check by induction on $t$ that if $\qt s t = 1$, then
  $s \equiv t$.
  This is done directly, mainly proving $\equiv$ using $\equivcontext$.
  The second part of the lemma is also obtained directly by induction
  of $t$, by looking at the definition of the quantity. \qedhere
\end{proof}

\begin{restatable}{lemma}{canonical}
  \label{lem:canonical}
  Let $t$ be a term of the language. Suppose $t \not\equiv 0 \cdot t'$.
  Then $t$ has a canonical form $\sum_{i=1}^n \alpha_i \cdot t_i$,
  that is unique up to reordering and equivalence on the $t_i$.
\end{restatable}

\begin{proof}
  Let us prove existence by induction on $t$.

  \begin{itemize}
    \item If $t = x, \zket, \oket, \lbd x s, \letrec f x s, s_1 s_2$ or $\shape
      s$, then $t$ is directly pure and $t \equiv 1 \cdot t$, which is a
      canonical form.
    \item Suppose $t = c(t_1, \dots, t_n)$. If we have any $t_i
      \equiv 0 \cdot t_i'$, then $t \equiv 0 \cdot c(t_1, \dots,
      t_i', \dots, t_n)$.
      Therefore, we can apply the induction hypothesis on each $t_i$,
      meaning we have $t_i \equiv \sum_{j_i = 1}^{n_i} \alpha_{j_i}
      \cdot t_{j_i}$.
      By developing $t$ under $\equiv$, we have $t \equiv
      \sum_{j_1=1}^{n_1} \dots \sum_{j_n=1}^{n_n} \alpha_{j_1} \dots
      \alpha_{j_n} \cdot c(t_{j_1}, \dots, t_{j_n})$.
      This can be seen as one big sum $\sum_{k=1}^{n_1 \dots n_k}
      \alpha_k \cdot c(t_1^k, \dots, t_n^k)$, where $\alpha_k =
      \alpha_{j_1} \dots \alpha_{j_n}$.
      In particular, $\alpha_k \neq 0$ as each $\alpha_{j_i} \neq 0$;
      and if $c(t_1^k, \dots, t_n^k) \equiv c(t_1^{k'}, \dots,
      t_n^{k'})$, then by \autoref{lem:quantity-equiv},
      $\qt {c(t_1^k, \dots, t_n^k)} {c(t_1^{k'}, \dots, t_n^{k'})} =
      1$ as they are both pure,
      but by analyzing the formula, this implies $\qt {t_i^k}
      {t_i^{k'}} = 1$, thus by \autoref{lem:quantity-equiv}, $t_i^k
      \equiv t_i^{k'}$.
      By induction hypothesis, this is not possible for all $1 \leq i
      \leq n$, therefore terms are pairwise not equivalent, which concludes.
    \item The same process can be done for $t = \qcasedefault{t'}$ or $t
      = \match t \dots$, as they have an identical behaviour to a
      constructor with one slot.
    \item Finally, suppose $t = \sum_{i=1}^n \alpha_i \cdot t_i$.
      Any $t_i \equiv 0 \cdot t_i'$ will be removed from the sum
      through $t + 0 \cdot t' \equiv t$, then we can apply the i.h.
      on the remaining $t_i$.
      Finally, using distributivity, we can obtain a big sum
      $\sum_{k=1}^n \beta_k \cdot s_k$, where $s_k$ are pure terms.
      We can group $s_k$ that are equivalent, then remove any
      potential term with a $0$ phase, and the term obtained
      satisfies the canonical form definition.
      Note that as $t \not\equiv 0 \cdot t'$, the sum obtained is
      actually well defined, in particular there is at least one term.
  \end{itemize}

  Unicity comes from our quantity function defined above.
  Suppose $t$ has two canonical forms $t \equiv \sum_{i=1}^n \alpha_i
  \cdot t_i \equiv \sum_{j=1}^m \beta_j \cdot s_j$.
  As all $t_i$ are pure, for any $1 \leq i_0 \leq n$, $\qt {t_{i_0}}
  {\sum_{i=1}^n \alpha_i \cdot t_i} = \sum_{i=1}^n \alpha_i \qt{t_{i_0}} {t_i}$.
  By \autoref{lem:quantity-pure}, $\qt{t_{i_0}} {t_i} = \kroneq
  {t_{i_0}} {t_i}$, however the $t_i$ are not equivalent pairwise,
  thus $\qt {t_{i_0}} {\sum_{i=1}^n \alpha_i \cdot t_i} = \alpha_{i_0} \neq 0$.
  However, by the same argument, $\qt{t_{i_0}}{\sum_{j=1}^m \beta_j
  \cdot s_j} = \sum_{j=1}^m \beta_j \qt{t_{i_0}} {s_j}$.
  If there is no $s_j \equiv t_{i_0}$, then
  \autoref{lem:quantity-equiv} fails, thus there is at least one;
  and there cannot be more than one, else, $s_j \equiv t_{i_0} \equiv
  s_{j'}$, thus by transitivity $s_j \equiv s_{j'}$, which is not
  possible as this is a canonical form.
  Therefore, there exists $j_0$ such that $\qt{t_{i_0}}{\sum_{j=1}^m
  \beta_j \cdot s_j} = \beta_{j_0}$;
  by \autoref{lem:quantity-equiv}, we obtain $\alpha_{i_0} = \beta_{j_0}$.
  Thus, any $t_i$ has a unique and distinct (as they are not
  equivalent) match in $s_j$, therefore $n \leq m$.
  The same process can be done the other way around, so $m \leq n$,
  thus $ m = n$.
  So each sum contains the same number of elements, such that there
  is a map from $1, \dots, n$ to $1, \dots, n$ that associates $i$
  with $j$ such that $t_i \equiv s_j$ and $\alpha_i = \beta_j$.
  Therefore, both sums are equal up to an equivalence on $t_i$ and
  permutation of the $t_i$. \qedhere
\end{proof}

\begin{lemma}\label{lem:canonical-shape}
  Let $t$ be a well-typed term, suppose that $t$ terminates
  and has a canonical form $\sum_{i=1}^n \alpha_i \cdot s_i$. Then
  any element has the same
  shape, meaning $\shape{s_i} \sreduces v_i$,
  $\shape{s_j} \sreduces v_j$, and $v_i = v_j$ for any $i, j$.
\end{lemma}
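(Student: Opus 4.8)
The plan is to prove the statement by first reducing it to the case where the term is a value, and then exploiting the one structural fact that makes superpositions typable at all: the orthogonality premise of the rule (sup) is, by its very definition (\autoref{def:orthogonality}), a requirement that the summands have equal shapes. So orthogonality is the only genuinely new ingredient; everything else is bookkeeping. First I would reduce to values. Since $t$ terminates and $\shape{\diamond}$ is an evaluation context, reductions may be performed underneath $\shape{}$; hence for each pure term $s_i$ occurring in the canonical form of $t$ we have $\shape{s_i} \sreduces \shape{u_i}$, where $u_i$ is the normal form of $s_i$ (which terminates by (Can), since $t$ does). By \autoref{lem:confluence} the normal form $u$ of $t$ is unique up to $\equiv$, and $t \equiv \sum_i \alpha_i s_i \sreduces \sum_i \alpha_i u_i \equiv u$ via (Can). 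Writing the canonical form of the \emph{value} $u$ as $\sum_j \beta_j w_j$ with the $w_j$ \emph{pure values} (\autoref{lem:canonical}), each pure-value piece of each $u_i$ is $\equiv$ to some $w_j$, and shapes are invariant under $\equiv$ on pure values. Thus it suffices to prove that all the $w_j$ have a common shape: the shape $v_i$ of each $s_i$ is then obtained from $u_i$ by a final (Shape$_s$) step and equals that common value. Throughout I use \autoref{lem:subred} and \autoref{lem:canonical-typed} to keep $u$ and every $w_j$ well-typed of type $T$.

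For the value case I would argue by induction on the typing derivation of $u = \sum_j \beta_j w_j$. The crucial simplification is that each $w_j$ is \emph{pure}, so its shape is computed purely structurally by (Shape$_0$), (Shape$_1$), (Shape$_c$), with no internal superposition on which (Shape$_s$) would have to make a choice. If there is only one summand the claim is vacuous. Otherwise $u$ is genuinely a superposition, so its type is a quantum type $\qvar$; peeling off the (equiv) rule, the last non-(equiv) rule has a value subject $u'' \equiv u$, which (since \textqcase, \textmatch, and applications are never equivalent to values) can only be (sup) or (cons). In the (sup) case, $u'' = \sum_i \alpha_i t_i$ with each $t_i$ a value of type $\qvar$ and $t_i \perp t_{i'}$; by \autoref{def:orthogonality} the reductions of $\shape{t_i}$ and $\shape{t_{i'}}$ agree, and by the induction hypothesis applied to each strictly smaller component $t_i$ all pure-value pieces of $t_i$ share one shape — namely the value to which $\shape{t_i}$ reduces. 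Combining, all pieces of all $t_i$ share a single shape, and these pieces are exactly the $w_j$ (\autoref{lem:canonical}). In the (cons) case, $u'' = c(t_1,\dots,t_n)$ with some $t_k$ a superposition; the induction hypothesis gives a common shape $V_k$ for the pieces of each $t_k$, and every piece of $u''$ has the form $c(p_1,\dots,p_n)$ with shape $\tilde c(V_1,\dots,V_n)$ by (Shape$_c$), hence again uniform.

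The main obstacle is the mismatch between the \emph{syntactic} canonical form — obtained by distributing $\sum$ outward through constructors, \textqcase, and \textmatch{} via $\equiv$ — and the \emph{reduction-defined} notion of shape: a component may itself reduce to a superposition, and the rule (Shape$_s$), which selects the first summand, is only legitimate because of the uniformity we are trying to establish. Reducing to pure values at the outset is precisely what breaks this apparent circularity, since pure values carry no internal superposition; the residual circularity at nested superpositions is then resolved by the induction on the (sup)- and (cons)-subcomponents, with orthogonality supplying the single new equality at each level. A secondary technical nuisance is that $t$ need not be closed: in that case \autoref{def:orthogonality} must be used through its quantification over all context substitutions $\sigma$, so every reduction and shape equality above should be read after applying an arbitrary such $\sigma$, which leaves the structure of the argument unchanged.
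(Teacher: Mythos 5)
Your central idea --- that the only substantive input is the first clause of \autoref{def:orthogonality}, invoked at each (sup)/(qcase) premise, everything else being bookkeeping --- is exactly what the paper's (very terse) proof rests on, and your induction on the typing of the value $u$ reproduces the paper's induction on typing restricted to the cases that survive normalization. However, your preliminary reduction-to-values step contains a genuine gap: the claim that ``each pure-value piece of each $u_i$ is $\equiv$ to some $w_j$'' is false in general, because pieces of distinct $u_i$ may cancel in the sum $\sum_i \alpha_i \cdot u_i$ and therefore never appear in the canonical form of $u$. Concretely, take $t = \frac{1}{\sqrt 2}\cdot(\mtt{Had}\,\zket) - \frac{1}{\sqrt 2}\cdot(\mtt{Had}\,\oket)$: this is well typed (the two applications are orthogonal, since they have shape $\vunit$ and reduce to $\ket{+}$ and $\ket{-}$), it is already in canonical form with $s_1 = \mtt{Had}\,\zket$ and $s_2 = \mtt{Had}\,\oket$, and its normal form is $u \equiv \oket$, whose canonical form has the single piece $w_1 = \oket$; yet the piece $\zket$ of $u_1 = \ket{+}$ is equivalent to no $w_j$. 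In this example the lemma's conclusion holds trivially (every qubit has shape $\vunit$), but your transfer step --- deducing the common shape of the pieces of the $u_i$ from the common shape of the $w_j$, and then extracting a single value $v_i$ via a final (Shape$_s$) step --- is broken.

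The gap is repairable, but only by reintroducing the structure your normalization discards. Applying your value-case induction to each $u_i$ separately yields only a per-summand shape $V_i$; to identify the $V_i$ across distinct $i$ you must either (a) observe that any two pieces that cancel against one another are $\equiv$, hence share a shape, and chase this through the components of the ``shares a cancelling piece'' relation together with the surviving pieces, or (b) do what the paper does and induct directly on the typing derivation of $t$ itself, following the construction of its canonical form in \autoref{lem:canonical}: there every superposition among the $s_i$ is traceable to a (sup) or (qcase) premise whose orthogonality hypothesis hands you the equality of shapes \emph{before} any reduction takes place, so cancellation never intervenes. Option (b) is the intended argument and is shorter; the detour through normal forms is what creates the difficulty it then cannot resolve.
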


\begin{proof}
  By induction on the typing, and the constructed canonical form. Few
  points to note:

  - for (qcase), as $t_0 \perp t_1$, in particular they have the same shape,
  thus any $s_i$ will reduce to a superposition of $t_0, t_1$;

  - for (match), even if the inside argument is the superposition, by
  induction hypothesis,
  all elements have the same shape, thus in particular reduce to the same $t_i$.

  - For (sup), $t_i \equiv t_j$, thus they have the same shape. \qedhere
\end{proof}

Another process that we will do in proofs is completing canonical forms.
Given two terms $s, t$ with canonical forms $\sum_{i=1}^n \alpha_i \cdot s_i$
and $\sum_{j=1}^m \beta_j \cdot t_j$, one can always create the set $S$,
that contains all the $s_i$, $t_j$ such that they are pairwise non equivalent.
Then, by adding $0 \cdot t'$, we can write each canonical form as
$\sum_{k=1}^l \alpha_l \cdot w_l$ and $\sum_{k=1}^l \beta_l \cdot w_l$.
This is not a canonical form as some phases are zero, but it is equivalent to
the first canonical form, thus to $s, t$; and such shape is actually
close enough
to keep most existing results on the canonical form, such as normalization.

The following lemma abuses the operational semantics, by supposing that we
have a rule allowing to reduce any term $\sum_{i=1}^n 0 \cdot t_i$ to
$\sum_{i=1}^n 0 \cdot t_i'$ when $t_i \zoreduces t_i'$. Our type system
can easily be expanded to satisfy the properties; in fact, all the following
results, \ie{} subject reduction and confluence are proven by taking this case
into account. Furthermore, we will later prove that any well-typed term
cannot be expressed as so, thus this rule is only necessary for the proofs,
and we have thus chosen to omit it.

\begin{lemma}
  \label{lem:reduces-sum}
  Let $t = \sum_{i=1}^n \alpha_i \cdot t_i$, suppose $t_i \sreduces v_i$.
  Then $t \sreduces \sum_{i=1}^n \alpha_i \cdot v_i$.
\end{lemma}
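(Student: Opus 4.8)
The plan is to proceed by induction on $k = \max_{1 \le i \le n} k_i$, where $k_i$ is the length of a fixed reduction witnessing $t_i \sreduces v_i$, i.e.\ $t_i \mostreduces{k_i} v_i$. The guiding idea is to advance every summand by one $\zoreduces$-step at a time, firing all the reducible components simultaneously through the rule (Can) while leaving untouched those components that have already reached a value (the second disjunct of $\zoreduces$, namely $t' = t$ when $t$ is a normal form). This synchronisation is exactly what lets summands with different reduction lengths be handled in lockstep.

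For the base case $k = 0$, every component reaches its value in zero steps, i.e.\ $t_i \equiv v_i$, so $t = \sum_{i=1}^n \alpha_i \cdot t_i \equiv \sum_{i=1}^n \alpha_i \cdot v_i$; since $\sreduces$ is taken modulo $\equiv$, reflexivity gives $t \sreduces \sum_{i=1}^n \alpha_i \cdot v_i$ directly. For the inductive step with $k \ge 1$, I would first bring $t$ into a canonical form using \autoref{lem:canonical} (unless $t \equiv 0 \cdot t'$, in which case the additional zero-amplitude rule granted to this lemma applies instead). Writing this canonical form as $\sum_l \beta_l \cdot w_l$ with the $w_l$ pairwise non-equivalent pure terms and $\beta_l \neq 0$, the premise of (Can) is met: the sum lies in $\canonical \setminus \valueset$ because some $t_i$ is not yet a value when $k \ge 1$, so at least one $w_l$ still reduces. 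Applying (Can) with $w_l \zoreduces w_l'$ yields $t \reduces \sum_l \beta_l \cdot w_l'$, and pulling this step back through the equivalence used to canonicalise $t$ produces $t \reduces t'$ with $t' \equiv \sum_{i=1}^n \alpha_i \cdot t_i'$ and $t_i \zoreduces t_i'$ for each $i$. Consequently each $t_i'$ satisfies $t_i' \sreduces v_i$ in at most $k-1$ steps, so the induction hypothesis gives $t' \sreduces \sum_{i=1}^n \alpha_i \cdot v_i$; composing with $t \reduces t'$ closes the step.

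The delicate point, and the main obstacle, is the bookkeeping between the per-summand indexing $\sum_i \alpha_i \cdot t_i$ of the statement and the canonical, pairwise-distinct indexing $\sum_l \beta_l \cdot w_l$ forced by the premise of (Can): distinct summands $t_i, t_j$ may become equivalent along the way (and thus merge, with amplitudes adding) or acquire a zero amplitude. I would control this with the \emph{completing canonical forms} construction described just above the lemma, choosing one common set of pairwise non-equivalent pure terms and allowing zero coefficients so that every intermediate form shares a single index set, together with the zero-amplitude rule, which keeps (Can)-style progress available even once a branch has collapsed to amplitude $0$. Finally, uniqueness of normal forms up to $\equiv$ (a consequence of confluence, \autoref{lem:confluence}) ensures that whenever two components merge their limits agree, $v_i \equiv v_j$, so the merging is consistent with the target $\sum_{i=1}^n \alpha_i \cdot v_i$ read modulo $\equiv$.
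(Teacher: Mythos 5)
Your proposal is correct and follows essentially the same route as the paper's proof: rewrite the sum over a common completed canonical form with pairwise non-equivalent pure components, fire one parallel $\zoreduces$-step via (Can), pull the result back to the original indexing as $\sum_i \alpha_i \cdot t_i'$ with $t_i \zoreduces t_i'$, and repeat until all components are values. The only cosmetic difference is that you organize the repetition as an explicit induction on $\max_i k_i$, where the paper simply chains the one-step argument and appeals to termination of the $t_i$.
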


\begin{proof}
  For each $t_i$ such that $t_i \equiv 0 \cdot s$,
  or such that $\alpha_i = 0$, we can remove it via (equiv),
  thus $t \equiv \sum_{k=1}^l \alpha_k \cdot t_k$.
  By writing each canonical form, up to completion, as
  $t_k \equiv \sum_{j=1}^m \beta_{k,j} \cdot s_j$,
  $s_j \zoreduces s_j'$
  and $\sum_{k=1}^l \alpha_k \cdot \beta_{k,j} = \gamma_j$:

  \[
    t \equiv \sum_{j=1}^m \gamma_j \cdot s_j \equiv \sum_{j, \gamma_j
    \neq 0} \gamma_j \cdot s_j
    \reduces \sum_{j, \gamma_j \neq 0} \gamma_j \cdot s_j' \equiv
    \sum_{j=1}^m \gamma_j \cdot s_j' \equiv \sum_{k=1}^n \alpha_k \cdot t_k',
  \]

  as the $s_j$ are pairwise not equivalent, either all $\gamma_j = 0$, or
  it is a canonical form. If the obtained form is a value,
  then we can stop here, and the result will be correct; even if $t_i$ is not
  a value, the non-value parts will still be canceled via (Equiv) at the
  end, as the reduction is deterministic.
  Else, we can appy (Can), and we can replace $0 \cdot s_j$ by $0 \cdot s_j'$.
  Furthermore, we can add back the removed terms from the beginning, adding
  back $t_i'$ instead of $t_i$, to get that
  $\sum_{k=1}^l \alpha_k \cdot t_k' \equiv \sum_{i=1}^n \alpha_i \cdot t_i'$.
  We conclude by (Equiv), and chaining this multiple times;
  such process terminates as the $t_i$ do terminate. \qedhere
\end{proof}

\subsection{Proofs of confluence}

\begin{lemma}[Equivalence preservation]
  \label{lem:confluence-equiv}
  Let $s_1, s_2$ be two well-typed terms.
  Suppose $s_i \reduces t_i$ for $i = 1, 2$.
  Then $t_1 \equiv t_2$.
\end{lemma}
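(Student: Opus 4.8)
The plan is to read the statement with its intended hypothesis $s_1 \equiv s_2$ (two \emph{equivalent} well-typed terms) and to establish the underlying determinism-up-to-$\equiv$ property: if a single term $s$ satisfies $s \reduces r_1$ and $s \reduces r_2$, then $r_1 \equiv r_2$. The stated lemma then follows at once, since from $s_1 \equiv s_2$ and $s_2 \reduces t_2$ the rule (Equiv) gives $s_1 \reduces t_2$ (instantiate its three premises by $s_1 \equiv s_2$, $s_2 \reduces t_2$, $t_2 \equiv t_2$), so that $s_1$ reduces both to $t_1$ and to $t_2$, and determinism yields $t_1 \equiv t_2$.

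To prove determinism, I would first absorb all top-level uses of (Equiv) into the surrounding equivalence: using transitivity of $\equiv$ and the fact that (Equiv) only pre- and post-composes a reduction with $\equiv$, any reduction $s \reduces r$ can be presented as $s \equiv s'$, with $s' \reduces s''$ fired by one of the \emph{structural} rules, and $s'' \equiv r$. It then suffices to show that whenever $s \equiv \hat s$ and each of $s, \hat s$ fires a structural rule to $s''$ and $\hat s''$ respectively, one has $s'' \equiv \hat s''$. This I would prove by structural induction on $s$, splitting on whether $s$ is a superposition.

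If $s$ is not a superposition, the call-by-value evaluation contexts $\evalcontext$ single out a unique redex position and each structural rule has a syntactically disjoint left-hand side, so at most one rule can fire; the only remaining freedom is the choice of $\equiv$-representative $\hat s$, which can merely commute, reassociate, or push scalars through the constructs \emph{surrounding} the active redex. Rule by rule I would check that relocating the redex through an equivalence context $\equivcontext$ commutes with firing the rule, invoking the congruence $\equivcontext[t] \equiv \equivcontext[t']$ of Table~\ref{tab:equiv}. If $s$ is a superposition, I would pass to its canonical form, unique up to reordering and $\equiv$ by Lemma~\ref{lem:canonical} (and Lemma~\ref{lem:canonical-typed} in the typed case), and use that (Can) reduces every summand in lockstep through $\zoreduces$; the induction hypothesis applied to each summand, together with Lemma~\ref{lem:canonical-shape} to keep the summands compatible and Lemma~\ref{lem:reduces-sum} to manage the null-amplitude summands created when completing canonical forms, delivers equivalent results.

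The main obstacle is exactly this interaction between (Equiv) and the structural rules: two reductions of $s$ may pick genuinely different $\equiv$-representatives before firing --- for instance distributing a scalar over, or reassociating a sum that contains, the active redex --- and I must certify that the firing positions still line up after reduction. The uniqueness of canonical forms (Lemma~\ref{lem:canonical}) is the workhorse for the superposition case, collapsing every representative choice into a single $\equiv$-class, while for non-superposition terms the crux is the standard observation that $\equiv$ neither creates nor destroys a redex at an evaluation position but only rearranges its surrounding context. Where amplitudes must be matched precisely I would lean on the quantity operator $\theta_p$ (Definition~\ref{def:quantity}) and its $\equiv$-invariance (Lemma~\ref{lem:quantity-equiv}).
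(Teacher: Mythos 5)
Your reading of the statement is right: the lemma only makes sense with the implicit hypothesis $s_1 \equiv s_2$, and the paper's own proof uses exactly that (it invokes transitivity of $\equiv$ between $s_1$ and $s_2$ throughout). Your overall strategy is essentially the paper's: reduce everything to comparing two $\equiv$-related terms that each fire a non-(Equiv) rule, handle the pure case by arguing that equivalent pure terms can only fire the same structural rule at the same evaluation position (the paper does this via the quantity operator and Lemma~\ref{lem:quantity-equiv}/\ref{lem:quantity-pure}, as you also propose), and handle the superposition case via uniqueness of canonical forms (Lemma~\ref{lem:canonical}) together with Lemma~\ref{lem:canonical-shape} for (Shape$_s$). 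Your preliminary reduction to single-term determinism via one application of (Equiv) is a slightly cleaner packaging than the paper's direct two-term argument, but it buys nothing essential: the real work is identical.

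The one place where your proposal would get stuck is the induction measure. You propose structural induction on $s$, but in the (Can) case the premises are full reductions $r_i \zoreduces r_i'$ of the summands, and after you absorb the (Equiv) uses inside those subderivations you must apply the induction hypothesis to $\equiv$-representatives $r_i''$ of the summands that are \emph{not} subterms of $s$ (equivalence can rearrange and even enlarge a term, e.g.\ via $1 \cdot t \equiv t$ and redistribution of scalars), so the structural IH is not available for them; the same problem recurs for nested ($\evalcontext$)/(Can)/(Equiv) derivations. The paper avoids this entirely by inducting on the sum of the depths of the two derivation trees of $s_i \reduces t_i$: the premise derivations of (Can), ($\evalcontext$) and (Equiv) are strictly smaller derivations no matter which $\equiv$-representatives they route through. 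Switching your induction to that measure (or to derivation height) repairs the argument; with structural induction as stated, the (Can) case does not close.
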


\begin{proof}
  We prove this result by induction of the sum
  of the depth of the derivation trees of $s_i \reduces t_i$.

  Consider the base case where both derivation trees have the
  same root. Apart from (Can) and (Shape$_s$), having the
  same root implies that $s_1 = s_2$, and as reduction is
  deterministic, $t_1 = t_2$, thus $t_1 \equiv t_2$.
  In the case of (Can), by \autoref{lem:canonical}, if one has
  a canonical form, then by $\equiv$ the second also has a canonical
  form, which is identical up to commutation and equivalence on the terms.
  Thus, $s_1 = \sum_{j=1}^n \alpha_i \cdot r_i$, and
  $s_2 = \sum_{j=1}^n \beta_i \cdot q_i$, where both sums are identical,
  up to commutation and each $q_i$ is equivalent to a $r_j$.
  We can apply the induction hypothesis on $q_i, r_j$, to get that the
  reduced terms are also equivalent, and up to commuting the terms back,
  we obtain the equivalence at the end. If not, both terms have
  $0$ phase, and equivalence is direct from this.
  The same process can be applied for (Shape$_s$), but inside the
  \textshape{} construct, by \autoref{lem:canonical-shape}, as it is a value.

  Now, suppose that the trees do not have the same root.
  First, do not consider any (Shape) rule. Again, apart from (Can) and (Equiv),
  all left terms are pure, and by \autoref{lem:quantity-equiv},
  no distinct rule can yield two equivalent terms.
  Suppose the first tree has a (Equiv) root, meaning:

  \[
    \begin{prooftree}
      \hypo{s_1 \equiv q_1}
      \hypo{q_1 \reduces r_1}
      \hypo{r_1 \equiv t_1}
      \infer3[(Equiv)]{s_1 \reduces t_1}
    \end{prooftree}
  \]

  By transitivity, $q_1 \equiv s_2$, thus we can apply the induction hypothesis
  on $q_1 \reduces r_1$ and $s_2 \reduces t_2$ to get $r_1 \equiv t_2$,
  and by transitivity, $t_1 \equiv t_2$.
  Same can be done for the right term.

  Now, suppose the first tree has a (Can) root. As the right tree has neither a
  (Can) or (Equiv) root, it has a root with a a pure term. This implies that
  all phases are not zero, and that $s_1$ is a canonical form. By uniqueness,
  $s_1 = 1 \cdot q_1$, with $q_1 \equiv s_2$, and $t_1 = 1 \cdot r_1$.
  We can apply the induction hypothesis on $q_1 \reduces r_1$,
  $s_2 \reduces t_2$, to get $t_1 \equiv t_2$ by transitivity of (Equiv).
  Same can be done for the right term.

  The same process can be done for the (Shape) rules, but inside the
  \textshape{} construct. \qedhere
\end{proof}

\confluence*

\begin{proof}
  Let us prove semi-confluence, which implies confluence.
  Let $t \reduces t_1$ and $t \sreduces t_2$.
  If $t = t_2$, then result is direct; else $t \reduces t_3 \sreduces t_2$.
  If $t_3 = t_2$, by the above result, $t_1 \equiv t_2$, which concludes.
  Else $t \reduces t_3 \reduces t_4 \sreduces t_2$.
  By the above result, $t_3 \equiv t_1$.
  And as $t_3 \reduces t_4$, then $t \reduces t_4$ by (Equiv).
  Therefore, $t \reduces t_1 \reduces t_4 \sreduces t_2$, and we
  conclude. \qedhere
\end{proof}

\subsection{Proofs for Canonical Forms}

\begin{lemma}\label{lem:quantity-nonzero}
  Let $t$ be a term and $s$ be a pure term.
  Suppose $\qt s t \neq 0$.
  Then $t$ has a unique canonical form.
\end{lemma}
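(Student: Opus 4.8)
The plan is to derive this directly from \autoref{lem:canonical}, whose sole hypothesis is that $t \not\equiv 0 \cdot t'$. Hence it suffices to show that $\qt s t \neq 0$ forces $t \not\equiv 0 \cdot t'$ for every term $t'$; the existence and uniqueness of the canonical form then follow at once from that lemma. I would establish the required implication by contraposition.

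So suppose $t \equiv 0 \cdot t'$ for some term $t'$. Observe that $0 \cdot t'$ is just the one-term superposition $\sum_{i=1}^{1} \alpha_i \cdot t_i$ with $\alpha_1 = 0$, so the superposition clause of \autoref{def:quantity} gives $\qt s {0 \cdot t'} = 0 \cdot \qt s {t'} = 0$. Since $t \equiv 0 \cdot t'$ and the quantity is invariant under $\equiv$ by \autoref{lem:quantity-equiv}, we obtain $\qt s t = \qt s {0 \cdot t'} = 0$, contradicting the assumption $\qt s t \neq 0$.

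Consequently no $t'$ satisfies $t \equiv 0 \cdot t'$, which is precisely the hypothesis of \autoref{lem:canonical}; applying that lemma produces the canonical form $\sum_{i=1}^n \alpha_i \cdot t_i$ of $t$, unique up to commutativity and equivalence on the pure components, as claimed.

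The argument is essentially immediate once the right evaluation of the quantity is spotted, so I do not expect a serious obstacle. The only point requiring a little care is the uniform quantification over $t'$: the hypothesis of \autoref{lem:canonical} must be read as ``$t \not\equiv 0 \cdot t'$ for all $t'$'', and correspondingly the contrapositive must rule out every such $t'$ simultaneously. This is exactly what the computation $\qt s {0 \cdot t'} = 0$ secures, since it holds irrespective of the choice of $t'$.
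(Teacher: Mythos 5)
Your proof is correct and follows essentially the same route as the paper's: both argue by contradiction that $t \equiv 0 \cdot t'$ would force $\qt s t = 0$ via Lemma~\ref{lem:quantity-equiv} and the definition of the quantity, then invoke Lemma~\ref{lem:canonical}. Your added remarks on the superposition clause and the universal quantification over $t'$ are accurate elaborations of the same argument.
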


\begin{proof}
  Suppose $t \equiv 0 \cdot t'$.
  Then $\qt s t = \qt s {0 \cdot t'} = 0$ by
  \autoref{lem:quantity-equiv} and by definition of
  the quantity; this contradicts the initial
  hypothesis. We can then conclude by \autoref{lem:canonical}. \qedhere
\end{proof}

\canonicaltyped*

\begin{proof}
  Let us first prove that for any well-typed $t$, there exists a pure
  term $s$ such that $\qt st \neq 0$;
  we then conclude by \autoref{lem:quantity-nonzero}.
  By induction on $t$:

  \begin{itemize}

    \item If $t$ is pure, then take $s = t$ and result is direct;

    \item If the root is (cons), meaning $t = c(t_1, \dots, t_n)$,
      then by induction hypothesis,
      as $t_i$ is typed, there is a pure term $s_i$ such that $\qt
      {s_i} {t_i} \neq 0$.
      We can take $s = c(s_1, \dots, s_n)$ and conclude.

    \item Same is done for (qcase) and (match).

    \item If the root is (equiv), $t \equiv t'$, we obtain the result
      by induction hypothesis on $t'$ and conclude by
      \autoref{lem:quantity-equiv}.

    \item If the root is (contr), then $t = \sigma(t')$; we can apply
      the induction hypothesis on $t'$ and take $s = \sigma(s')$
      which will preserve $\sigma$.

    \item Finally, if the root is (sup), $t = \sum_{i=1}^n \alpha_i \cdot t_i$.
      By typing, $t_i$ is well-typed, thus by induction hypothesis and
      \autoref{lem:quantity-nonzero}, $t_i$ has a canonical form.
      Furthermore, as $t_i \perp t_j$, each $t_i$ terminates,
      with $t_i \sreduces \sum_{j=1}^{n_i} \beta_{i,j} \cdot w_j^i$,
      being canonical forms. One can complement each canonical
      form to get $t_i \sreduces \sum_{j=1}^m \beta_{i,j} w_j$,
      where the $w_j$ are pure and distinct pairwise.
      We can thus use \autoref{lem:reduces-sum}, and get that
      $t \sreduces \sum_{j=1}^m (\sum_{i=1}^n \alpha_i \beta_{i,j}) \cdot w_j$.
      If we suppose that $t \equiv 0 \cdot t'$, taking $t' = v$,
      by \autoref{thm:confluence}, $0 \cdot v \equiv \sum_{j=1}^m
      (\sum_{i=1}^n \alpha_i \beta_{i,j}) \cdot w_j$.
      By \autoref{lem:quantity-equiv}, $\qt {w_j}{0 \cdot v} = 0 =
      \sum_{i=1}^n \alpha_i \beta_{i,j}$.
      One can check that by definition of $t_i \perp t_j$, the rows
      $(\beta_{i,1}, \dots, \beta_{i,n})$
      are orthogonal, thus this would imply that $\alpha_i = 0$,
      which contradicts
      the hypothesis of (sup). Therefore, $t \not\equiv 0 \cdot t'$,
      by \autoref{lem:canonical}, $t$ has a canonical form
      $\sum_{k=1}^l \gamma_k \cdot u_k$,
      and take $s = u_1$ to conclude.
  \end{itemize}

  Finally, one can remark that by construction of the canonical form in
  \autoref{lem:canonical}, along with an induction on typing, the pure
  terms obtained are well-typed; such result is possible as the canonical
  form is unique, thus the one that we exhibited will match.
\end{proof}

\subsection{Proofs of progress}

\begin{lemma}\label{lem:equiv-pure}
  Let $t$ be a well-typed pure term of type $\bkvar$, and $v$ be a value.
  We suppose that $t \equiv v$, then $t$ is a value.
\end{lemma}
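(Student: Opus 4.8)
The plan is to reduce the statement to a comparison of the pure term $t$ with a single \emph{pure value}, and then argue by the head symbol of $t$. First I would record the consequences of the quantity machinery. Since $t$ is pure, \autoref{lem:quantity-one} gives $\qt t t = 1$, and since $t \equiv v$, \autoref{lem:quantity-equiv} yields $\qt t v = 1$; in particular $v \not\equiv 0 \cdot v'$, so by \autoref{lem:canonical} the value $v$ admits a canonical form $\sum_{i=1}^n \alpha_i \cdot v_i$ with the $v_i$ pairwise non-equivalent. A short induction on the structure of the value $v$ (mirroring the construction in \autoref{lem:canonical}) shows that the pure components $v_i$ of a value are themselves \emph{pure values}.

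Next I would isolate a single pure value equivalent to $t$. Writing $v \equiv \sum_i \alpha_i \cdot v_i$ and applying \autoref{lem:quantity-equiv} and then \autoref{lem:quantity-pure} (both $t$ and each $v_i$ are pure), we get
\[
1 = \qt t v = \sum_{i=1}^n \alpha_i \cdot \qt t {v_i} = \sum_{i=1}^n \alpha_i \cdot \kroneq t {v_i}.
\]
As the $v_i$ are pairwise non-equivalent, at most one of them is $\equiv t$, so the sum equals $1$ only if exactly one index $i_0$ satisfies $t \equiv v_{i_0}$ (with $\alpha_{i_0} = 1$). Thus $t$ is equivalent to the pure value $v_{i_0}$.

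I would then conclude by induction on the structure of the pure term $t$, using $t \equiv v_{i_0}$ together with \autoref{lem:quantity-pure}, which gives $\qt t {v_{i_0}} = \kroneq t {v_{i_0}} = 1$. The forms $\lbd x t$, $\letrec f x t$, $\unitary t$ are excluded since $t$ has the basic type $\bkvar$, not a functional one. If $t$ were a $\qcasedefault p$, a $\matchdefaultcollapsed p$, an application $t_1 t_2$, or a $\shape p$, then—because a pure value $v_{i_0}$ never carries any of these head symbols—unfolding \autoref{def:quantity} on the second argument $v_{i_0}$ forces $\qt t {v_{i_0}} = 0$, contradicting $\qt t {v_{i_0}} = 1$; hence these heads are impossible. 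The surviving cases are $t = x, \zket, \oket$, which are values, and $t = c(\overrightarrow p)$. In the constructor case $\qt t {v_{i_0}} = 1$ forces (again by \autoref{def:quantity}) $v_{i_0} = c(\overrightarrow w)$ with the same constructor $c$, and then $\prod_j \qt{p_j}{w_j} = \prod_j \kroneq{p_j}{w_j} = 1$, so $p_j \equiv w_j$ for every $j$. Each $w_j$ is a pure value and, by the (cons) rule, each $p_j$ is a pure term of basic type $\bkvar_j$; the induction hypothesis then gives that each $p_j$ is a value, so $t = c(\overrightarrow p)$ is a value.

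The main obstacle is the passage from $v$ to a single pure value: because $\equiv$ can move sums across $\qcase$, $\match$ and constructors, one cannot compare the head symbols of $t$ and $v$ directly. The quantity function, the canonical form of $v$, and the pairwise non-equivalence of its components (\autoref{lem:canonical}) are exactly what collapse $v$ to the single pure value $v_{i_0}$ before the head analysis. The two routine but essential computations are verifying that the canonical components of a value are themselves values, and checking via \autoref{def:quantity} that $\qt t {v_{i_0}}$ vanishes whenever $t$ has an eliminator head ($\qcase$, $\match$, application, or $\shape$).
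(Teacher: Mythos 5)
Your proposal is correct and follows essentially the same route as the paper, whose entire proof is the observation that $\qt t v = \qt t t = 1$ followed by ``by definition of the quantity, it is easy to check that $\qt t v = 1$ implies that $t$ is a value.'' You have simply filled in that ``easy to check'' step --- collapsing $v$ to a single pure value via the canonical form and then doing the head analysis on \autoref{def:quantity} --- which is a faithful (and more detailed) elaboration of the intended argument.
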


\begin{proof}
  As $t$ is pure, $ \qt t v = \qt t t = 1$.
  By definition of the quantity, it is easy to check
  that $\qt t v = 1$ implies that $t$ is a value too,
  thus one can conclude. \qedhere
\end{proof}

\begin{lemma}\label{lem:canonical-higher}
  Let $t$ be a well-typed term of type $T_1 \anyfunc T_2$.
  Then its canonical form is $1 \cdot t'$.
\end{lemma}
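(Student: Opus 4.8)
The plan is to prove the statement by induction on the typing derivation of $\typing \Gamma \Delta t {T_1 \anyfunc T_2}$, after reformulating the claim as: \emph{a term of functional type is equivalent to a pure term}. This reformulation suffices, since a pure term $p$ satisfies $\qt p p = 1 \neq 0$ (\autoref{lem:quantity-one}), so by \autoref{lem:canonical} it has a unique canonical form, and $1 \cdot p \equiv p$ already meets the definition of a canonical form; hence the canonical form of $p$ is exactly $1 \cdot p$. Combined with \autoref{lem:canonical-typed}, showing $t \equiv t'$ for some pure $t'$ therefore yields that the canonical form of $t$ is $1 \cdot t'$.

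The engine of the proof is a case analysis on the last rule of the derivation, and the key observation is that very few rules can conclude a functional type. The rules (cons) and (match) conclude only a basic type $\bkvar$; (sup) and (app$_u$) conclude only a quantum type $\qvar$; and, crucially, (qcase) can only conclude a basic type as well, since its premise $t_0 \perp t_1$ requires $t_0,t_1$ to have a basic type for orthogonality to even be defined (\autoref{def:orthogonality}). Consequently, none of the superposition-introducing rules can produce a functional type, which is exactly what makes the statement true.

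The remaining rules whose conclusion can carry a functional type are (ax), (ax$_c$), (abs), (abs$_c$), (rec), (unit), (app), and (app$_c$), together with the structural rules (contr) and (equiv). In each of the first eight, the subject of the judgment is \emph{syntactically} a pure term: variables, $\lambda$-abstractions, $\textletrec$, $\unitary{\cdot}$, and---importantly---applications $t_1 t_2$ all lie in the pure grammar regardless of their subterms, and no equivalence rule distributes a sum out of an application or an abstraction. Hence in these cases $t$ is already pure and the claim is immediate. For (contr) we have $t = \sigma(t_0)$ with $\sigma$ a variable renaming; the induction hypothesis gives $t_0 \equiv p$ pure, and renaming preserves purity, so $t \equiv \sigma(p)$ is pure. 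For (equiv) the induction hypothesis applied to the premise produces a pure term equivalent to $t$, and transitivity of $\equiv$ closes the case.

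The main obstacle---and the only genuinely delicate point---is justifying that (qcase) cannot yield a functional type, because this is invisible from the conclusion of the rule alone (which carries an arbitrary $T$). One has to argue through the side condition: since $\perp$ is defined only between terms of basic type, the premise $t_0 \perp t_1$ forces $T$ to be basic, thereby excluding functional types. Once this is pinned down---together with the facts that (match), (cons), (sup), (app$_u$) carry non-functional conclusions and that applications are pure and opaque to $\equiv$---the induction is routine.
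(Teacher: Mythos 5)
Your proof is correct and follows essentially the same route as the paper, whose own argument is a one-line appeal to exactly the facts you spell out: the canonical-form construction of \autoref{lem:canonical} only produces non-singleton sums through (cons), (match), (qcase) and (sup), and the typing rules prevent any of these from concluding a functional type, so a well-typed term of type $T_1 \anyfunc T_2$ is equivalent to a single pure term. Your explicit treatment of the (qcase) case---that the side condition $t_0 \perp t_1$ forces the branch type to be basic because orthogonality (\autoref{def:orthogonality}) is only defined at basic types---is the one genuinely delicate point, and you handle it correctly where the paper leaves it implicit.
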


\begin{proof}
  Direct by the way canonical forms are generated in \autoref{lem:canonical},
  and by typing of the different constructs. \qedhere
\end{proof}

\progress*

\begin{proof}
  We prove this result by induction on the typing of $t$.
  First, let us consider that $t$ is a pure term.

  \begin{itemize}
    \item The rules (ax), (ax$_c$) do not derive a closed term.
    \item The rules (ax$_0$), (ax$_1$) yield directly a value.
    \item If the root is (qcase), then $t = \qcase s {t_0}
      {t_1}$, with $\closedtyping{s}{\qbit}$.
      Either $s \reduces s'$ and $t \reduces \qcase {s'} {t_0}
      {t_1}$ through (\evalcontext);
      or $s \equiv v$, which by \autoref{lem:equiv-pure} indicates
      that $s$ is a pure value, thus $s = \ket i$.
      In any case, we can reduce through (Qcase$_i$) for $i = 0, 1$.
    \item Suppose $t$ is typed with (cons), namely
      $\closedtyping{c(t_1, \dots, t_n)}{\bvar}$.
      This implies $\closedtyping{t_i}{B_i}$, thus we can apply the
      induction hypothesis on $t_i$.
      If each $t_i \equiv v_i$, then $t \equiv c(v_1, \dots, v_n)$
      through \equivcontext, which is a value.
      Else, we take the first $t_i$, starting from the right, such
      that $t_i \reduces t_i'$, and then $t \reduces c(t_1, \dots,
      t_i',\dots, t_n)$ by (\evalcontext).
    \item If the root is (match), then $t = \matchdefault s$,
      with $\closedtyping{s}{\bvar}$.
      Either $s \reduces s'$ and $t \reduces \match {s'} \dots$
      through (\evalcontext);
      or $s \equiv v$, which by \autoref{lem:equiv-pure} indicates
      that $s$ is a pure value, thus $s = \bar c(v_1, \dots, v_m)$.
      As $s$ is of type $B$, then $\bar c \in \cons B$, thus $\bar c
      = c_i$ and $t \reduces t_i$ through (Match).
    \item The rules (abs), (abs$_c$) and (rec) yield directly a value.
    \item Suppose the root of the derivation is (app), meaning $t = t_1 t_2$.
      By typing, $t_1, t_2$ are also closed terms and the induction
      hypothesis can be applied.
      If $t_2 \reduces t_2'$, then $t \reduces t_1t_2'$ by (\evalcontext).
      Else, $t_2 \equiv v_2$.
      If $t_1 \reduces t_1'$, then $t_1v_2 \reduces t_1'v_2$ by
      (\evalcontext), and as $t \equiv t_1v_2$, then $t \reduces
      t_1'v_2$ by (Equiv).
      Else, $t_1 \equiv v_1$.
      By \autoref{lem:canonical-higher}, as $t_1$ has a functional
      type, then its canonical form
      consists of one pure term; as $v_1$ has the same canonical form
      by unicity, this term is a value too.
      Thus, $t_1 \equiv 1 \cdot v \equiv v$.
      Now, the only closed, well-typed values of type $C \nonlinfunc T$ or
      $T \linfunc T'$ are either $\lbd x t'$, and as $(\lbd x t')v_2
      \reduces \sigma(t')$
      via (Lbd), then $t \reduces \sigma(t')$ via (Equiv);
      or $\letrec f x t'$ and as $(\letrec f x t')v_2 \reduces \sigma(t')$
      via (Rec), then $t \reduces \sigma(t')$ via (Equiv).
    \item The rule (sup) does not derive a pure term.
    \item Suppose the root is (shape), meaning $t = \shape s$, with
      $\closedtyping{s}{\bkvar}$.
      Note that $s$ is actually closed because its linear and
      non-linear context are joined in the non-linear context of $t$
      which is empty.
      Then either $s \reduces s'$ and $t \reduces \shape{s'}$ via
      (\evalcontext); or $s \equiv v$.
      Either $v = \ket i$, and $t$ reduces through (Shape$_i$) and (Equiv);
      or $v = c(v_1, \dots, v_n)$ and $t$ reduces through (Shape$_c$)
      and (Equiv);
      or $v$ is a summation, as it is well-typed it has a canonical
      form $w$, and by transitivity $t \equiv w$, thus $t$ reduces
      through (Shape$_s$) and (Equiv).
      Any other value cannot be typed with a type $\bkvar$.
    \item Suppose the root of the derivation is (equiv), namely we
      have $t' \equiv t$.
      By hypothesis, $t$ is closed and thus $t'$ is too.
      Therefore, either $t' \equiv v$ and by transitivity of
      $\equiv$, $t \equiv v$; else $t' \reduces t''$ and then $t
      \reduces t''$ by (Equiv).
    \item The rule (contr) does not derive a closed term.
    \item Finally, the rule (contr) cannot yield a closed term.
  \end{itemize}

  Now, let $t$ be a well-typed closed term; let us consider its
  canonical form $t' = \sum_{i=1}^n \alpha_i \cdot t_i$.
  If $t'$ is a value, then $t$ is equivalent to a value, and we conclude.
  Else, any term $t_i$ verifies $t_i \zoreduces t_i'$ for a given
  $t_i'$, by definition, and thus $t' \reduces \sum_{i=1}^n \alpha_i
  \cdot t_i'$ by (Can), and thus $t \reduces \sum_{i=1}^n \alpha_i
  \cdot t_i'$ by (Equiv), which concludes. \qedhere
\end{proof}

\subsection{Proofs for Subject Reduction}

\begin{lemma}[Substitution lemma]
  The two following properties hold:

  - If $\typing {\Gamma, \vartyping x T'} \Delta t T$, $\typing
  \Gamma \varnothing {t'} T'$ and $\sigma = \{x \subarrow t'\}$, then
  $\typing \Gamma \Delta {\sigma(t)} T$.

  - If $\typing {\Gamma} {\Delta, \vartyping x T'} t T$, $\typing
  \Gamma {\Delta'} {t'} T'$ and $\sigma = \{x \subarrow t'\}$, then
  $\typing \Gamma {\Delta, \Delta'} {\sigma(t)} T$.
\end{lemma}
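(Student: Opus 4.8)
The plan is to prove both items simultaneously by induction on the typing derivation of $t$, with a case analysis on the last rule applied. The two statements are structurally parallel, and the difference between them governs how contexts are threaded through: in the non-linear case the substituted variable $x$ lives in the shared context $\Gamma$ and may occur several times in $t$, whereas in the linear case $x$ lives in $\Delta$ and, by linearity of the context-splitting rules, occurs in exactly one immediate subterm.

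For the base cases (ax), (ax$_c$), (ax$_0$), (ax$_1$): if $t=x$ then $\sigma(t)=t'$ and the conclusion is exactly the typing hypothesis for $t'$ (in either case the rule forces the relevant linear context to match, so nothing extra is required); if $t$ is a different variable or a basis state, $\sigma$ acts trivially and the judgment is unchanged. For the purely structural rules (cons), (app), (app$_c$), (app$_u$), (abs), (abs$_c$), (rec), (match), I would apply the induction hypothesis to the immediate subterms. In the linear case, the unique subterm whose linear context contains $x$ receives the additional linear context $\Delta'$ carried by $t'$, while the sibling subterms are left untouched; in the non-linear case $t'$ is typed with an empty linear context, so it can be substituted into every occurrence of $x$ independently, and \autoref{lem:weakening} lets the shared $\Gamma$ be reused across branches without disturbing any linear resource. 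The binder rules need only the standing Barendregt convention so that $x$ is not captured.

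The delicate cases are those carrying a semantic side condition: (qcase) and (sup), which require $t_0\perp t_1$ (resp. pairwise orthogonality of the summands), and (unit), which requires its argument to be unitary. Here the induction hypothesis for typing does not suffice; I must also show the predicate survives substitution. The key observation is that \autoref{def:orthogonality} quantifies over all closing context substitutions, and any such substitution $\tau$ of the free variables of $\sigma(t_0),\sigma(t_1)$ factors through a context substitution of $t_0,t_1$ — namely the one that additionally maps $x$ to the closed, well-typed value to which $\tau(t')$ reduces. Since shapes and inner products are defined via reduction to canonical form and are invariant under $\equiv$ and under reduction (using \autoref{lem:confluence} and the unicity of canonical forms from \autoref{lem:canonical}), the shape-equality and vanishing-inner-product conditions for $t_0,t_1$ transfer verbatim, yielding $\sigma(t_0)\perp\sigma(t_1)$. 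The same factoring argument applied to \autoref{def:unitary} — whose surjectivity and isometry conditions are likewise stated over all context substitutions — shows that unitarity of $t$ entails unitarity of $\sigma(t)$.

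Finally, (shape) and (contr) require care with boxed variables: for a substitution reaching under $\shape{\cdot}$ I would push $\sigma$ inside, using $\sigma(\shape{t})=\shape{\sigma(t)}$, and apply the induction hypothesis, the boxing of the linear context in the conclusion being preserved because $\sigma$ acts uniformly on boxed and unboxed occurrences, while (contr) merely re-identifies a boxed variable with its linear twin. I expect the predicate-preservation step for orthogonality and unitarity to be the main obstacle, since it is the only point where the argument leaves syntactic bookkeeping and must appeal to the semantic definitions together with confluence; the remaining cases are routine rule-by-rule verifications.
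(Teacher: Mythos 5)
Your proposal is correct and follows essentially the same route as the paper: induction on the typing derivation with a rule-by-rule case analysis, trivial action of $\sigma$ in the axiom and binder cases, and preservation of the orthogonality and unitarity side conditions justified by the fact that their definitions quantify over all closing context substitutions, so any closing substitution of $\sigma(t_i)$ factors through one of $t_i$. You are in fact somewhat more explicit than the paper on that last point, which the paper dispatches with ``by definition of $\perp$''; the rest matches the paper's argument, including the observation that in the linear case $x$ occurs in exactly one subterm's linear context.
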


\begin{proof}
  We prove this by induction on the typing tree of $t$.
  For the first property:
  \begin{itemize}
    \item Suppose the root is (ax), meaning $\typing {\Gamma,
      \vartyping x T'} {\vartyping y T} y T$, then one can check that
      we can also type directly $\typing \Gamma {\vartyping y T} y
      T$, and by definition $\sigma(y) = y$ as $y \notin \subsupp
      \sigma$, which concludes.
    \item Suppose the root is (ax$_c$), then either $t = y$ with $y
      \neq x$, and the same process can be done as above.
      Else, we have $\sigma(t) = \sigma(x) = t'$, which concludes by
      the typing of $t'$.
    \item If the rules are (ax$_0$) or (ax$_1$), then again
      $\sigma(t) = t$ and we can derive the typing of $t$ with
      $\Gamma$ instead of $\Gamma, \vartyping x T'$.
    \item Suppose the root is (qcase), thus $t = \qcasedefault{t'}$.
      By induction hypothesis, $x$ is removed from the context of
      $t', t_0, t_1$, thus we conclude by typing
      $\sigma(t) = \qcase{\sigma(t')}{\sigma(t_0)}{\sigma(t_1)}$.
    \item Suppose the rule is (cons), meaning $\typing {\Gamma,
      \vartyping x T'} {\Delta_i} {t_i} A_i$; by i.h. $\typing \Gamma
      {\Delta_i} {\sigma(t_i)} A_i$, and thus we can type $\typing
      \Gamma {\Delta} {c(\sigma(t_1), \dots, \sigma(t_n))} \bvar$,
      which concludes.
    \item Suppose the rule is (match), the same process can be done
      as for (qcase), by remarking that $x$ cannot be a bound
      variable of a $t_i$, else typing could not happen.
    \item  Suppose the root is (abs), thus we have $\typing {\Gamma,
      \vartyping y T''} {\Delta, \vartyping x T} t T'$; by i.h.
      $\typing \Gamma {\Delta, \vartyping x T} {\sigma(t)} T'$, thus
      by (abs) we can type $\typing \Gamma \Delta {\lbd x \sigma(t)}
      T \linfunc T'$, which concludes as $\lbd x \sigma(t) =
      \sigma(\lbd x t)$ as $x \notin \subsupp \sigma$.
    \item The same process can be done for (abs$_c$): we have
      $\typing {\Gamma, \vartyping y T'} \Delta {\lbd x t} C
      \nonlinfunc T$, thus we can apply the induction hypothesis to
      $\typing {\Gamma, \vartyping x C, \vartyping y T'} \Delta t T$
      as $x \neq y$, which gives $\typing {\Gamma, \vartyping x C}
      \Delta {\sigma(t)} T$, and we can conclude.
    \item Suppose the root is (rec), then the same process as for
      (abs$_c$) can be applied.
    \item Suppose the root is (app), meaning we have $\typing
      {\Gamma, \vartyping x \bkvarb''} \Delta {t_1} {\bkvar \linfunc
      \bkvarb'}$ and $\typing {\Gamma, \vartyping x \bkvarb ''}
      {\Delta'} {t_2} \bkvar$; by i.h. we have $\typing \Gamma \Delta
      {\sigma(t_1)} {\bkvar \linfunc \bkvarb'}$ and $\typing \Gamma
      {\Delta'} {\sigma(t_2)} \bkvar$, and thus by (app) we can type
      $\typing \Gamma \Delta {\sigma(t_1)\sigma(t_2)} \bkvarb'$ which concludes.
    \item The same can be done for (app$_c$).
    \item Suppose the root is (sup), meaning we have $\typing
      {\Gamma, \vartyping x T'} \Delta {t_i} T$.
      By induction hypothesis, we get $\typing \Gamma \Delta
      {\sigma(t_i)} T$, and by definition of $\perp$, $t_i \perp t_j$
      implies $\sigma(t_i) \perp \sigma(t_j)$, and thus we can type
      by (sup) $\typing \Gamma \Delta {\sum_{i=1}^n \alpha_i \cdot
      \sigma(t_i)} T$, which concludes by the action of $\sigma$ on
      superpositions.
    \item Suppose the root of the derivation is (shape), first remark
      that $x$ belongs to the $\Gamma$ part, because it has no marker
      around it; thus we have $\typing {\Gamma, \vartyping x T'}
      \Delta t \bkvar$, thus by i.h. $\typing \Gamma \Delta {\sigma(t)}
      \bkvar$, and we can type by (shape) $\typing {\Gamma,
      \shapemarker{\Delta}} {} {\shape{\sigma(t)}} \shape \bkvar$,
      which concludes.
    \item Suppose the root is (contr), meaning we have $\typing
      {\Gamma, z : T', \shapemarker{\vartyping x \bkvar}}{\Delta,
      \vartyping y \bkvar} t T$.
      Again, as $z$ comes with no marker and thus $z \neq x$, which
      means that we can treat the case as before, with a non-linear
      context $\Gamma, \shapemarker{\vartyping{x}{\bkvar}}$, and conclude.
    \item Suppose the root is (equiv), thus $\typing {\Gamma,
      \vartyping x T'} \Delta t T$; by i.h. $\typing \Gamma \Delta
      {\sigma(t)} T$.
      One can check that if $t \equiv t'$, then $\sigma(t) \equiv
      \sigma(t')$, which concludes.
  \end{itemize}

  A similar process can be done to prove the second property. It differs
  only by the fact that $x$ is now present in only one subterm context;
  for the other subterm, while some $x$ may be present by a \textshape{}
  construct, replacing it with a well-typed value of same type, one
  can rewrite the typing tree with the new term (possibly by remodeling the
  (contr) rules), and conclude. Else, we have directly $\sigma(t) = t$,
  thus typing is direct. \qedhere
\end{proof}

\begin{corollary}\label{lem:substitutionlemma}
  Let $\typing {\Gamma, \vartyping{x_1}{T_1}, \dots,
  \vartyping{x_m}{T_m}}{\Delta, \vartyping{x_{m+1}}{T_{m+1}}, \dots,
  \vartyping{x_n}{T_n}} t T$ be a well-typed term.
  Then, for any substitution $\sigma$ such that for all $1 \leq i n$,
  $x_i \in \subsupp \sigma$, $\typing \Gamma {\Delta_i} {\sigma(x_i)}
  {T_i}$, we have $\typing \Gamma {\Delta, \Delta_1, \dots, \Delta_n}
  {\sigma(t)} T$.
\end{corollary}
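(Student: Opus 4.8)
The plan is to prove the statement by induction on $n$, the number of substituted variables, using the single-variable Substitution Lemma above as the engine and reducing $n$ by one at each step. The base case $n = 0$ is immediate: then $\sigma$ acts as the identity on $t$, so $\sigma(t) = t$ and the hypothesis is exactly the conclusion.

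For the inductive step I would peel off the last listed variable $x_n$, splitting on whether it is linear or non-linear. Consider first the case $m < n$, so $x_n$ is a linear variable. Writing $\Gamma' = \Gamma, \vartyping{x_1}{T_1}, \dots, \vartyping{x_m}{T_m}$ and $\Delta' = \Delta, \vartyping{x_{m+1}}{T_{m+1}}, \dots, \vartyping{x_{n-1}}{T_{n-1}}$, the hypothesis reads $\typing{\Gamma'}{\Delta', \vartyping{x_n}{T_n}}{t}{T}$. The replacement $\sigma(x_n)$ is typed as $\typing{\Gamma}{\Delta_n}{\sigma(x_n)}{T_n}$, whose non-linear context $\Gamma$ is smaller than the $\Gamma'$ demanded by the second property of the Substitution Lemma. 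I would repair this with \autoref{lem:weakening}: since $\sigma(x_n)$ is typed under $\Gamma; \Delta_n$, Barendregt's convention guarantees that none of $x_1, \dots, x_m$ occurs in $\fv{\sigma(x_n)}$, so $\Gamma'$ is compatible with $\Delta_n$ and we obtain $\typing{\Gamma'}{\Delta_n}{\sigma(x_n)}{T_n}$. The second property of the Substitution Lemma then yields $\typing{\Gamma'}{\Delta', \Delta_n}{\{x_n \subarrow \sigma(x_n)\}(t)}{T}$. The case where the peeled variable is non-linear is handled identically, except that one invokes the first property of the Substitution Lemma (whose hypothesis forces the corresponding $\Delta_i$ to be empty, exactly what the statement provides for non-linear variables) after weakening $\sigma(x_n)$ to the enlarged non-linear context.

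To close the induction I would apply the induction hypothesis to the $(n-1)$-variable substitution on the term $\{x_n \subarrow \sigma(x_n)\}(t)$, now typed with base linear context $\Delta, \Delta_n$ and with the remaining variables $x_1, \dots, x_{n-1}$ sitting in $\Gamma'$ and $\Delta'$. This produces $\typing{\Gamma}{\Delta, \Delta_n, \Delta_1, \dots, \Delta_{n-1}}{\sigma'(\{x_n \subarrow \sigma(x_n)\}(t))}{T}$, where $\sigma'$ is $\sigma$ restricted to $x_1, \dots, x_{n-1}$. Reordering the linear context recovers $\Delta, \Delta_1, \dots, \Delta_n$, as desired.

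The one point requiring genuine care — and the main obstacle — is identifying the composite $\sigma'(\{x_n \subarrow \sigma(x_n)\}(t))$ with the simultaneous substitution $\sigma(t)$. This is precisely the standard substitution-composition fact, and it holds here because the substituted variables are pairwise distinct and, by Barendregt's convention together with the typing of each $\sigma(x_i)$ under $\Gamma; \Delta_i$, none of the $x_j$ occurs free in any $\sigma(x_i)$; hence the sequential single substitutions commute and agree with the simultaneous one. Everything else is context bookkeeping plus routine appeals to \autoref{lem:weakening} and the two halves of the Substitution Lemma.
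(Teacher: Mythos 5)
Your proposal is correct and follows essentially the same route as the paper: induction on the number of substituted variables, peeling one off at a time via the single-variable Substitution Lemma and closing with the induction hypothesis on the composed substitution. The paper's own proof is far terser (it peels off $x_1$ rather than $x_n$ and does not spell out the weakening of $\sigma(x_i)$'s context or the commutation of the sequential substitutions), so your extra care on those two points only fills in details the paper leaves implicit.
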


\begin{proof}
  Direct by induction on $n$: the case $n = 1$ is simply the above
  lemma, and for $n=k+1$, any substitution $\sigma$ can be written as
  $\sigma= \sigma' \circ \tau$ with $\tau = \{x_1 \subarrow
  \sigma(x_1)\}$; $\tau(t)$ is typable by the above lemma and removes
  $\vartyping {x_1} A_1$ from the context, and then
  $\sigma'(\tau(t))$ concludes by induction hypothesis. \qedhere
\end{proof}

\begin{definition}
  Let $\typing \Gamma \Delta t T$ be a well-typed term.
  We say that $t$ types and terminates, written t.a.t.,
  if $t \reduces t_1 \dots \reduces t_k \reduces v$,
  $\typing \Gamma \Delta {t_i} T$,
  $\typing \Gamma \Delta v T$.
\end{definition}

\begin{lemma}\label{lem:ortho-sum}
  Let $(t_i)_{1 \leq i \leq n}$ be well-typed terms
  such that $\forall i \neq j, t_i \perp t_j$.
  Let $s = \sum_{i=1}^n \alpha_i \cdot t_i$,
  and $t = \sum_{i=1}^n \beta_i \cdot t_i$
  such that $s, t$ are t.a.t. and $\sum_{i=1}^n \alpha_i \beta_i^* = 0$.
  Then $s \perp t$.
\end{lemma}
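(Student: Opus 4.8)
The plan is to unfold Definition~\ref{def:orthogonality}: I fix an arbitrary $\Gamma\cup\Delta$-context substitution $\sigma$ and verify the two defining clauses, namely that $\sigma(\shape s)$ and $\sigma(\shape t)$ reduce to a common value, and that $\innerprod{\sigma(s)}{\sigma(t)}=0$. Since substitution commutes with the syntactic $\textshape$ construct and with the vector-space operations, $\sigma(s)=\sum_i\alpha_i\cdot\sigma(t_i)$ and $\sigma(t)=\sum_i\beta_i\cdot\sigma(t_i)$, where each $\sigma(t_i)$ is again well-typed of the common basic type $\bkvar$ (Corollary~\ref{lem:substitutionlemma}) and terminating (from the t.a.t.\ hypothesis, via rule (Can) and Lemma~\ref{lem:reduces-sum}). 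When $n\ge 2$, the first clause of $t_i\perp t_j$ gives that $\sigma(\shape{t_i})$ and $\sigma(\shape{t_j})$ reduce to one and the same value $v_0$ for all $i,j$; the case $n=1$ is vacuous, since well-typedness through (sup) forces $|\alpha_1|=|\beta_1|=1$ while the hypothesis forces $\alpha_1\beta_1^\ast=0$.

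For the shape clause, I would show that both $\sigma(s)$ and $\sigma(t)$ reduce to canonical forms all of whose basis values have shape $v_0$. By Lemma~\ref{lem:canonical-shape} every element of the canonical form of $\sigma(t_i)$ shares its shape, which equals $v_0$; since the canonical form of $\sigma(s)$ is assembled from those of the $\sigma(t_i)$, every basis value occurring in it still has shape $v_0$. Rule (Shape$_s$) then yields $\sigma(\shape s)\sreduces v_0$ and symmetrically $\sigma(\shape t)\sreduces v_0$, so the two shapes coincide.

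For the inner-product clause, the engine is sesquilinearity of $\innerprod{\cdot}{\cdot}$ combined with normalisation of each summand. Expanding the definition over the canonical forms $\sigma(t_i)\sreduces\sum_k\gamma_{i,k}\cdot v_{i,k}$ gives $\innerprod{\sigma(s)}{\sigma(t)}=\sum_{i,j}\alpha_i\beta_j^\ast\innerprod{\sigma(t_i)}{\sigma(t_j)}$; the off-diagonal terms vanish because $t_i\perp t_j$ gives $\innerprod{\sigma(t_i)}{\sigma(t_j)}=0$ for $i\neq j$, leaving $\sum_i\alpha_i\beta_i^\ast\innerprod{\sigma(t_i)}{\sigma(t_i)}$. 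It then suffices to prove $\innerprod{\sigma(t_i)}{\sigma(t_i)}=1$, that is, that every well-typed value of quantum type is a unit vector; this follows by a routine induction on quantum-type values, using the amplitude constraint $\sum_k|\gamma_{i,k}|^2=1$ imposed by rule (sup) and the multiplicativity of norms across constructors. Substituting back, $\innerprod{\sigma(s)}{\sigma(t)}=\sum_i\alpha_i\beta_i^\ast=0$ by hypothesis, which concludes.

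The step I expect to be the real obstacle is making the informal sesquilinear expansion rigorous at the level of canonical forms: distinct $\sigma(t_i)$ may reduce to canonical forms sharing basis values, so coefficients merge and possibly cancel, and controlling this bookkeeping is precisely what the quantity operator of Definition~\ref{def:quantity} and the uniqueness of canonical forms (Lemma~\ref{lem:canonical}) are designed for. The second delicate point is the normalisation claim $\innerprod{\sigma(t_i)}{\sigma(t_i)}=1$, which is the only place where the quantum typing discipline (the amplitude condition in (sup)) is genuinely exploited; establishing it, together with ensuring every $\sigma(t_i)$ terminates so that all the inner products and shapes above are well-defined, is where most of the care will go.
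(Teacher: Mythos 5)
Your proposal follows essentially the same route as the paper's own proof: both reduce $s$ and $t$ over a common completed basis of canonical-form values (Lemma~\ref{lem:reduces-sum}), obtain the shape condition from the pairwise orthogonality of the $t_i$, and expand the inner product sesquilinearly so that off-diagonal terms vanish by $t_i \perp t_j$, diagonal terms normalize to $1$, and the hypothesis $\sum_{i}\alpha_i\beta_i^* = 0$ concludes. The two delicate points you flag (merging of coinciding basis values, and the diagonal normalization $\innerprod{\sigma(t_i)}{\sigma(t_i)}=1$) are precisely the ones the paper handles, respectively, by completing canonical forms to a common pairwise-inequivalent basis and by the unit-norm claim implicit in its Kronecker-delta step, so there is no substantive divergence.
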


\begin{proof}
  As $t_i \perp t_j$, each $t_i$ reduces to a canonical form.
  Let us write them up to completion as
  $t_i \sreduces \sum_{j=1}^m \gamma_{ij} \cdot v_j$.
  By \autoref{lem:reduces-sum},
  $s \sreduces \sum_{j=1}^m
  (\sum_{i=1}^n \alpha_i \gamma_{ij}) \cdot v_j$, and
  $t \sreduces \sum_{j=1}^m
  (\sum_{i=1}^n \beta_i \gamma_{ij}) \cdot v_j$.
  By definition, this is a quasi canonical form, as some phases may be zero, but
  $v_j$ are two by two distinct; proving the orthogonality condition
  on this will imply
  orthogonality for the exact canonical form.
  First, it is easy to see that they all have the same shape,
  as $t_i$ are orthogonal, thus have the same shape.
  Compiling the inner product gives the following:

  \[
    \sum_{j=1}^m \sum_{j'=1}^m (\sum_{i=1}^n \alpha_i
    \gamma_{ij})(\sum_{i=1}^n \beta_i \gamma_{ij'}) \kron{v_j}{v_j'}
    = \sum_{1 \leq i,i' \leq n} \alpha_i \beta_i^* \sum_{j=1}^m
    \gamma_{ij} \gamma_{ij'}^*
    = \sum_{1 \leq i,i' \leq n} \alpha_i \beta_i^*  \kron{i}{i'} = 0
  \]

  where the second equality is obtained by developing each sum, and the third
  by definition of $t_i \perp t_j$, and we conclude by the condition
  on $\alpha_i, \beta_i$. \qedhere
\end{proof}

\begin{lemma}\label{lem:ortho-sum-all}
  Let $(t_i)_{1 \leq i \leq n}$, $(t_i')_{1 \leq i \leq n}$ such that
  $\forall i j, t_i \perp t_j'$.
  Then for any $\alpha_i, \beta_i$, $\sum_{i=1}^n \alpha_i \cdot t_i
  \perp \sum_{i=1}^n \beta_i \cdot t_i'$.
\end{lemma}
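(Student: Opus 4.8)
The plan is to verify the two clauses of \autoref{def:orthogonality} directly. Write $s = \sum_{i=1}^n \alpha_i \cdot t_i$ and $t = \sum_{j=1}^n \beta_j \cdot t_j'$, fix an arbitrary $\Gamma \cup \Delta$-context substitution $\sigma$, and exploit the fact that the inner product becomes bilinear once every component is reduced to a canonical form over a common value basis. Since each hypothesis $t_i \perp t_j'$ already presupposes that $\sigma(t_i)$ and $\sigma(t_j')$ are closed, well-typed and terminating (otherwise the inner product defining orthogonality would be undefined), \autoref{lem:reduces-sum} guarantees that $\sigma(s)$ and $\sigma(t)$ terminate too, so the shapes and inner products below are all well-defined.

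First I would settle the shape clause. For every pair $i, j$, the first condition of $t_i \perp t_j'$ says that $\sigma(\shape{t_i})$ and $\sigma(\shape{t_j'})$ reduce to the same value; by transitivity of equality this forces all the $\sigma(t_i)$ and all the $\sigma(t_j')$ to share one common shape $v$. Reducing $\sigma(s)$ to its canonical form via \autoref{lem:reduces-sum} and applying (Shape$_s$) together with \autoref{lem:canonical-shape} (every component of that canonical form again has shape $v$), we obtain $\sigma(\shape{s}) \sreduces v$, and symmetrically $\sigma(\shape{t}) \sreduces v$, so the two shapes coincide.

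The heart of the argument is the inner-product clause. Completing the canonical forms over a common set of pairwise-distinct values $w_1, \dots, w_p$ (padding with $0 \cdot w$ terms, as described after \autoref{lem:canonical}), I would write $\sigma(t_i) \sreduces \sum_{l} \mu_{il} \cdot w_l$ and $\sigma(t_j') \sreduces \sum_l \nu_{jl} \cdot w_l$. The second condition of $t_i \perp t_j'$ then reads $\sum_l \mu_{il}\nu_{jl}^* = 0$ for every $i, j$. Now \autoref{lem:reduces-sum} yields
\[
  \sigma(s) \sreduces \sum_{l=1}^p \Big(\sum_i \alpha_i \mu_{il}\Big) \cdot w_l, \qquad \sigma(t) \sreduces \sum_{l=1}^p \Big(\sum_j \beta_j \nu_{jl}\Big) \cdot w_l,
\]
and, since the $w_l$ are distinct, the inner product collapses to a single sum over $l$:
\[
  \innerprod{\sigma(s)}{\sigma(t)} = \sum_{l=1}^p \Big(\sum_i \alpha_i \mu_{il}\Big)\Big(\sum_j \beta_j \nu_{jl}\Big)^* = \sum_{i,j} \alpha_i \beta_j^* \sum_{l=1}^p \mu_{il}\nu_{jl}^* = \sum_{i,j} \alpha_i \beta_j^* \cdot 0 = 0.
\]
As $\sigma$ was arbitrary, both conditions of \autoref{def:orthogonality} hold and $s \perp t$.

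The main obstacle I anticipate is purely organisational: justifying the bilinear rearrangement of the inner product. Because the inner product is defined only through reduction to canonical form, one must first force all components onto the same value basis and invoke \autoref{lem:reduces-sum} to compute the canonical forms of $\sigma(s)$ and $\sigma(t)$ before the sums can be reindexed. Once this setup is in place the cancellation is immediate from $\sum_l \mu_{il}\nu_{jl}^* = 0$, and, crucially, no normalization of the two families is needed, in contrast with \autoref{lem:ortho-sum}: here \emph{all} cross-pairs are orthogonal, so the double sum vanishes termwise irrespective of the amplitudes $\alpha_i, \beta_j$.
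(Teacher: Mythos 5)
Your proof is correct, but it takes a genuinely different route from the paper's. The paper proves this lemma in one line: it merges the two families into a single family $(\tilde{t_i})_{1 \leq i \leq 2n}$, pads the amplitude vectors with zeros so that $\sum_i \tilde{\alpha}_i \tilde{\beta}_i^* = 0$ holds for trivial reasons (disjoint supports), and invokes \autoref{lem:ortho-sum}. You instead verify \autoref{def:orthogonality} directly: common shape by transitivity of the first clause, and the inner product by completing all canonical forms over one value basis, applying \autoref{lem:reduces-sum}, and letting the double sum vanish termwise since every cross term $\sum_l \mu_{il}\nu_{jl}^*$ is zero. Your route is longer but arguably more faithful to the stated hypothesis: \autoref{lem:ortho-sum} requires the single family to be \emph{pairwise} orthogonal (its proof uses both $t_i \perp t_{i'}$ for $i \neq i'$ and $\innerprod{t_i}{t_i} = 1$), whereas the hypothesis here only supplies orthogonality \emph{across} the two families, not within each one; the paper's doubling trick therefore silently assumes more than the lemma states, while your direct computation needs exactly the cross-orthogonality and no normalization, as you correctly point out in your closing remark. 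What the paper's approach buys is brevity and reuse of the already-established machinery; what yours buys is that it actually proves the lemma as stated, under its literal hypotheses.
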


\begin{proof}
  Consider $(\tilde{t_i})_{1 \leq i \leq 2n}$ where $\tilde{t_i} =
  t_i$ for $1 \leq i \leq n$,
  and $\tilde{t_i} = t_i'$ for $n < i \leq 2n$, and apply
  \autoref{lem:ortho-sum}. \qedhere
\end{proof}

\begin{lemma}
  \label{lem:subred-equiv}
  Let $\typing \Gamma \Delta t T$ be a pure term.
  Then we have the following typing tree:

  \[
    \begin{prooftree}
      \infer0[(r)]{\typing {\Gamma_0} \Delta s T}
      \infer1[(contr)]{\typing {\Gamma_1} \Delta {\sigma_1(s)} T}
      \ellipsis{}{}
      \infer1[(contr)]{\typing {\Gamma_n} \Delta {\sigma_n(s)} T}
    \end{prooftree}
  \]

  where $n \geq 0$, each $\Gamma_n = \Gamma$, $\sigma_n(s) = t$,
  $r$ is neither (contr) nor (equiv),
  and each $\Gamma_i$, respectively $\sigma_i$, contains one less
  marked variable, contains one more mapping, as per (contr).
\end{lemma}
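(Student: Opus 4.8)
The plan is to argue by induction on the height of the given typing derivation of the pure term $t$, rewriting the bottom of that derivation into the claimed shape: one non-(contr), non-(equiv) rule $r$ sitting above a maximal chain of (contr) rules. I case-split on the last rule. If it is already neither (contr) nor (equiv), there is nothing to do: take $n=0$, let $r$ be that rule, $s=t$ and $\Gamma_0=\Gamma$. Note that the statement only constrains the \emph{bottom} of the tree, so the premises of $r$ may be arbitrary valid derivations; this is what makes the easy cases genuinely trivial.

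The (contr) case is the clean inductive step. Here the conclusion $\Gamma;\Delta',y:\bkvar \vdash \rho(t_1):T$ with $t=\rho(t_1)$ and $\rho=\{y/x\}$ comes from a premise $\Gamma,\shapemarker{x:\bkvar};\Delta',y:\bkvar \vdash t_1 : T$. Since $\rho$ is a variable-for-variable renaming it preserves purity, so $t_1$ is again pure and its derivation is strictly shorter; I would apply the induction hypothesis to obtain a normal-form derivation of $t_1$ in the enlarged context, with head rule $r$ and a (contr)-chain of some length $m$, and then append the present (contr) step at the very bottom. Every (contr) rule leaves the linear context $\Delta',y:\bkvar$ untouched and removes exactly one boxed variable, so the final non-linear context becomes $\Gamma$, the composite substitution is $\sigma_{m+1}=\rho\circ\sigma_m$ (one extra mapping), and $\sigma_{m+1}(s)=\rho(t_1)=t$, as required.

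The (equiv) case carries essentially all the work. I would first merge consecutive (equiv) steps using transitivity of $\equiv$, so that the premise derivation $D_0$ of $\Gamma;\Delta\vdash t_0:T$ (with $t_0\equiv t$) may be assumed not to end in (equiv), and then analyse the last rule of $D_0$. If it is (sup), then $t_0$ is a superposition; since $t$ is pure its canonical form is $1\cdot t$, so by uniqueness of canonical forms (\autoref{lem:canonical}) the superposition collapses to a single amplitude-$1$ component equivalent to $t$, and I recurse on that component's strictly shorter subderivation. If $D_0$ ends in an axiom or a structural rule fixing the head of $t_0$ (such as (cons), (qcase), (match), (app), (shape)), I would use that $\equiv$ respects that head structure --- a consequence of the quantity map of \autoref{def:quantity} together with \autoref{lem:quantity-equiv} and \autoref{lem:quantity-pure}, exactly as in the proof of \autoref{lem:canonical} --- to deduce that the immediate subterms of $t_0$ and of the pure $t$ are pairwise equivalent; converting each of them with (equiv) and re-applying the same head rule (reusing its context splitting) yields a derivation of $t$ whose last rule is that structural rule, which is what we want.

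The genuinely delicate, and main, obstacle is the remaining subcase where $D_0$ ends in (contr), so that $t_0=\rho(t_1)$ is a contraction along $\rho=\{y/x\}$. Because $\rho$ identifies the boxed variable $x$ with the linear variable $y$ it is \emph{not} injective on terms, and hence the equivalence $\rho(t_1)\equiv t$ cannot be naively lifted to an equivalence between $t_1$ and a pure term in the larger context --- indeed $t_1$ may have several canonical components that only coincide \emph{after} renaming. The hard part will be to commute the equivalence above this contraction in a well-founded way. I would resolve it through canonical forms: by \autoref{lem:canonical-typed} the premise $t_1$ has a canonical form $\sum_k \gamma_k\cdot q_k$ of well-typed pure components, and since $\rho(t_1)\equiv\sum_k\gamma_k\cdot\rho(q_k)\equiv t$ is pure, uniqueness of canonical forms forces every surviving $\rho(q_k)$ to be equivalent to $t$ with amplitudes summing to one; selecting one component $q_{k_0}$, contracting it, and routing the residual equivalence back through the structural/axiom treatment above then gives the normal form. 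The remaining subtlety is to organise the recursion (on a suitable measure on derivations rather than raw height) so that this appeal is to a strictly smaller instance, which is where I expect the proof to require the most care.
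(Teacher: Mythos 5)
There is a genuine gap in your (equiv)-then-(sup) subcase, and it is precisely the point where the paper's proof uses a different, stronger induction hypothesis. You claim that when $t_0\equiv t$ with $t$ pure and the derivation of $t_0$ ends in (sup), uniqueness of canonical forms lets you select a single component of the superposition equivalent to $t$ and recurse on its subderivation. That is false: a pure term equivalent to a superposition need not be equivalent to any of its components. For instance $\ket 0 \equiv \frac{1}{\sqrt 2}\cdot\ket + + \frac{1}{\sqrt 2}\cdot\ket -$ is typable via (sup) followed by (equiv); its canonical form is $1\cdot\ket 0$, but neither component $\ket +$ nor $\ket -$ is equivalent to $\ket 0$, so there is no ``strictly shorter subderivation'' concluding a term equivalent to $t$ on which your recursion can continue. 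Uniqueness of canonical forms only tells you what the sum \emph{regroups to} after expansion, not that the regrouped pure term sits inside one summand's derivation.

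The paper escapes this by proving a strengthened statement via the quantity map of \autoref{def:quantity}: for \emph{any} well-typed $t$ and any pure $s$ with $\qt s t \neq 0$, the term $s$ admits a derivation of the claimed normal form with the same context and type; the lemma follows by taking $s=t$ since $\qt t t = 1$. Under this strengthening the (sup) case is immediate --- $\qt s {\sum_i\alpha_i\cdot t_i}=\sum_i\alpha_i\,\qt s {t_i}\neq 0$ forces $\qt s {t_i}\neq 0$ for some $i$, and one recurses on that premise with the \emph{same} witness $s$, never needing $s$ to be equivalent to $t_i$ --- and the (equiv) case is handled by \autoref{lem:quantity-equiv} without any commutation analysis. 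Your remaining cases (appending (contr) at the bottom, pushing $\equiv$ into the premises of structural rules via the quantity lemmas) are sound and match the paper's treatment, and your worry about finding a well-founded measure is real, but the fix is not a cleverer measure: it is replacing ``$s$ equivalent to a component'' by ``$s$ has nonzero quantity in $t$'' as the invariant carried through the induction.
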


\begin{proof}
  Let $s$ be a pure term, and $\typing \Gamma \Delta t T$ be any term.
  We prove by induction on typing of $t$ that if $\qt s t \neq 0$,
  then $s$ satisfies the property of the lemma.
  \begin{itemize}
    \item (ax): $t = x$, $\qt s t \neq 0 \implies s = x$, and we can
      type $s$ directly.
    \item (ax$_c$), (ax$_0$) and (ax$_1$) are obtained similarly.
    \item (qcase): $t = \qcase r {t_0} {t_1}$, $\qt s t \neq 0
      \implies s = \qcase q {t_0} {t_1} \wedge r \equiv q$ by
      \autoref{lem:quantity-pure} as $q$ is pure.
      We can type $q$ via (equiv) by typing $r$ first, and then type
      $s$ via (qcase).
    \item (cons): $t = c(t_1, \dots, t_n)$, non-zero quantity implies
      $s = c(s_1, \dots, s_n)$ with $s_i \equiv t_i$.
      We can type $t_i$ by the typing of $t$, then $s_i$ via (equiv),
      then $s$ via (cons).
    \item (match): same as for (qcase).
    \item (abs), (abs$_c$), (rec): same as (ax).
    \item (app): this implies $t = t_1 t_2$, $s = s_1s_2$, with $s_i
      \equiv t_i$, thus we type $s_i$ through (equiv) and conclude.
    \item (app$_c$): same as (app).
    \item (sup): $t = \sum_{i=1}^n \alpha_i \cdot t_i$, $\qt s t =
      \sum_{i=1}^n \alpha_i \qt s {t_i}$.
      As $\qt s t \neq 0$, then there is $1 \leq i \leq n$ such that
      $\qt s {t_i} \neq 0$.
      We can apply the induction hypothesis on $t_i$, as it has the
      same context and type as $t$.
    \item (shape): $t = \shape r$, $s = \shape q$ and $q \equiv r$,
      thus we type $q$ through (equiv) and conclude.
    \item (equiv): $t \equiv t'$, $\qt s t = \qt s t' \neq 0$ through
      \autoref{lem:quantity-equiv}, and we apply i.h. on $t'$ as it
      has the same context and type.
    \item (contr): $t = \sigma(r)$. We can apply the induction
      hypothesis on $r$, then reapply (contr) on the alternative tree.
  \end{itemize}

  Now, we can conclude as by \autoref{lem:quantity-one}, $\qt s s =
  1$; and any obtained tree satisfies the hypothesis. \qedhere
\end{proof}

\begin{lemma}\label{lem:subred-can-ortho}
  Let $\typing {} {\vartyping x \bkvar} s \bkvarb'$, $\sigma_v = \{x
  \subarrow v\}$,
  $\closedtyping v \bkvar$, $\closedtyping {v'} \bkvar$ such that $v
  \perp v'$ and $s$ t.a.t. .
  We also suppose that any term $t$ that terminates in less
  steps than $s$ is also t.a.t. .
  Then $\sigma_v(s) \perp \sigma_{v'}(s)$.
\end{lemma}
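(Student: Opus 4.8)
The plan is to unfold the definition of orthogonality for the two \emph{closed} terms $\sigma_v(s)$ and $\sigma_{v'}(s)$: since they have no free variables, the context substitution in Definition~\ref{def:orthogonality} is trivial, and $\sigma_v(s)\perp\sigma_{v'}(s)$ amounts to proving (i) that $\shape{\sigma_v(s)}$ and $\shape{\sigma_{v'}(s)}$ reduce to a common value, and (ii) that $\innerprod{\sigma_v(s)}{\sigma_{v'}(s)}=0$. The hypothesis $v\perp v'$ supplies exactly the two corresponding facts at the leaf: $\shape v$ and $\shape{v'}$ reduce to the same value, and $\innerprod{v}{v'}=0$. The whole argument consists in propagating these two facts through the term $s$.

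First I would put $s$ into a manageable normal form. As $s$ is t.a.t., $s\sreduces s_v$ for some value $s_v$; using the auxiliary hypothesis (every term terminating in fewer steps than $s$ is t.a.t.) together with confluence (Theorem~\ref{lem:confluence}), $s_v$ is a well-typed value of type $\bkvarb'$ with canonical form $\sum_k \gamma_k\cdot p_k$ (Lemma~\ref{lem:canonical}). The key structural observation is that, although $s$ is open, it reaches a value, so the free linear variable $x$ can never sit as the scrutinee of a \textqcase/\textmatch nor as the argument of a \textshape: any such occurrence would be a stuck redex and would forbid $s\sreduces s_v$. Consequently $x$ is consumed purely linearly, occurring exactly once per summand at a basis leaf, i.e. each $p_k=C_k[x]$ for a pure-value context $C_k$ whose non-hole part is closed; and by Lemma~\ref{lem:canonical-shape} all the $p_k$ share one shape. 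Because value substitution commutes with the reduction of the uninspected $x$, I obtain $\sigma_v(s)\sreduces\sum_k\gamma_k\cdot C_k[v]$ and, symmetrically, $\sigma_{v'}(s)\sreduces\sum_k\gamma_k\cdot C_k[v']$.

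It then remains to check the two conditions on these canonical forms. For the shapes, $\shape{C_k[v]}$ rewrites by (Shape$_c$) to the shape of the context $C_k$ with $\shape v$ plugged at the hole, while (Shape$_s$) collapses the superposition to the shape of a single summand; since $\shape v$ and $\shape{v'}$ reduce to the same value and all $C_k$ have the same shape, $\shape{\sigma_v(s)}$ and $\shape{\sigma_{v'}(s)}$ reach a common value. For the inner product, I expand $v\equiv\sum_a\mu_a\cdot e_a$ and $v'\equiv\sum_b\nu_b\cdot f_b$ in canonical form and push them through each linear context with the equivalence rules of Table~\ref{tab:equiv}, so that $C_k[v]\equiv\sum_a\mu_a\cdot C_k[e_a]$. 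On pure values a filled context is determined by its frame and its leaf, hence $\kron{C_k[e_a]}{C_l[f_b]}=\kron{C_k}{C_l}\,\kron{e_a}{f_b}$; summing then yields $\innerprod{\sigma_v(s)}{\sigma_{v'}(s)}=\big(\sum_k|\gamma_k|^2\big)\,\innerprod{v}{v'}=0$, the desired factorisation through $v\perp v'$.

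The main obstacle is the structural claim that t.a.t. of the open term $s$ forces $x$ to flow only to basis leaves: it requires a careful normal-form/progress argument showing that a \textqcase, \textmatch or \textshape acting on the free $x$ is stuck, and that value substitution commutes with $s\sreduces s_v$ on such uninspected occurrences (so that $\sigma_v(s)$ and $\sigma_{v'}(s)$ terminate and stay typed, via the auxiliary hypothesis and Lemma~\ref{lem:subred-equiv} for the treatment of (contr)/(equiv)). The secondary difficulty is the bilinear bookkeeping of step (ii): making the cross terms between distinct contexts $C_k\neq C_l$ vanish while matching ones contribute exactly $\innerprod{v}{v'}$; this is precisely the computation packaged by Lemma~\ref{lem:ortho-sum} and Lemma~\ref{lem:ortho-sum-all}, which I would invoke rather than redo by hand. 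An alternative route is a direct induction on the typing derivation of $s$ (normalised by Lemma~\ref{lem:subred-equiv}), lifting orthogonality through (cons) and through the branches of (qcase)/(match) with the same two lemmas; the reduce-to-value presentation above merely makes the combinatorics uniform.
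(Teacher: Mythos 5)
Your proof hinges on the structural claim that, because $s$ reaches a value, the linear variable $x$ ``can never sit as the scrutinee of a \textqcase{}/\textmatch{}'', so that every summand of the canonical form is $C_k[x]$ with $x$ at an uninspected leaf, giving the factorisation $\innerprod{\sigma_v(s)}{\sigma_{v'}(s)}=\bigl(\sum_k|\gamma_k|^2\bigr)\innerprod{v}{v'}$. This claim is false, and the excluded case is precisely the heart of the lemma. The body of the Hadamard gate, $\qcaseline{x}{\ket+}{\ket-}$, and the branch bodies produced by the (match) case of the subject-reduction proof (where this lemma is invoked on terms such as $\qcaseline{c}{\cdots}{\cdots}$ with $c$ a pattern variable) have $x$ exactly in scrutinee position. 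There, $\sigma_v(s)$ and $\sigma_{v'}(s)$ are \emph{not} of the form $\sum_k\gamma_k\cdot C_k[v]$ and $\sum_k\gamma_k\cdot C_k[v']$ with common coefficients: the amplitudes of $v$ and $v'$ over $\{\zket,\oket\}$ are transferred onto the branches $t_0,t_1$, and orthogonality is inherited from the vanishing of $\alpha\gamma^*+\beta\delta^*$ via Lemma~\ref{lem:ortho-sum}, not from a leaf-wise Kronecker factorisation. Your ``stuck redex'' justification also rests on reading t.a.t.\ as termination of the \emph{open} term $s$; under that literal reading the lemma would be vacuous on exactly the terms it is needed for, and the paper's own proof clearly interprets termination as termination of the substituted instances $\sigma_v(s)$, under which nothing is stuck.

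The paper instead proceeds by induction on the typing derivation of $s$ and treats both sub-cases of (qcase)/(match): when $x$ occurs in the branches, it uses the induction hypothesis together with Lemma~\ref{lem:ortho-sum-all}; when $x$ flows into the scrutinee, it reduces $\sigma_v(t)$ and $\sigma_{v'}(t)$ to superpositions of basis values, extracts orthogonality of the amplitude vectors from $v\perp v'$, and concludes with Lemma~\ref{lem:ortho-sum}. Your closing remark about an ``alternative route'' by induction on typing points in the right direction, but as stated it still only lifts orthogonality ``through the branches''; to repair the proof you must add the scrutinee case, which is where the actual quantum content (unitaries preserving orthogonality) lives. The shape condition and the (cons)/(sup)/(app) cases of your argument are essentially fine and match the paper's treatment.
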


Something that we can note is that given $s \sreduces \alpha \cdot
\zket + \beta \cdot \oket$,
then $\qcasedefault s$ will not always reduce to $\alpha \cdot t_0 +
\beta \cdot t_1$,
as ($\equivcontext$) requires a pure term, which is not guaranteed for $s$.
However, we will have $\qcasedefault s \sreduces \alpha \cdot s_0 +
\beta \cdot s_1$,
where $t_i \sreduces s_i$. In particular, proving results of orthogonality
between such terms will guarantee orthogonality between the $t_i$.
Same goes for \textmatch{}.

\begin{proof}
  By induction on the typing of $s$:
  \begin{itemize}
    \item (ax) gives the desired result directly;

    \item (ax$_c$), (ax$_0$) and (ax$_1$) have an empty linear context;

    \item (qcase): $s = \qcasedefault t$; as $s$ terminates,
      so does $t$, in less steps than $s$, thus $t$ reduces to a well-typed
      value by hypothesis. Therefore,
      either $\typing {} {\vartyping x \bkvar} t \qbit$;
      the only pure closed values of type $\qbit$ are $\zket$ and $\oket$, thus
      $\sigma_v(t) \sreduces \alpha \cdot \zket + \beta \cdot \oket$,
      $\sigma_{v'}(t) \sreduces \gamma \cdot \zket + \delta \cdot \oket$,
      and by orthogonality $\alpha\gamma^* + \beta\delta^*  = 0$.
      By definition, $\sigma_v(s) \sreduces \alpha \cdot t_0 + \beta \cdot t_1$,
      $\sigma_{v'}(s) \sreduces \gamma \cdot t_0 + \delta \cdot t_1$,
      which are well-typed as $s$ is t.a.t.,
      and we can conclude by \autoref{lem:ortho-sum}.
      Else, $\sigma_v(t) = t$, thus $t \sreduces \alpha \cdot \zket +
      \beta \cdot \oket$,
      $\sigma_v(s) \sreduces \alpha \cdot \sigma_v(t_0) + \beta \cdot
      \sigma_v(t_1)$.
      In particular, $t_0 \perp t_1$, thus $\sigma_v(t_i) \perp
      \sigma_{v'}(t_{1-i})$,
      and $\sigma_v(t_i) \perp \sigma_{v'}(t_i)$ by induction
      hypothesis, thus we conclude by \autoref{lem:ortho-sum-all},
      as the obtained summation is t.a.t. as a reduced of $s$.

    \item (cons): there is $1 \leq i \leq n$ with $\typing {}
      {\vartyping x \bkvar} {t_i} \bkvarb_i$.
      As one coordinate is orthogonal and the others are identical,
      orthogonality is verified for the whole term.

    \item (match): same as for (qcase). The only difference is that
      $\sigma(t)$ will reduce to a linear combination of same
      constructor, thus $\sigma_v(s) \sreduces \sum_{j=1}^m \beta_j
      \cdot \sigma_{w_j}(t_i)$,
      where $w_j \perp w_j'$, and the result is direct by induction hypothesis
      and by concluding with \autoref{lem:ortho-sum-all}.

    \item (abs), (abs$_c$), (rec) do not yield a term of type $\bkvar$.

    \item (app): by typing, reduction of $t_1t_2$ implies
      that $t_1$ reduces too to a well-typed closed value,
      \ie either $\lbd x t$ or \textletrec{}.
      Furthermore, by typing,
      either $x$ is in the context of $t_1$, thus of $t$, or in
      the context of $t_2$, thus of $v$.
      In any case, by joining all substitutions, we reach a point
      where $s \sreduces \tilde \sigma(t)$, and we can apply the i.h. on $t$,
      as it is t.a.t. as a reduced from $s$. Same goes for (app$_c$).

    \item (sup): all $t_i$ have the $x$ in the context, thus
      $\sigma_v(t_i) \perp \sigma_{v'}(t_i')$,
      and $\sigma_v(t_i) \perp \sigma_{v'}(t_j)$ for $i \neq j$ as
      $t_i \perp t_j$, thus we conclude by \autoref{lem:ortho-sum-all},
      as we apply the lemma on $s$, which is t.a.t. itself.

    \item (shape): no linear context.

    \item (equiv): apply the hypothesis on $t$, then via (Equiv) we
      can obtain a reduction for $s$.

    \item (contr): just an alpha-renaming of a classical variable,
      does not disturb the previous proofs. \qedhere
  \end{itemize}
\end{proof}

\begin{lemma}[Strong Subject Reduction]
  \label{lem:subred-strong}
  Let $\typing \Gamma \Delta t T$ be a well-typed term,
  such that $t$ terminates in $k$ steps.
  Then $t$ t.a.t. .
\end{lemma}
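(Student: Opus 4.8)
The plan is to prove this statement by strong induction on the number $k$ of reduction steps needed for $t$ to reach a value, establishing one-step subject reduction and propagating termination at the same time. The base case $k=0$ is immediate: $t$ is already a value, so the empty reduction chain witnesses that $t$ is t.a.t., the typing of the final value being the hypothesis itself. For the inductive step, I assume the claim for every term terminating in fewer than $k$ steps and let $t$ terminate in $k \geq 1$ steps. Since reduction is deterministic up to $\equiv$ (Lemma~\ref{lem:confluence-equiv} together with Theorem~\ref{lem:confluence}), the first step of any terminating chain gives a reduct $t \reduces t'$ with $t'$ terminating in $k-1$ steps. The heart of the argument is then the single-step preservation $\typing \Gamma \Delta {t'} T$; once this holds, the strong induction hypothesis applied to $t'$ supplies a typed reduction chain from $t'$ to a value of type $T$, and prepending $t \reduces t'$ shows that $t$ is t.a.t.

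For one-step preservation I would proceed by case analysis on the rule deriving $t \reduces t'$. First I normalise the typing derivation: when $t$ is pure, Lemma~\ref{lem:subred-equiv} lets me assume it consists of a single structural rule $r$ (neither (contr) nor (equiv)) followed by a block of (contr) applications, which are mere renamings of classical variables and commute with reduction, while the (Equiv) reduction rule is absorbed by the (equiv) typing rule and Lemma~\ref{lem:quantity-equiv}. The congruence rule ($\evalcontext$) is handled by decomposing the typing derivation of $\evalcontext[s]$ to read off the type of the redex $s$, applying preservation to $s \reduces s'$, and recomposing; this is routine because the evaluation-context grammar only places the hole in the control position of a \textqcase{} or the scrutinee of a \textmatch{}, in an argument of a constructor or application, or under \textshape{}---never inside a branch carrying an orthogonality constraint. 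The remaining structural redexes are discharged as follows: (Qcase$_0$)/(Qcase$_1$) select a branch already typed with $T$ by the premise of (qcase); (Lbd), (Fix) and (Match) substitute values for the bound or pattern variables, so the Substitution Lemma together with Corollary~\ref{lem:substitutionlemma} yields the typing of the contractum; and (Unit) strips \textunitary{}, after which $t\,v$ retypes by (app) from the isometry typing of $t$.

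The genuinely delicate cases are those where reduction distributes over or inside a superposition, since the side conditions $t_i \perp t_j$ of (sup) and (qcase) must be re-established after reduction. For (Match), and symmetrically for \textqcase{}, applied to a superposition, the matched components are pairwise orthogonal, so some quantum argument is orthogonal across the branches; substituting these orthogonal values into the same body preserves orthogonality by Lemma~\ref{lem:subred-can-ortho}, whose hypothesis ``every term terminating in fewer steps is t.a.t.'' is exactly what the strong induction on $k$ provides, and the result is retyped by (sup) through Lemmas~\ref{lem:ortho-sum} and~\ref{lem:ortho-sum-all}. For (Can), the superposition is in canonical form and typed by (sup); each component $t_i \zoreduces t_i'$ terminates in fewer steps, hence is t.a.t. and keeps its type $\qvar$ by the strong hypothesis, while invariance of the inner product and of the shape under reduction (a consequence of confluence, Theorem~\ref{lem:confluence}, and Lemma~\ref{lem:canonical-shape}) gives $\innerprod{t_i'}{t_j'} = \innerprod{t_i}{t_j} = 0$ and equal shapes, so orthogonality, and therefore the (sup) typing, is recovered. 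The (Shape$_\ast$) rules are checked directly against Definition~\ref{def:shape-type}, with (Shape$_s$) relying on Lemma~\ref{lem:canonical-shape} to guarantee that all summands share a shape.

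I expect the main obstacle to be precisely this re-establishment of orthogonality after reduction, and the resolution is structural: the proof must be run by strong induction on the step count so that Lemma~\ref{lem:subred-can-ortho} is available exactly where it is needed. That subject reduction genuinely fails for non-terminating, non-pure terms (Remark~\ref{rem:subred}) confirms that termination is indispensable here and that the induction measure cannot be relaxed.
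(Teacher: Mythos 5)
Your proposal is correct and follows essentially the same route as the paper's proof: an outer induction on the termination depth, an inner case analysis on the reduction rule, normalisation of the typing derivation via Lemma~\ref{lem:subred-equiv} and commutation of (contr), the substitution lemma for the $\beta$-like redexes, and the re-establishment of orthogonality in the (Can) and superposition cases through Lemma~\ref{lem:subred-can-ortho} together with Lemmas~\ref{lem:ortho-sum}, \ref{lem:ortho-sum-all} and~\ref{lem:canonical-shape}. You also correctly identify that the hypothesis of Lemma~\ref{lem:subred-can-ortho} is discharged exactly by the strong induction on the step count, which is the same structural observation the paper relies on.
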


\begin{proof}
  This is proven by induction on the depth of termination of $t$.
  Let us first prove that $t$ reduces to well-typed terms. If $k =
  0$, the result
  is direct; suppose $k = k' + 1$, thus $t \reduces t' \mostreduces{k'} v$.
  We will prove that $t$ t.a.t. by induction on the derivation
  of $\reduces$. In most cases, we obtain that $t'$ is
  well-typed, which implies directly that $t$ t.a.t. by
  induction hypothesis on $t'$.

  First, note that apart from (Can) and (Equiv), left terms are pure,
  thus we can apply \autoref{lem:subred-equiv} to gain a term with no
  (equiv) rule.
  Now, let us first suppose that there is no (contr) rule near the
  end of the typing tree.

  \begin{itemize}
    \item (Qcase$_0$): the only non (contr) and (equiv) rule typing
      this syntax is (qcase),
      thus $\Delta = \Delta_1, \Delta_2$ with $\typing \Gamma
      {\Delta_1} \zket \qbit$ and $\typing \Gamma {\Delta_2} {t_0} \qvar$.
      Again, as $\zket$ is pure, the only no (contr) and (equiv) rule
      typing $\zket$ is (ax$_0$).
      Therefore, $\Delta_1 = \emptyset$, and we can conclude by typing of $t_0$.

    \item (Qcase$_1$): same as above.

    \item (Match): the root of typing is (match), thus $\Delta =
      \Delta_1, \Delta_2$, with $\typing \Gamma {\Delta_1}
      {c_i^{s_i}(p_v^1, \dots, p_v^{n_i})} \bvar$
      and $\typing {\Gamma, \overrightarrow{y_i} :
      \overrightarrow{C_i}} {\Delta_2,
      \overrightarrow{z_i} : \overrightarrow{Q_i}} {t_i} \bkvarb$.
      Again, by typing as the $v_i$ are pure, this gives us $s_i = \sign
      {\bkvar_1, \dots, \bkvar_n} \bvar$, and $\typing \Gamma
      {\Delta_i^c} {p_v^i} \bkvar_i$,
      with $\Delta_1 = \Delta_1^c, \dots, \Delta_{n_i}^c$.
      As $\overrightarrow{x_i} = \overrightarrow{y_i},
      \overrightarrow{z_i}$, we conclude using
      \autoref{lem:substitutionlemma}.

    \item (Lbd) : If the root is (app), then $\Delta = \Delta_1,
      \Delta_2$, $\typing \Gamma {\Delta_1} {\lbd x s} T' \linfunc T$
      and $\typing \Gamma {\Delta_2} v T'$.
      Typing of $\lbd x s$ implies $\typing \Gamma {\Delta_1,
      \vartyping x T'} s T'$, and thus by substitution lemma on
      $\sigma(s)$ we conclude.
      The same process can be done if the root is (app$_c$).

    \item (Rec) is identical to (Lbd), up to some small tweaks.

    \item (Can) : we consider an alternative typing tree where all
      equivalence relations are at the end, as we make them commute
      with summation, \textqcase{}, \textmatch{} and constructors.
      Therefore, we have $t \equiv s$, where $s$ has no (equiv) rule
      near the end.
      If $s$ is pure, then $t = 1 \cdot s$ (by quantity), by i.h.
      this reduces to
      $1 \cdot s'$ where $s'$ is well-typed, and using (equiv) we can
      type $1 \cdot s'$.
      Else, $s$ is not pure.
      Let us first prove that given two well-typed closed terms $s, t$
      such that $s \perp t$ and that they terminate, reducing to well-typed
      closed value,
      then developing linearly preserves the
      orthogonality, meaning:

      \begin{itemize}

        \item $\qcase s {t_0} {t_1} \perp \qcase t {t_0}
          {t_1}$: we have $s \sreduces v$ and $t \sreduces w$ (they
          terminate as we can derive orthogonality).
          By definition, $v, w$ are well-typed closed values,
          thus $v \equiv \alpha \cdot \zket + \beta \cdot
          \oket$ and $w \equiv \gamma \cdot \zket + \delta \cdot \oket$.
          As they are orthogonal, then they will verify $\alpha
          \gamma^* + \beta \delta^* = 0$ (even if this is not the
            canonical form as some coefficients can be zero, this will
          still hold).
          Now, this indicates, by \evalcontext, that $\qcase s
          {t_0} {t_1} \sreduces \alpha \cdot s_0 + \beta \cdot s_1$
          and $\qcase t {t_0} {t_1} \sreduces \gamma \cdot s_0 +
          \delta \cdot s_1$, where $t_0 \sreduces s_0$ and $t_1 \sreduces s_1$.
          By \autoref{lem:ortho-sum}, as $t_0 \perp t_1$, then
          $s_0 \perp s_1$, and thus
          $\alpha \cdot s_0 + \beta \cdot s_1 \perp\gamma \cdot s_0 +
          \delta \cdot s_1$.

        \item $c(s, \dots, t_n) \perp c(t, \dots, t_n)$: we suppose
          that $s, t$ are in first slot, other cases are done equivalently.
          By \evalcontext, when reducing each term to a value, the
          first slot will contain the reduced value of $s$ and $t$.
          Then, as they are orthogonal, and the other values are
          identical, they have the same shape
          by~\autoref{lem:canonical-shape}, and when writing the
          summation to test the zero equality,
          we can just develop through the first coordinate to obtain
          back the summation from the orthogonality of $s$ and $t$.
        \item $\match s \dots \perp \match t \dots$: by definition,
          $s \sreduces \sum_{i=1}^n \alpha_i \cdot w_i$ and $t
          \sreduces \sum_{i=1}^n \beta_i \cdot w_i$,
          by writing their canonical form, and then completing by
          some $0 \cdot w_i$ in order to have the same pure elements
          on each side.
          As they are orthogonal, then $\sum_{i=1}^n \alpha_i \beta_i^* = 0$.
          Furthermore, they must have the same shape, and be a value
          of type $\bvar$, thus any pure value satisfies $w_i =
          c_k(v_i^1, \dots, v_i^n)$.
          As $w_i \neq w_j$ for $i \neq j$, then one coordinate is distinct.
          Now, $\match s \dots \sreduces \sum_{i=1}^n \alpha_i \cdot
          s_i$ and $\match t \dots \sreduces \sum_{i=1}^n
          \beta_i \cdot t_i$, where $\sigma(w_i) \sreduces s_i$,
          and $\sigma(w_i) \sreduces t_i$.
          For any $i \neq j$, $\sigma_i$ and $\sigma_j$ have map one
          variable to a distinct pure value, as explained above;
          and any $v \neq w$ where $v, w$ are pure values satisfy $v \perp w$.
          We can thus apply \autoref{lem:subred-can-ortho}, and get
          $\sigma_i(w_i) \perp \sigma_j(w_j)$.
          Finally, we can conclude by \autoref{lem:ortho-sum},
          and as $s_i, t_i$ are reduced terms from $\sigma_i(w_i)$,
          orthogonality is preserved.
      \end{itemize}

      Now, as $s$ is not pure, then there is a subterm of $s$ that
      contains a (sup) typing rule, followed by some (maybe $0$) linear rules,
      \ie{} (cons), (match) or (qcase), as they are the only rules
      which need a pure term inside their construction to be pure.
      By definition of $\equivcontext$, we can develop linearly this
      sum to obtain $s' = \sum_{i=1}^n \alpha_i \cdot t_i$;
      and by construction, we can use the above result by induction,
      as each term is a subterm of a terminating term, thus reduces in
      less steps than the original term, therefore we can apply the
      induction hypothesis. Note that some terms may obtain a phase $0$
      at the end, and could have reduced in more steps; but as they will
      be removed, not applying this process is ok, as we will not
      test orthogonality with them anyway.
      as the base case is (sup) thus requires orthogonal terms.
      Therefore, by syntax, $s'$ is a sum of orthogonal well-typed terms.
      We can reapply this process on each $t_i$ until they are pure,
      which terminates as $t_i$ has a smaller typing tree than $s$,
      as one (sup) rule has been removed.
      At the end, we obtain multiple summation of pure orthogonal
      values $\sum_{i=1}^n \alpha_i \cdot (\dots \cdot \sum_{k=1}^l
      \beta_k \cdot t_k)$.
      Each sum can be typed through (sup), as the coefficients come
      from a (sup) rule and the above cases show that orthogonality is kept.
      We can then develop everything, group equivalent pure terms,
      and obtain the canonical form back.
      However, after reducing each term, we can do the reverse and
      rewrite the obtained term as multiple summations.
      Furthermore, each term has either reduced, or stayed the same
      as it was removed;
      by induction hypothesis on $t_i \zoreduces t_i'$, $t_i'$ is
      well-typed with the same context and type as $t$;
      and the orthogonality conditions are still true as we reduce
      terms, thus we can still type (sup) for each sum, and therefore
      type the big sum;
      as this sum is equivalent to $\sum_{i=1}^n \alpha_i \cdot
      t_i'$, by (equiv), we can type it and conclude.

    \item (Shape$_0$) : by typing, we have $\typing {\Gamma'}
      {\Delta'} \zket T'$ where $\Gamma = \Gamma,
      \shapemarker{\Delta'}, \Delta = \emptyset$ and $T = \shape{T'}$.
      Again, as $\zket$ is pure, then we need to type it with
      (ax$_0$), which gives $\Delta' = \emptyset$ and $T = \shape
      \qbit = \unit$.
      We conclude as we can type $\typing \Gamma {} \vunit \unit$
      through (cons).

    \item (Shape$_1$) : same as above.

    \item (Shape$_c$) : as the root is (shape), typing gives $\Gamma
      = \Gamma', \shapemarker{\Delta'}$, $\Delta = \emptyset$, $T =
      \shapemarker{\bvar}$ and $\typing \Gamma \Delta {c(v_1, \dots,
      v_n)} \bvar$.
      Again by typing, the root being (cons), we have $\typing \Gamma
      {\Delta_i} {v_i} \bkvar_i$ and $\Delta = \Delta_1, \dots, \Delta_n$.
      Therefore, (shape) gives $\typing {\Gamma,
      \shapemarker{\Delta_i}} {} {\shape{v_i}} {\shape{\bkvar_i}}$;
      as $\shapemarker{\Delta} = \shapemarker{\Delta_1}, \dots,
      \shapemarker{\Delta_n}$, the definition of $\tilde c$ concludes.
    \item (Shape$_s$): by \autoref{lem:canonical-typed}, each $v_i$
      is well-typed with the same type as the summation, and we
      conclude as above.

    \item (Equiv): By (equiv), $\typing \Gamma \Delta {t_1} T$, by
      i.h., $\typing \Gamma \Delta {t_1'} T$, and by (equiv) again on
      $t_1 \equiv t'$, we conclude.

    \item (\evalcontext): by i.h. on $t \reduces t'$, $\typing \Gamma
      \Delta {t'} T$, thus one can replace the typing tree of $t$ by
      the typing tree of $t'$, and any rule will still hold.
      This is in particular true because $\evalcontext$ is made of no
      superpositions, so none of these checks happen and
      thus typing is preserved.
  \end{itemize}

  Now, given a term with some (contr) rules, we may consider the following:

  \begin{itemize}
    \item One can first remark that any (contr) rule commutes with a
      (qcase), (match), or any (app) rule.
      Therefore, for the (Qcase), (Match), (Lbd), (Rec)
      rules, we can consider an alternative typing tree for $t$ with
      $k \geq 0$ (contr) rules at the end,
      and thus have $t = \sigma_1 \dots \sigma_k(s)$, where $s$ has
      no (contr) rule near the end, thus we can apply the hypothesis above.
      Also note that if $t \reduces t'$, then $t' = \sigma_1 \dots
      \sigma_k(s')$ and $s \reduces s'$. We can then apply the result
      above on $s \reduces s'$,
      and reapply the $k$ (contr) rules on $s'$ to obtain the
      well-typedness of $t'$.
    \item The (Can) rule contains terms that all have the same
      context, thus any (contr) rule may actually be applied on each
      $t_i$ before.
    \item Any (Shape) rule contains only values, thus no marked
      variable is actually given by the context and no (contr) rule is needed.
  \end{itemize}

  Therefore, the result still holds for any term. \qedhere
\end{proof}

\subred*

\begin{proof}
  If $t$ terminates, then we can apply \autoref{lem:subred-strong}.
  If $t$ is pure, one can remark that the proof in \autoref{lem:subred-strong}
  for pure terms does not require termination, as (Can) is stopped from
  going further and needing the termination when $t$ is pure, thus one
  can adapt the proof to get the result.
\end{proof}

\section{Proofs for Section~\ref{subsec:orthogonality}}

\orthoundecide*

\begin{proof}
  Let us first prove the $\pitwo$-hardness.
  Let us recall the definition of Programming Computable
  Functions (PCF), which is defined with the following types and terms grammar:

  \[
    \begin{aligned}
      A, B &\Coloneqq \nat \gmid A \to B \\
      t, t_1, t_2 &\Coloneqq x \gmid \lbd x t \gmid t_1 t_2 \gmid
      \mathtt{fix}~t \gmid \mathtt n \mid \mathtt{succ} \gmid \mathtt{pred}
      \gmid \mathtt{ifz}~t_1 t_2
    \end{aligned}
  \]

  Each term has a corresponding typing rule, and thus PCF is a typed language.
  It is easy to see that \hyrql{} can encode PCF: all of PCF's constructions are
  already present in \hyrql{}, apart from the following terms:
  \[
    \begin{aligned}
      \mathtt{fix}~t &= \letrec f x t\,f\,x \\
      \mathtt{pred} &= \lbd x \match x {0 \to 0, S~n \to n} \\
      \mathtt{ifz} &= \lbd x \lbd y \lbd n \match n {0 \to x, S~m \to y}
    \end{aligned}
  \]
  Therefore, any term of PCF can be encoded into a term in \hyrql{}
  through a total computable function.
  The Universal Halt problem on PCF is known to be $\pitwo$-hard.
  We define $\uhalt$ as the set of terms of PCF, thus terms
  in our language, terminating over any input,
  meaning $t t_1 \dots t_n$ terminates.

  Given a well-typed term $t$ from PCF of type $\tvar_1 \anyfunc
  \dots \anyfunc \tvar_n \anyfunc \bkvar$
  (such type can always be obtained by the Lemma~\ref{lem:functype}),
  we define
  \[
    g(t) = (tx_1\dots x_n, \zket), (tx_1\dots x_n, \oket),
  \]
  where $x_1, \dots, x_n$ are neither free nor bound variables of $t$.
  If $t \in \uhalt$, then it terminates over any input.
  In particular, this implies that $tx_1 \dots x_n$ terminates for
  any substitution, reducing to a value $v$;
  and as $(v, \zket) \perp (v, \oket)$, $g(t) \in \portho$.
  If $g(t) \in \portho$, then $t$ must terminate for any substitution
  of $x_1, \dots, x_n$, thus terminates for any value.
  Therefore, $t \in \uhalt$.
  Thus, $t \in \uhalt \iff g(t) \in \portho$, $\uhalt \leq
  \portho$, and as $\uhalt$ is $\pitwo$-hard, it concludes.

  The fact that $\portho \in \pitwo$ is direct by definition.
  Given $t, t'$ be two terms of the same type with $\fv t = \fv{t'} =
  \{x_1, \dots, x_n\}$, and let $\sigma_{v_1, \dots, v_n} = \{x_i
  \subarrow v_i\}_{1 \leq i \leq n}$, then orthogonality decision can
  be written up as:

  \[
    \begin{aligned}
      \forall v_1, \dots \forall v_n, \exists k \in \mathbb
      N,\ &\sigma_{v_1, \dots, v_n}(\shape t) \mostreduces{k} v_s,
      \sigma_{v_1, \dots, v_n}(\shape{t'}) \reduces^{\leq k} v_s'  \\
      &\sigma(t) \sreduces \sum_{i=1}^n \alpha_i \cdot v_i \in
      \can, \sigma(t') \sreduces \sum_{j=1}^m \beta_j \cdot w_j
      \in \can \\
      & v_s = v_s' \wedge \sum_{i=1}^n \sum_{j=1}^m \alpha_i
      \beta_j^* \kron{v_i}{w_j} = 0
    \end{aligned}
  \]

  The inside property is decidable: checking if a term reducing in
  less than $k$ steps to a variable, written $t \mostreduces{k} v$,
  is decidable; checking the first equality is decidable as there is
  no superposition (computed in $\mcal O(n)$ with $n$ the size of
  $v$), and the equality to $0$ of the double sum and equality is
  decidable as we have restricted the scalars in $\bar{\mathbb Q}$.
  Thus the inside property is $\Pi_0^0$; by definition, $\portho$
  belongs to $\pitwo$, and thus is $\pitwo$-complete. \qedhere

\end{proof}

\ortholower*

\begin{proof}
  One can check that any value of type $T$ is of size $\size v \leq \size T_d$,
  by definition of the depth, therefore we use $n$ as a bound on the size of
  any value of such type. In order to compute and decide $\portho$, we first
  need to reduce $s$ and $t$ to values, which is done in $f, g$ steps. Then,
  computing $\shape s$ and $\shape t$ is done linearly in the size of the
  value, same for computing equality, thus this is done in $\mcal O(n)$ steps.
  For the computation of the equality, one needs to compute the canonical forms
  for the reduced values of $s$ and $t$. In particular, we can complement each
  canonical form, so that testing $\kron{v_i}{w_j}$ is done directly by
  comparing the index. This implies checking all terms of the canonical form,
  and comparing it with all the other terms, thus checking $\mcal O(n)$,
  comparing it with $\mcal O(n)$ terms, and comparison is done in the worst
  case in $\mcal O(n)$ steps. Then, we compute each $\alpha_i \beta_j^*$, and
  then we have to sum $n$ terms and checking nullity. As we have assumed to
  reduce the set of complex phases such that this can be done polynomially, we
  assume it is not the operation with the most cost. Such calculus needs to be
  done in the worst case $n^3$ times.
\end{proof}

\section{Proofs for Section~\ref{subsec:poly}}

\circuitbound*

\begin{proof}
  Let $w \in \shapeset Q$, and $k = \size w$. We produce $\mcC_w^t$
  inductively on the syntax of $t$, first with no approximation, and show that
  $C_w^t$ is of size $P(T(k))$, where $P \in \mbN[X]$.
  From that, we approximate each gate, using the
  chosen universal set of gates, up to a precision $1 - \epsilon$, such that the
  overall circuit has a precision $\frac 2 3$. This approximation, done on each
  gate, can be done using Solovay-Kitaev theorem~\cite{Kit97}, which is known to
  approximate the original gate to a precision $1 - \epsilon$ in time
  $\bigo{\log^c(1/\epsilon)}$, for a constant $c$. Choosing $\epsilon = \frac 2
  {3P(T(k))}$ satisfies the overall precision~\cite{NC12}, and thus, our
  approximated circuit is of size $P(T(k)) \log^c(\frac {3 P(T(k))} 2) =
  \bigo{P(T(k)) \log^c(P(T(k)))}$, which can be bounded by a polynomial $Q \in
  \mbN[X]$, evaluated in $P(T(k))$. Therefore, the circuit is of size
  $Q(P(T(k))) = R(T(k))$ for $R = Q \circ P \in \mbN[X]$.

  Note that due to the fact that $t \in \circuittype$, one can
  always assume that there is no (contr), (equiv) or (shape) rule; by typing,
  any subterm is also always well-typed.
  We may also omit $w$ and write $\mcC^t$.

  All obtained polynomials, by induction, are only summations of polynomial,
  thus there is no exponential blowup on the coefficients. Furthermore, one can
  check that the introduced ancillary qubits that play a role must be applied
  onto a gate, thus we have at most $P(T(k))$ ancillary qubits.

  \begin{itemize}
    \item If $t = \qcasedefault s$, we first put $\mcC_w^s$. Then, either $t_i$
      are values, and this represents an isometry, thus a unitary (up to
      ancillary qubits) which will be approximated later; or they are
      not values, thus of the shape $c(\ket i, s_i)$; and we put the circuits
      $\mcC_w^{s_i}$, controlled (on $\ket i$ for $\mcC_w^{s_i}$), one after
      the other, the control wire being the output of $\mcC_w^s$. As the
      three circuits have a size bounded by a polynomial, the total size is
      bounded by a polynomial too.
    \item If $t = \matchdefaultcollapsed s$, as we know that $t$ terminates, and
      the input has a precise shape, the match will only reduce to a
      specific branch; thus, we put $\mcC_w^s$, followed by $\mcC_w^{t_i}$, for
      the corresponding branch. It also binds the wires labelled by
      the variables in $c(\seq{x_i})$ to the corresponding outputs of $t$. This
      yield one swap, between at most $P(T(k))$ wires, thus can be implemented
      by at most $P(T(k))$ CNOT gates.
    \item Constructors: $0$ generates a classical wire, $S(t)$ adds $1$ to the
      classical wire, $\nil$ is an empty circuit, and $h::t, h \times t$ are
      both circuits put one on top of the other. The bound on the size is
      obtained directly, as for \textqcase{}.
    \item If $t = \lbd x s$, this will only reroute the wire
      labelled with $x$ as the first wire of the circuit (i.e., a swap). if $x$
      is not used, this wire is traced out. The swap can be implemented, as for
      \textmatch{}, by $P(T(k))$ CNOT gates.
    \item Suppose $t = \letrec f x s$. We generate recursively $\mcC_s$, which
      contains boxes for each possible call of $f$. We do any circuit
      simplification (e.g.\ if there is a controlled gate on a wire with value
      $\ket 0$, remove the gate), to remove ill-paths. We have $n \leq \size t$
      boxes corresponding to $f$ in $\mcC^s$. However, one can notice that
      the boxes are controlled on the same wire, by restriction of the syntax
      of the \textqcase{}, and are applied on the same wire, also by restriction
      of \textletrec{}. This problem is called \emph{branch
      sequentialization}, and
      we can transform this into only one gate $f$, adding $\bigo 1$ gates
      before and after such gate \cite[Theorem 9]{HPS25}. Then, we generate
      $\mcC_{w'}^s$, replacing the box $f$ in our circuit, but for the
      new corresponding input shape $w'$. Such process will terminate, as $t$
      terminates, thus there is only a finite number of recursive calls.
      In the worst case, we have added $T(k)$ circuits, thus the total size is
      $T(k)P(T(k)) = P'(T(k))$, for $P' = XP \in \mbN[X]$.
    \item Other terms are quite direct: $x, \zket, \oket$ generate a wire with
      this value, $st$ just puts $\mcC^t$ and then $\mcC_{w'}^s$, where $w'$ is
      the shape of $t$ for the specific input $w$; due to the type restriction,
      we can assume there is no $\textshape{}$ construct, else it would just be
      on natural numbers, which gives the same number back; superpositions are
      only on values, thus will only create ancillary wires. \qedhere
  \end{itemize}
\end{proof}

\circuitexistence*

\begin{proof}
  The circuit can be generated similarly to the proof of
  Theorem~\ref{thm:circuit-bound}, as \boundedrec{} only plays a role for the
  size of the circuit.
\end{proof}

\fbqpsound*

\begin{proof}
  We need to prove that $(C_n^t)_{n \in \mbN}$ is a uniform polynomially-sized
  family of circuits. As $t \in \hyrqlpoly$, there is a polynomial $Q$ such
  that $t$ terminates in time $Q$. Therefore, by
  Theorem~\ref{thm:circuit-bound}, the circuit is of size $\bigo{P(Q(n))}$,
  thus of polynomial size in $n$. Note that we have indexed the circuits by $n$
  rather than $w$, but $\size w = n$, as discussed before. The fact that this
  family is uniform is obtained by building the circuit as done in the proof of
  Theorem~\ref{thm:circuit-bound}: one can build a Turing machine, taking as
  input the size $n$ (thus the shape), and suppose that it contains intially
  the encoding of $t$, thus without the input, onto a tape of the QTM. As the
  circuit is build inductively on the syntax, and it differs only by the input
  shape, this creation is uniform, and is done in polynomial time, as the
  generated circuit is of polynomial size.
\end{proof}

\fbqpcomplete*

\newcommand{\tohyrql}[1]{\computes{#1}}
\newcommand{\pket}{\ket \phi}
\newcommand{\kbk}[3]{|#1\rangle\!\langle#2|#3\rangle}
\begin{proof}
  We prove this result by using Yamakami's algebra~\cite{Yam20}. This paper
  defines a function algebra, and then proves that it completely characterize
  \fbqp{}. Formally, it defines the function class $\square_1^{\mathrm{QP}}$,
  which is the smallest class of functions including the following gates, for
  $\theta \in [0, 2\pi) \cap \tilde{\mathbb C}$:
  \[
    \begin{aligned}
      I(\pket) &\triangleq \pket \\
      PHASE_\theta(\pket) &\triangleq \kbk 0 0 \phi + e^{i\theta}
      \kbk 1 1 \phi \\
      ROT_\theta(\pket) &\triangleq \cos \theta \ket \phi + \sin \theta (\kbk
      1 0 \phi - \kbk 0 1 \phi) \\
      NOT(\pket) &\triangleq \kbk 1 0 \phi + \kbk 0 1 \phi \\
      SWAP(\pket) &\triangleq
      \begin{cases*}
        \pket & if $l(\pket) \leq 1$\\
        \sum_{a, b \in \set{0,1}} \kbk{ab}{ba}{\phi} & otherwise
      \end{cases*}
    \end{aligned}
  \]
  and is closed under the following schemes:
  \[
    \begin{aligned}
      Compo[g, h](\pket) &\triangleq g \circ h(\pket) \\
      Branch[g, h] &\triangleq
      \begin{cases*}
        \pket & if $l(\pket) \leq 1$ \\
        \zket \otimes g(\braket{0|\phi}) + \oket \otimes
        h(\braket{1|\phi}) & otherwise
      \end{cases*} \\
      kQRec_t[g,h,p|\mcal F_k](\pket) &\triangleq
      \begin{cases*}
        g(\pket) & if $l(\pket) \leq t$ \\
        h(\sum_{w \in \set{0,1}^k} \ket w \otimes
        f_w(\braket{w|p(\pket)})) & otherwise
      \end{cases*}
    \end{aligned}
  \]
  Here, $\pket$ is a quantum state, i.e., belongs to the Hilbert space $\mathbb
  C^{2^n}$ for a given $n$ written in Dirac notation. Its size $l(\pket)$ is
  $n$. Given $w \in \set{0,1}^n$ with $n \leq l(\pket)$, we may write
  $\pket = \sum_i \alpha_i \ket{w_i z_i}$, where $w_i \in \set{0,1}^n$ and
  $z_i \in \set{0,1}^{l(\pket) -n}$; we then abuse the notation by writing
  $\braket{w|\phi} = \sum_i \alpha_i \braket{w|w_i} \ket{z_i}$. $\tilde{\mathbb C}$
  denotes the complex numbers whose real and imaginary parts can both be approximated
  by a polynomial-time Turing machine. Note that this falls in the restriction that 
  we did in Section~\ref{subsec:orthogonality}.

  This class is proven to be \fbqp{} complete, thus any function of \fbqp{} can
  be written as a function $\square_1^{\mathrm{QP}}$. We therefore associate
  any function $\square_1^{\mathrm{QP}}$ with a given term in \hyrql{},
  $\tohyrql{\square_1^{\mathrm{QP}}}$, and prove inductively that they belong
  in $\hyrqlpoly$.

  Given a word $w = w_1 \dots w_n\in \set{0,1}^n$, we denote $L(w)$ as the list
  $\ket{w_1}:: \dots :: \ket{w_n} :: \nil$. One can remark that for any
  function $f \in \square_1^{\mathrm{QP}}$, and any input $w \in \set{0,1}^*$,
  $f\ket w = v$, where $\tohyrql{f}L(w) \sreduces v$, thus they compute the
  same function. As $\tohyrql{f}$ is approximated by the family of circuits
  $\mathtt C(\tohyrql f)$ by Theorem~\ref{thm:circuit-bound}, then $f$ is also
  approximated by this family, with precision $\frac 2 3$, thus giving us the
  wanted result.
  \[
    \begin{aligned}
      \tohyrql{I} &\triangleq \lbd x x \\
      \tohyrql{Ph_\theta} &\triangleq \lbd x \qcase x \zket
      {e^{i \theta} \oket} \\
      \tohyrql{Rot_\theta} &\triangleq \lbd x \qcase x {\cos
      \theta \zket + \sin \theta \oket}{-\sin \theta \zket + \cos
      \theta \oket} \\
      \tohyrql{Not} &\triangleq \lbd x \qcase x \oket \zket \\
      \tohyrql{SWAP} &\triangleq \lbd x \match x {
        \nil \to \nil,
        h::t \to \match t {
          \nil \to h::\nil,
          h'::t' \to h'::h::t
        }
      } \\
      \tohyrql{COMP[f, g]} &\triangleq \lbd x
      (\tohyrql{f}(\tohyrql g\,x)) \\
      \tohyrql{Branch[f, g]} &\triangleq \lbd x \match x {
        \nil &\to \nil \\
        h::t &\to \match t {
          \nil &\to h::\nil \\
          h'::t' &\to \qcasesplit h {
            \zket \otimes \tohyrql{f}(h'::t')
          } {
            \oket \otimes \tohyrql{g}(h'::t')
          }
        }
      }
    \end{aligned}
  \]

  All terms belong in $\circuittype \cap \boundedrec$ by construction, and as
  functions have no recursive construct, they terminate in a fixed time,
  independently of the input, and thus they belong in $\hyrqlpoly$.

  For the recursive function, we need to introduce before some constructs.
  $\mathtt{proj}_t$ is of type $\typelist \qbit \linfunc \qbit^{\times t} \times
  \typelist \qbit \times \mathrm{bit}$; it takes a list $x$, then returns the
  first $t$ elements, the remaining part of the list, and a bit indicating if
  the length of the list is strictly greater than $t$. If there is less
  elements, we put $\zket$ by default for the first $t$ elements of the tuple.
  We then create two functions $\mathtt{check}_t$ which returns the list
  reconstructed and the bit, and $\tilde{\mathtt{proj}_t}$ which discards the
  bit. We also define $\textqcase_k (x_1, \dots, x_k) \{\ket s \to t_s\}$ as a
  qcase taking $k$ qubits as inputs and producing all possible $2^k$ branches,
  where $s \in \{0,1\}^k$ is the path taken by each $x_i$; $\seq{\ket s}$ is
  the notation which transform the $k$ qubits as a list of $k$ elements. This
  constructs allow then to define the last function of the algebra, with the
  following set $\mcal F_k = \{f_s : s \in \{0,1\}^k\}$, where each $f_s \in
  \{kQRec_t[g,h,p | \mcal F_k], I\}$, and $\tohyrql{f_s}$ is respectively $f$
  or $\lbd x x$.
  \[
    \begin{aligned}
      \mathtt{proj}_t &\triangleq q \lbd x \match x {
        \nil &\to (\zket, \dots, \zket, \nil, 0) \\
        h_1::r_1 &\to \textmatch\,r_1\dots \match{r_t} {
          \nil &\to (h_1, \dots, h_t, \nil, 0) \\
          h_{t+1}::r_{t+1} &\to (h_1, \dots, h_t, h_{t+1}::r_{t+1}, 1)
        }
      } \\
      \tilde{\mathtt{proj}_t} &\triangleq \lbd x \match {\mathtt{proj}_t} {
        (h_1, \dots, h_d, r, b) \to (h_1, \dots, h_d, r)
      } \\
      \mathtt{check}_t &\triangleq \lbd x \match {\mathtt{proj}_t} {
        (h_1, \dots, h_d, r, b) \to (h_1 :: \dots :: h_t::r, b)
      } \\
    \end{aligned}
  \]
  \[
    \begin{aligned}
      \tohyrql{kQrec_t[g,h,p | \mcal F_k]} &\triangleq \letrec f x
      \textmatch\, {
        \mathtt{check}_t\ x
      } \{
        (y, b) \to \textmatch\, b \{
          0 \to f(y), \\
          1 &\to \match{\tilde{\mathtt{proj}_k}(h\,y)} {
            (h_1, \dots, h_k, r) \to g(\textqcase_k (h_1, \dots, h_k)) \{
              \ket s \to \seq{\ket s} :: \tohyrql{f_s}\,r
            \}
          }
        \}
      \}
    \end{aligned}
  \]

  Again, such encoding belongs to $\circuittype \cap \boundedrec$.
  Furtheremore, terminating in polynomial time is guaranteed, as:
  \begin{itemize}
    \item By induction, each $g,h,p$ computes in polynomial time;
    \item Either $f_s$ is the identity, thus stops here, or is a recursive
      call on an input of size that decreases by $k$;
    \item Therefore, the compute time is roughly, for an input of size
      $n$, $\frac n k \max(P_g(n),P_h(p),P_p(n))$, where $P_g, P_h,P_p$ is the
      compute time of respectively $g,h,p$, which is polynomial, by
      induction hypothesis. The obtain time is thus polynomial in $n$. \qedhere
  \end{itemize}
\end{proof}

\end{document}